\newtheorem{theorem}{Theorem}[section]
\newtheorem{corollary}[theorem]{Corollary}
\newtheorem{lemma}[theorem]{Lemma}
\newtheorem{assumption}[theorem]{Assumption}
\newtheorem{proposition}[theorem]{Proposition}
\newtheorem{definition}[theorem]{Definition}
\newtheorem{example}[theorem]{Example}
\newtheorem{remark}[theorem]{Remark}
\newtheorem{tab}{Table}
\newcommand{\abs}[1]{\left|{#1}\right|}
\def\EE{{\bf E}}
\newcommand{\eps}{\varepsilon}
\title{Capacity of Multilevel NAND Flash Memory Channels}
\author{
\hspace{-5mm} \begin{tabular}{ccc}
Yonglong Li& Aleksandar Kav\v{c}i\'{c}& Guangyue Han\\
University of Hong Kong&University of Hawaii&University of Hong Kong\\
{\em email:} yonglong@hku.hk& {\em email:} kavcic@hawaii.edu&{\em email:} ghan@hku.hk\\
\end{tabular}}
\date{{\normalsize \today}}
\begin{document}\maketitle\thispagestyle{empty}

\begin{abstract}
In this paper, we initiate a first information-theoretic study on multilevel NAND flash memory channels~\cite{kavcic2014} with intercell interference. More specifically, for a multilevel NAND flash memory channel under mild assumptions, we first prove that such a channel is indecomposable and it features asymptotic equipartition property; we then further prove that stationary processes achieve its information capacity, and consequently, as the order tends to infinity, its Markov capacity converges to its information capacity; eventually, we establish that its operational capacity is equal to its information capacity. Our results suggest that it is highly plausible to apply the ideas and techniques in the computation of the capacity of finite-state channels, which are relatively better explored, to that of the capacity of multilevel NAND flash memory channels.
\end{abstract}

{\em Index Terms: mutual information, capacity, flash memory channels, finite-state channels.}

\section{Introduction}

As our world is entering a mobile digital era at a lightening pace, NAND flash memories have been seen in a great variety of real-life applications ranging from portable consumer electronics to personal or even enterprise computing. The insatiable demand of greater affordability from consumers  has been driving the industry and academia to relentlessly make use of aggressive technology scaling and multi-level per cell techniques in the bit-cost reduction process. On the other hand though, as their costs continually reduce, flash memories have been more vulnerable to various device or circuit level noises, such as energy consumption, inter-cell interference and program/erase cycling effects, due to the rapidly growing bit density, and maintaining the overall system reliability and performance has become a major concern.

To combat this increasingly imminent issue, various fault-tolerance techniques such as error correction codes have been employed. Representative work in this direction include BCH codes~\cite{sun2007} and LDPC codes~\cite{courtade2011,dong2011}, rank modulation~\cite{andrew2009} and constrained codes~\cite{siegel2014} and so on. The use of such techniques certainly boosts the overall system performance, however, at the expense of reduced memory storage efficiency. As the level of sophistication of such performance boosting techniques drastically escalates, it is of central importance to know their theoretical limit in terms of achieving the maximal cell storage efficiency.

Recently, there have been a number of attempts in response to such a request; see, e.g., ~\cite{dong2011,dong2012,Cai2013, LiJiang2014, Taranalli2015} and references therein. Particularly, in~\cite{dong2011}, the authors have modelled NAND flash memories as communication channels that can capture the major data distortion noise sources including program/erase cycling effects and inter-cell interference in information-theoretic terms. In this direction, slight yet important modifications to enhance the mathematical tractability of the channel model in~\cite{dong2011} have been made in~\cite{kavcic2014}, where multiple communication channels with input inter-symbol interference that are expected to be more amenable to theoretical analysis were explicitly spelled out. On the other hand, with~\cite {kavcic2014} primarily focusing on the optimal detector design, an information-theoretic analysis of the communication channel capacity, which translates to the theoretical limit of memory cell storage efficiency, is still lacking.

Our primary concern in this paper is essentially the one dimensional causal channel model proposed in~\cite{kavcic2014}, which, mathematically, can be characterized by the following system of equations (for justification of such a mathematical formulation of the channel, see~\cite{kavcic2014}):
\begin{align}
Y_0 & = X_0 + W_0 + U_0, \notag\\
\label{sm-2} Y_{n} & = X_{n}+A_{n}X_{n-1}+B_{n}(Y_{n-1}-E_{n-1})+W_{n}+U_{n}, \quad n \geq 1,
\end{align}
where
\begin{itemize}
\item[(i)] $\{X_i\}$ is the channel input process, taking values from a finite alphabet $\mathcal{X}\stackrel{\triangle}{=}\{v_{0}, v_{1}, \cdots, v_{M-1}\}$, and $\{Y_i\}$ is the channel output process, taking values from $\mathbb{R}$.
\item[(ii)] $\{A_{i}\}$, $\{B_{i}\}$, $\{E_i\}$ and $\{W_{i}\}$ are i.i.d. Gaussian random processes with mean $0$ and variance $\sigma_A^2$, $0<\sigma_B^2<1$, $\sigma_E^2$ and $1$, respectively;
\item[(iii)] $\{U_i\}$ is an i.i.d. random process with the uniform distribution over $(\alpha_1,\alpha_2)$, $\alpha_1, \alpha_2 > 0$;
\item[(iv)] $\{A_{i}\}$, $\{B_{i}\}$, $\{E_i\}$, $\{W_{i}\}$, $\{U_i\}$ and $\{X_{i}\}$ are mutually independent.
\end{itemize}
The major differences between our model and that in~\cite{kavcic2014} are as follows:
\begin{itemize}
\item As in most practical scenarios, our channel model has a ``starting'' time $0$, when the channel is not affected by inter-cell interference;
\item An extra assumption in our channel model is that $\sigma_B^2$ is upper bounded by $1$. As established in Lemma~\ref{alphabound}, such an extra assumption will guarantee the boundedness of the channel output power, and thereby the ``stability'' of the channel.
\end{itemize}
Our ultimate goal is to compute the {\em operational capacity} $C$ of the channel (\ref{sm-2}), which, roughly speaking, is defined as the highest rate at which information can be sent with arbitrarily low probability of error. The presence of input and output memory in the channel, however, makes the problem extremely difficult: computing the capacity of channels with memory is a long open problem in information theory. One of the most effective strategies to attack such a difficult problem is the so-called {\em Markov approximation} scheme, which has been extensively exploited in the past decades for computing the capacity of families of finite-state channels (see~\cite{arnoldsimulationbounds, vontobel,randomapproachhan} and references therein). Roughly speaking, the Markov approximation scheme says that, instead of maximizing the mutual information over general input processes, one can do so over Markovian input processes of order $m$ to obtain the so-called $m$-th order {\em Markov capacity}. The effectiveness of this approach has been justified in~\cite{chensiegel}, where, for a class of finite-state channels, the authors showed that as the order $m$ tends to infinity, the sequence of the Markov capacity will converge to the real capacity of the memory channel. It is plausible that the Markov approximation scheme can be applied to other memory channels as well; as a matter of fact, the main result of the present paper is to confirm this for our channel model.

Recently, much progress has been made in computing the Markov capacity of finite-state channels; in particular, a generalized Blahut-Arimoto algorithm and a randomized algorithm have been respectively proposed in~\cite{vontobel} and~\cite{randomapproachhan}, which, under certain conditions, promise convergence to the the Markov capacity. Though there are numerous issues that need to be addressed to justify the applications of the above-mentioned algorithms to our model, the first and foremost question is whether the Markov capacity converges to the real capacity at all. The affirmative answer given in this work, together with other similarities between the channel models, suggests such a framework ``transplantation'' is indeed plausible.

The recursive nature of our channel permits a reformulation into a channel with ``state'': Given the channel input and output $(x_i, y_i)$ at time $i$, the behavior of our channel in the future does not depend on the channel inputs and outputs before time $i$; put if differently, $(x_i, y_i)$ can be regarded as the state for the channel at time $i+1$. Despite the similarities, such a reformulated channel posed new challenges compared with the well-known finite-state channels: The most serious one is that our channel output alphabet is infinite, and as a consequence, the ``indecomposability'' property of our channel, albeit very similar to that of a finite-state channel, is not uniform over all possible channel states; ripple effects of this issue include a number of technical issues, such as the asymptotic equipartition property and even the existence of some fundamental quantities like mutual information rate and capacity.

Which is the reason that in our treatment, some non-trivial technical issues have to be circumvented: We will prove that our channel is ``indecomposable'' in the sense that the behavior of our channel in the distant future is little affected by the channel state in the earlier stages, and a much finer analysis is needed to deal with the above-mentioned non-uniformity issue. The second issue is that the lack of the stationarity of the output process makes it difficult to establish the asymptotic equipartition property for the output process. For this, we observe that the asymptotic mean stationarity~\cite{gray} of the output process makes it possible to apply tools from ergodic theory to establish the existence of the mutual information rate of our channel and further the asymptotic equipartition property of the output process. Another issue is to mix the ``blocked'' processes to obtain a stationary process achieving the information capacity, for which we find an adaptation of Feinstein's method~\cite{Feinstein} as a solution.

The remainder of this paper is organized as follows. In Section~\ref{indecomposability}, we show that the channel~(\ref{sm-2}) is indecomposable, which, among many other applications, ensures the existence of the information capacity of the channel. In Section~\ref{existence}, we show that, when the input $\{X_n\}$ process is stationary and ergodic, $\{Y_n\}$ and $\{X_n,Y_n\}$ possess the asymptotic equipartition property. In Section~\ref{stationarycapacity}, the information capacity is shown to be equal to the stationary capacity and Markov capacity approaches to the information capacity as the Markov order goes to infinity. Eventually, the operational capacity is shown to be equal to the information capacity.

\section{Indecomposability} \label{indecomposability}

In this section, we will prove that our channel (\ref{sm-2}) is ``indecomposable'' in the sense that, in the distant future, it is little affected by the channel state in the earlier stages. Taking the forms of several inequalities in Lemma~\ref{non-uniform-indecomposability}, the indecompoposability property, among many other applications, will ensure that the information capacity of our channel is well-defined.

To avoid the notational cumbersomeness in the computations, we write
$$
\hat{W}_{i}=X_{i}+A_{i}X_{i-1}+W_{i}-B_{i}E_{i-1}+U_i.
$$
It then follows from a recursive application of (\ref{sm-2}) that
\begin{equation}\label{ycv}
Y_{n}=\hat{W}_{n}+B_{n}Y_{n-1}=\sum_{i=k+2}^{n}\hat{W}_{i}\prod_{j=i+1}^{n}B_{j}+Y_{k+1}\prod_{i=k+2}^{n}B_{i}.
\end{equation}
The following lemma gives an upper bound on the moments of the output of the channel~(\ref{sm-2}).
\begin{lemma}\label{alphabound}
There exist $M_2 > 0$ and $\beta>2$ such that for any $n$ and $x_0^n$,
$$
\EE[|Y_n|^{\beta}|X_0^n=x_0^n]\le M_2,
$$
and consequently,
$$
\EE[|Y_n|^{\beta}]\le M_2.
$$
\end{lemma}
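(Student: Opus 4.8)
The plan is to exploit the recursion $Y_n = \hat{W}_n + B_n Y_{n-1}$ from (\ref{ycv}) together with the independence of $B_n$ from $(\hat{W}_n, Y_{n-1})$. First I would fix $\beta$ slightly above $2$ — the precise choice being dictated by the Gaussian moment of $\hat{W}_n$ — and introduce the conditional moment $m_n \stackrel{\triangle}{=} \EE[|Y_n|^{\beta}\mid X_0^n = x_0^n]$. Since $Y_{n-1}$ depends on $X_0^{n-1}$ only through the conditioning on $x_0^{n-1}$ and is independent of $B_n$, conditioning on $X_0^n = x_0^n$ is the same as conditioning on $X_0^{n-1} = x_0^{n-1}$ as far as $Y_{n-1}$ is concerned; so applying Minkowski's inequality in $L^\beta$ to $Y_n = B_n Y_{n-1} + \hat{W}_n$ gives
\begin{equation}\notag
m_n^{1/\beta} \le \bigl(\EE[|B_n|^{\beta}]\bigr)^{1/\beta}\, m_{n-1}^{1/\beta} + \bigl(\EE[|\hat{W}_n|^{\beta}\mid X_0^n = x_0^n]\bigr)^{1/\beta}.
\end{equation}
The first factor $\EE[|B_n|^{\beta}] = \EE[|B_0|^{\beta}]$ is the absolute $\beta$-moment of a centered Gaussian with variance $\sigma_B^2 < 1$; it equals $c_\beta\,\sigma_B^{\beta}$ where $c_\beta = \EE[|Z|^\beta]$ for a standard normal $Z$, and $c_2 = 1$. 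The key point is that $c_\beta \sigma_B^{\beta} \to \sigma_B^2 < 1$ as $\beta \downarrow 2$, so by continuity there exists $\beta > 2$ with $\rho \stackrel{\triangle}{=} c_\beta \sigma_B^{\beta} < 1$. Fix such a $\beta$.

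Next I would bound the inhomogeneous term uniformly. We have $\hat{W}_i = X_i + A_i X_{i-1} + W_i - B_i E_{i-1} + U_i$; conditioned on $X_0^n = x_0^n$, the quantities $x_i$ and $x_{i-1}$ are deterministic and bounded by $\max_j |v_j|$, while $W_i$, $U_i$, and the products $A_i x_{i-1}$, $B_i E_{i-1}$ have finite $\beta$-th moments that do not depend on $n$ (the Gaussians $A_i, B_i, E_{i-1}, W_i$ are i.i.d. and $U_i$ is bounded). Hence by the triangle inequality in $L^\beta$ there is a constant $K = K(\beta) < \infty$, independent of $n$ and of $x_0^n$, with $\bigl(\EE[|\hat{W}_n|^{\beta}\mid X_0^n = x_0^n]\bigr)^{1/\beta} \le K$. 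Substituting into the recursion yields $m_n^{1/\beta} \le \rho^{1/\beta} m_{n-1}^{1/\beta} + K$. Iterating, and using the base case $Y_0 = X_0 + W_0 + U_0$ (so $m_0^{1/\beta} \le K$ as well, enlarging $K$ if necessary), we get $m_n^{1/\beta} \le K \sum_{k=0}^{n} \rho^{k/\beta} \le K/(1 - \rho^{1/\beta})$ for all $n$. Setting $M_2 \stackrel{\triangle}{=} \bigl(K/(1-\rho^{1/\beta})\bigr)^{\beta}$ gives the conditional bound. The unconditional bound follows immediately by taking expectation over $X_0^n$: $\EE[|Y_n|^\beta] = \sum_{x_0^n} \Pr(X_0^n = x_0^n)\,\EE[|Y_n|^\beta \mid X_0^n = x_0^n] \le M_2$.

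The main obstacle is the choice of $\beta$: one needs $\beta$ strictly bigger than $2$ (this strict inequality is what later arguments — uniform integrability for the AEP, say — will need), yet large enough that all the moments above are finite while still small enough that the contraction factor $\rho = c_\beta \sigma_B^\beta$ stays below $1$. This is exactly where the hypothesis $\sigma_B^2 < 1$ is essential and sharp: if $\sigma_B^2 = 1$ then $\rho = c_\beta > 1$ for every $\beta > 2$ and the recursion no longer contracts, which is the "stability" remark made in the introduction. Everything else is a routine Minkowski-inequality iteration; a minor care point is just to make the $L^\beta$ bound on $\hat{W}_n$ genuinely uniform in the conditioning sequence $x_0^n$, which works because $\mathcal{X}$ is finite.
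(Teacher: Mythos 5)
Your proposal is correct and follows essentially the same route as the paper: the Minkowski-inequality recursion $m_n^{1/\beta}\le (\EE[|B_n|^{\beta}])^{1/\beta}m_{n-1}^{1/\beta}+K$, the choice of $\beta$ slightly above $2$ so that $\EE[|B_n|^{\beta}]<1$ (the paper picks $\beta\in(2,3)$ by the same continuity consideration and bounds the $\hat{W}_n$ term via fourth moments to make $K$ explicit), a uniform-in-$x_0^n$ bound on the inhomogeneous term, and geometric-series iteration plus the base case, with the unconditional bound obtained by averaging over $x_0^n$. No gaps; your remark about why $Y_{n-1}$'s conditional law only depends on $x_0^{n-1}$ is exactly the step the paper uses implicitly in passing to its display (\ref{poweriteration}).
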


\begin{proof}
In this proof, we will simply replace ``$X_0^n=x_0^n$'' in the conditional part of an expectation by $x_0^n$.

It follows from Minkowski's inequality that for any $p \geq 1$
\begin{align*}
(\EE[|Y_n|^p|x_0^n])^{\frac{1}{p}} &\le (\EE[|\hat{W}_n|^p|x_0^n])^{\frac{1}{p}}+(\EE[|B_n|^p|x_0^n])^{\frac{1}{p}}(\EE[|Y_{n-1}|^p|x_0^n])^{\frac{1}{p}}\\
&\le (\EE[|\hat{W}_n|^p|x_0^n])^{\frac{1}{p}}+(\EE[|B_n|^p])^{\frac{1}{p}}(\EE[|Y_{n-1}|^p|x_0^n])^{\frac{1}{p}},
\end{align*}
where we have used the independence between $B_n$ and $Y_{n-1}$, and the independence between $B_n$ and $X_0^n$. Since $\sigma_B^2<1$, there exists $\beta\in (2,3)$ such that $\EE[|B_n|^{\beta}]< 1$. Let
\begin{equation} \label{M_0}
\rho=\EE[|B_n|^{\beta}]^{\frac{1}{\beta}}, \quad M_0=\max\{|\alpha_1|,|\alpha_{2}|,|v_{i}|, \quad i=0,\cdots,m-1\}.
\end{equation}
Then, from Minkowski's inequality and Assumptions (i)-(iv), it follows that
\begin{eqnarray*}
 (\EE[|\hat{W}_n|^{\beta}|x_0^n])^{\frac{1}{\beta}}&\le &(|{x}_n|^{\beta}])^{\frac{1}{\beta}}+(\EE[|{A}_n x_{n-1}|^{\beta}])^{\frac{1}{\beta}}+(\EE[|{W}_n|^{\beta}])^{\frac{1}{\beta}}+(\EE[|B_nE_{n}|^{\beta}])^{\frac{1}{\beta}}+(\EE[|{U}_n|^{\beta}])^{\frac{1}{\beta}}\\
 &\le &2M_0+M_0 (\EE[|{A}_n|^{\beta}])^{\frac{1}{\beta}}+(\EE[|{W}_n|^{\beta}])^{\frac{1}{\beta}}+(\EE[|B_n|^{\beta}])^{\frac{1}{\beta}}(\EE[|E_{n}|^{\beta}])^{\frac{1}{\beta}}\\
 &\stackrel{(a)}{\le} &2M_0+M_0 (\EE[|{A}_n|^{4}])^{\frac{1}{4}}+(\EE[|{W}_n|^{4}])^{\frac{1}{4}}+(\EE[|B_n|^{4}])^{\frac{1}{4}}(\EE[|E_{n}|^{4}])^{\frac{1}{4}},
\end{eqnarray*}
where $(a)$ follows from the inequality $(\EE[|X|^{p}])^{\frac{1}{p}}\le (\EE[|X|^{q}])^{\frac{1}{q}}$ for $0<p<q$. Letting $M_1=2M_0+M_0 (3\sigma_A^{4})^{\frac{1}{4}}+3^{\frac{1}{4}}+(3\sigma_B^4)^{\frac{1}{4}}(3\sigma_E^4)^{\frac{1}{4}}$, we then have
$$
\EE[|\hat{W}_n|^{\beta}|x_0^n]^{\frac{1}{\beta}}\le M_1,
$$
where we have used the fact that the $4$-th moment of a Gaussian random variable with mean $0$ and variance $\sigma^2$ is $3\sigma^4$.

Therefore,
\begin{align}\label{poweriteration}
(\EE[|Y_n|^{\beta]}|x_0^n])^{\frac{1}{\beta}}&\le M_1+\rho(\EE[|Y_{n-1}|^{\beta}|x_0^{n-1}])^{\frac{1}{\beta}},
\end{align}
which implies that
{\begin{eqnarray}
(\EE[|Y_n|^{\beta}|x_0^n])^{\frac{1}{\beta}}&\le &M_1\sum_{i=0}^{n-1}\rho^{i}+\rho^{n}(\EE[|Y_{0}|^{\beta}|x_0])^{\frac{1}{\beta}}\notag\\
&\le& M_1/(1-\rho)+\rho^{n}(\EE[|Y_{0}|^{\beta}|x_0])^{\frac{1}{\beta}}\notag.
\end{eqnarray}}
\noindent It then follows from
$$
\EE[|Y_{0}|^{\beta}|x_0]^{\frac{1}{\beta}}\le (|x_{0}|^{\beta})^{\frac{1}{\beta}}+\EE[|W_{0}|^{\beta}]^{\frac{1}{\beta}}+\EE[|U_{0}|^{\beta}]^{\frac{1}{\beta}}\le 2M_0+\EE[|W_{0}|^{4}]^{\frac{1}{4}},
$$
that there exists $M_2 > 0$ such that for all $x_0^n$,
$$
\EE[|Y_n|^{\beta}|x_0^n]\le M_2,
$$
which immediately implies that
$$
\EE[|Y_n|^{\beta}]\le M_2.
$$
\end{proof}

Lemma~\ref{alphabound} immediately implies the following corollary.
\begin{corollary}\label{uniformintegrable}
$\{Y_n^2\}$ is uniformly integrable and there exists constant $M_3 > 0$ such that
\begin{equation} \label{conditional-power}
\EE[Y_{n}^{2}|X_0^n=x_0^n]\le M_3,
\end{equation}
and consequently,
\begin{equation} \label{power}
\EE[Y_{n}^{2}]\le M_3.
\end{equation}
\end{corollary}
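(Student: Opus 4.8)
The plan is to derive both assertions directly from Lemma~\ref{alphabound} using only elementary moment inequalities, since that lemma already furnishes a uniform bound on a moment of order $\beta>2$ (in fact $\beta\in(2,3)$).

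First I would obtain the conditional second-moment bound. Because $\beta>2$, Lyapunov's inequality (equivalently, Jensen's inequality applied to the convex map $t\mapsto t^{\beta/2}$) gives, for every $n$ and every $x_0^n$,
\[
\EE[Y_n^2\mid X_0^n=x_0^n]\le\bigl(\EE[|Y_n|^{\beta}\mid X_0^n=x_0^n]\bigr)^{2/\beta}\le M_2^{2/\beta}.
\]
Setting $M_3=M_2^{2/\beta}$ establishes (\ref{conditional-power}); applying the same inequality to the unconditional moment bound $\EE[|Y_n|^{\beta}]\le M_2$ from Lemma~\ref{alphabound} (or simply averaging the conditional bound over $X_0^n$) yields (\ref{power}).

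For uniform integrability of $\{Y_n^2\}$ I would use the standard truncation argument: for any $K>0$, on the event $\{Y_n^2>K\}$ one has $1\le (Y_n^2/K)^{\beta/2-1}$ because $\beta/2-1>0$, hence
\[
\EE\bigl[Y_n^2\,\mathbf{1}_{\{Y_n^2>K\}}\bigr]\le K^{-(\beta/2-1)}\,\EE[|Y_n|^{\beta}]\le M_2\,K^{1-\beta/2}.
\]
Since the right-hand side is independent of $n$ and tends to $0$ as $K\to\infty$, the family $\{Y_n^2\}$ is uniformly integrable.

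There is essentially no obstacle in this argument; the only point requiring care is that $\beta$ is \emph{strictly} greater than $2$, so that the exponent $\beta/2-1$ is a fixed positive constant — and this is exactly what the proof of Lemma~\ref{alphabound} guarantees.
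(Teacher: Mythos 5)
Your proposal is correct and follows essentially the same route as the paper: the bound (\ref{conditional-power}) comes from the same Lyapunov/Jensen monotonicity of $L^p$ norms applied to the conditional $\beta$-th moment bound of Lemma~\ref{alphabound}, and (\ref{power}) follows by averaging. The only cosmetic difference is that you prove the uniform integrability of $\{Y_n^2\}$ by the standard truncation estimate $\EE[Y_n^2\mathbf{1}_{\{Y_n^2>K\}}]\le M_2K^{1-\beta/2}$, whereas the paper simply cites this criterion (Theorem 1.8 in~\cite{liptser}); your self-contained argument is fine and correctly isolates the key point that $\beta>2$ is strict.
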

\begin{proof}
The desired uniform integrability immediately follows from Theorem 1.8 in~\cite{liptser} and Lemma~\ref{alphabound}, and the inequality (\ref{conditional-power}) follows from the well-known fact that for any $\beta > 2$,
$$
\EE[Y_{n}^{2}|X_0^n=x_0^n]^{\frac{1}{2}} \le \EE[Y_{n}^{\beta}|X_0^n=x_0^n]^{\frac{1}{\beta}},
$$
which immediately implies (\ref{power}).
\end{proof}

One consequence of Corollary~\ref{uniformintegrable} is the following bounds on the entropy of the channel output.
\begin{corollary}\label{entropybd}
For all $0\le m\le n$,
$$
0 < H(Y_m^{n}) \leq \frac{(n-m+1)\log2\pi e M_3}{2},
$$
where $M_3$ is as in Corollary~\ref{uniformintegrable}.
\end{corollary}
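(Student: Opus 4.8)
The plan is to bound the differential entropy $H(Y_m^n)$ from above by the entropy of a Gaussian vector with the same covariance, and from below by a strict positivity argument using the additive noise structure. For the upper bound, I would first recall the maximum-entropy principle: among all random vectors in $\mathbb{R}^{n-m+1}$ with a given covariance matrix $\Sigma$, the Gaussian maximizes differential entropy, giving $H(Y_m^n)\le \tfrac12\log\big((2\pi e)^{n-m+1}\det\Sigma\big)$. Then I would control $\det\Sigma$ via Hadamard's inequality, $\det\Sigma\le\prod_{k=m}^{n}\Sigma_{kk}=\prod_{k=m}^{n}\mathrm{Var}(Y_k)\le\prod_{k=m}^{n}\EE[Y_k^2]$, and invoke the bound $\EE[Y_k^2]\le M_3$ from Corollary~\ref{uniformintegrable}. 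Combining these yields $H(Y_m^n)\le \tfrac{n-m+1}{2}\log(2\pi e M_3)$, which is exactly the claimed inequality. (One should note here that $H$ denotes differential entropy and that this quantity is finite because the $Y_k$ have densities and bounded second moments, so the integral defining $H$ is well-defined and not $-\infty$ in a pathological way; the Gaussian comparison bound takes care of the $+\infty$ side.)

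For the strict lower bound $H(Y_m^n)>0$, I would exploit that $Y_m^n$ has a genuine density on $\mathbb{R}^{n-m+1}$: conditioned on all the other random variables, the contribution of the independent Gaussian noise $W_m,\dots,W_n$ (each of variance $1$) makes $(Y_m,\dots,Y_n)$ an invertible affine image of $(W_m,\dots,W_n)$ plus independent terms, so the conditional density is Gaussian with covariance the identity (after accounting for the triangular dependence through $B_k$, the Jacobian is still $1$ because each $Y_k$ depends on $W_k$ with coefficient $1$). Hence the conditional differential entropy of $Y_m^n$ given everything else equals $\tfrac{n-m+1}{2}\log(2\pi e)>0$, and since conditioning reduces differential entropy, $H(Y_m^n)\ge \tfrac{n-m+1}{2}\log(2\pi e)>0$. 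Actually this already gives a cleaner lower bound than merely positivity, but positivity is all that is claimed.

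I do not anticipate a serious obstacle here; the one point requiring a little care is making sure the differential entropy $H(Y_m^n)$ is well-defined and finite so that the inequalities are not vacuous — this is where the uniform integrability and the second-moment bound from Corollary~\ref{uniformintegrable} do the real work, ruling out the $+\infty$ case, while the explicit Gaussian-noise floor rules out the $-\infty$ case. The rest is the standard maximum-entropy-plus-Hadamard estimate. I would therefore organize the write-up as: (1) observe $Y_m^n$ has a density and finite second moments; (2) apply the Gaussian maximum-entropy bound and Hadamard's inequality together with $\EE[Y_k^2]\le M_3$ for the upper bound; (3) condition on the non-$W$ randomness and use ``conditioning reduces entropy'' for the strict lower bound.
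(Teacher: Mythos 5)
Your proof is correct and follows essentially the same two-sided strategy as the paper's. For the upper bound, the paper simply uses subadditivity $H(Y_m^n)\le\sum_{i=m}^n H(Y_i)$ together with the univariate Gaussian maximum-entropy bound and $\EE[Y_i^2]\le M_3$; your route through the joint covariance and Hadamard's inequality is an equivalent packaging of the same facts and yields the identical bound. For the lower bound, the paper writes $H(Y_m^n)\ge H(Y_m^n|X_m^n)$, expands by the chain rule, conditions each term further on the per-step auxiliary randomness so that $Y_i$ reduces to $W_i$ plus a constant, and concludes $H(Y_m^n)\ge\tfrac{n-m+1}{2}\log 2\pi e>0$. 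Your conditioning-plus-Jacobian argument is the same idea in one shot. One small imprecision: conditioned on the non-$W$ randomness (and on $Y_{m-1}$ or equivalently on $W_0^{m-1}$ when $m>0$), the conditional law of $Y_m^n$ is Gaussian but its covariance is \emph{not} the identity --- the recursion through $B_k$ produces nonzero off-diagonal terms --- and what actually matters, as your parenthetical already captures, is that the map $W_m^n\mapsto Y_m^n$ is lower-triangular with unit diagonal, hence unit Jacobian, so the conditional differential entropy equals $H(W_m^n)=\tfrac{n-m+1}{2}\log 2\pi e$. With that phrasing tightened, the argument matches the paper's.
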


\begin{proof}
For the upper bound, we have
\begin{equation}\label{entropybd1}
 H(Y_m^{n})\le \sum_{i=m}^{n}H(Y_i)\le \frac{(n-m+1)\log2\pi e M_3}{2},
\end{equation}
where~(\ref{entropybd1}) follows from the fact that Gaussian distribution maximizes entropy for a given variance.

For the lower bound, using the chain rule for entropy and the fact that conditioning reduces entropy, we have
\begin{align}
H(Y_m^n) &\geq H(Y_m^{n}|X_m^n) \notag\\
&\ge\sum_{k=m}^{n}H(Y_i|X_{m}^n,Y_{i-1})\notag\\
&\ge \sum_{k=m}^{n}H(Y_i|X_{i-1}^{i},Y_{i-1},E_i,B_i,U_i)\notag\\
&\stackrel{(a)}{=}\sum_{k=m}^{n}H(W_i|X_{i-1}^{i},Y_{i-1},E_i,B_i,U_i) \notag\\
&\stackrel{(b)}{=}\sum_{i=m}^n H(W_i) \notag\\
&=\frac{(n-m+1)\log2\pi e }{2}>0,\notag
 \end{align}
where we have used (\ref{sm-2}) and Assumption (iv) in deriving $(a)$ and $(b)$.
\end{proof}

Fix $k \geq 0$, and for any $x_{k} \in \mathcal{X}$ and $\tilde{y}_{k} \in \mathbb{R}$, define
\begin{align}
 \tilde{Y}_{k+1} &= X_{k+1}+A_{n} x_{k}+B_{n}(\tilde{y}_{k}-E_{k})+W_{k+1}+U_{k+1},\label{newinitial-1}\\
\tilde{Y}_{n} &= X_{n}+A_{n}X_{n-1}+B_{n}(\tilde{Y}_{n-1}-E_{n-1})+W_{n}+U_{n}, \quad n \geq k+1.\label{newinitial-2}
\end{align}
Roughly speaking, $\{\tilde{Y}_n\}$ ``evolves'' in the same way as $\{Y_n\}$, however with different ``conditions'' at time $k$. And similarly as in (\ref{ycv}), we have
\begin{equation} \label{ycv-tilde}
\tilde{Y}_n=\sum_{i=k+2}^{n}\hat{W}_{i}\prod_{j=i+1}^{n}B_{j}+\tilde{Y}_{k+1}\prod_{i=k+1}^{n}B_{i}.
\end{equation}

Below, we will use $f$ (or $p$) with subscripted random variables to denote the corresponding (conditional) probability density function (or mass function). For instance, $f_{Y_{n}|X_k^n,Y_k}(y_{n}|x_k^n, y_k)$ denotes the conditional density of $Y_n$ given $X_k^n=x_k^n$ and $Y_k=y_k$. We may, however, drop the subscripts when there is no confusion and similar notational convention will be followed  throughout the remainder of the paper.

We are now ready for the following lemma that establishes the ``indecomposability'' of our channel. Roughly speaking, the following lemma states that our channel is indecomposable in the sense that the output of our channel in the ``distant future'' is little affected by the ``initial'' inputs and outputs. Compared with the indecomposability property of finite-state channels~\cite{gallagerbook}, our indecomposability does depend on the initial channel inputs and outputs; as a result, a much finer analysis is needed to deal with this non-uniformity issue when one applies Lemma~\ref{non-uniform-indecomposability}.

\begin{lemma}  \label{non-uniform-indecomposability}
a) For any $k \leq n$, $x_{k}^n$, $y_{k}$ and $\tilde{y}_{k}$, we have
\begin{equation}
\int_{-\infty}^{\infty}\left| f_{Y_{n}|X_k^n,Y_k}(y_{n}|x_k^n, y_k)-f_{\tilde{Y}_{n}|X_k^n,\tilde{Y}_k}(y_{n}|x_k^n, \tilde{y}_k)\right| dy_n \le \sigma_{B}^{2(n-k)}(y_{k}^2+\tilde{y}_{k}^2).\notag
\end{equation}

b)For any $k \leq n$, $x_{k}^n$, $y_{k}$ and $\tilde{y}_{k}$, we have
\begin{equation}
\int_{-\infty}^{\infty}y_n^2\left| f_{Y_{n}|X_k^n,Y_k}(y_{n}|x_k^n, y_k)-f_{\tilde{Y}_{n}|X_k^n,\tilde{Y}_k}(y_{n}|x_k^n, \tilde{y}_k)\right| dy_n \le 3\sigma_{B}^{2(n-k)}(y_{k}^2+\tilde{y}_{k}^2).\notag
\end{equation}

c) For any $k, n$, $x_n$ and $y_n$ and $\hat{x}_0^n$, we have
\begin{equation}
\int_{-\infty}^{\infty}\left|f_{Y_n|X_0^n}(\hat{y}|\hat{x}_0^n)-f_{Y_{n+k+1}|X_{n+1}^{n+k+1},X_n,Y_n}(\hat{y}|\hat{x}_0^n,x_n,y_n)\right| d\hat{y} \le  \sigma_{B}^{2n}(\sigma_A^2x_n^{2}+2\sigma_{B}^{2}(y_n^{2}+\sigma_E^2)). \notag
\end{equation}

d) For any $k \leq n$ and any $x_0^n$ with $p_{X_{0}^{n}}(x_{0}^{n})>0$, we have
\begin{equation}
\int_{-\infty}^{\infty} \left| f_{Y_{n}|X_0^n}(y_n|x_0^n)-f_{{Y}_{n}|X_{n-k}^n}(y_n|x_{n-k}^n)\right| dy_n \le \sigma_{B}^{2k}(2\sigma_A^2x_{n-k}^{2}+2\sigma_{B}^{2}(2M_3+2\sigma_E^2)), \notag
\end{equation}
where $M_3$ is as in Corollary~\ref{uniformintegrable}.
\end{lemma}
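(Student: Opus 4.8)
The plan is to prove parts (a)--(d) in that order, since (a) and (b) are the core pathwise-coupling estimates and (c), (d) follow by conditioning on the "first step" and invoking (a)--(b). For part (a), I would exploit the representation (\ref{ycv}) versus (\ref{ycv-tilde}): conditioned on $X_k^n=x_k^n$ and on the realizations of $\{B_i\}_{i=k+1}^n$, both $Y_n$ and $\tilde Y_n$ are Gaussian with the \emph{same} conditional variance (the randomness from $A_i,W_i,E_i,U_i$ for $i\ge k+1$ enters identically), and their conditional means differ only through the "initial" terms $y_k\prod_{i=k+1}^n B_i$ versus $\tilde y_k\prod_{i=k+1}^n B_i$. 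The total-variation distance between two Gaussians with equal variance $\sigma^2$ and means $\mu_1,\mu_2$ is bounded by $|\mu_1-\mu_2|/(\sigma\sqrt{2\pi})\cdot(\text{const})$, but a cleaner route is the elementary bound $\|\mathcal N(\mu_1,\sigma^2)-\mathcal N(\mu_2,\sigma^2)\|_{TV}\le |\mu_1-\mu_2|^2/(2\sigma^2)$ wait -- more robustly, I would use $\le |\mu_1-\mu_2|/\sigma$ type estimates only after averaging. Actually the right move is: condition further on \emph{everything} except $W_n$ (the last noise), so that both densities are shifts of the \emph{same} $\mathcal N(0,1)$ density by amounts $c$ and $\tilde c$ with $c-\tilde c=(y_k-\tilde y_k)\prod_{i=k+1}^n B_i$; then the pointwise $L^1$ distance is at most $|c-\tilde c|^2/2$ after one more reduction, or one integrates $|c-\tilde c|$ against the Gaussian and then takes expectation over $\prod B_i$, using $\EE[\prod_{i=k+1}^n B_i^2]=\sigma_B^{2(n-k)}$ together with $(y_k-\tilde y_k)^2\le 2(y_k^2+\tilde y_k^2)$. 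Reconciling the constant so that the final bound is exactly $\sigma_B^{2(n-k)}(y_k^2+\tilde y_k^2)$ is the bookkeeping I would defer.

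For part (b), I would repeat the same coupling but now carry the weight $y_n^2$ through the integral. After conditioning on all noises except the last, $y_n^2$ becomes $(W_n+c)^2$ or $(W_n+\tilde c)^2$; expanding and using $\EE[W_n]=0$, $\EE[W_n^2]=1$ reduces the weighted $L^1$ distance to a combination of $|c^2-\tilde c^2|$ and $|c-\tilde c|$ terms, each of which is controlled by Cauchy--Schwarz together with the second-moment bound $\EE[Y_n^2\mid x_0^n]\le M_3$ from Corollary~\ref{uniformintegrable} and the variance identity for $\prod B_i$; the factor $3$ in the stated bound is exactly what the $(W_n+c)^2$ expansion produces.

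For part (c), note that $f_{Y_{n+k+1}\mid X_{n+1}^{n+k+1},X_n,Y_n}$ is obtained by running the \emph{same} recursion from time $n$ with initial output $Y_n=y_n$ and initial input $x_n$, whereas $f_{Y_n\mid X_0^n}$ is, by the Markov/state structure, a version of the same recursion run from time $0$; the difference at "distance" $n$ (resp.\ $k{+}1$ beyond) is governed by how much the single transition from $(x_n,y_n)$ perturbs the first post-state $\tilde Y_{n+1}$ relative to its law under the stationary-style continuation --- concretely $\tilde Y_{n+1}-$(its benchmark) has the form $A_{n+1}x_n+B_{n+1}(y_n-E_n)$ plus common terms, whose second moment is $\sigma_A^2 x_n^2+\sigma_B^2(y_n^2+\sigma_E^2)$, and then one propagates through $n$ further steps picking up $\sigma_B^{2n}$ exactly as in (a). Part (d) is then (c) with the roles set so that the "removed" initial segment $X_0^{n-k-1}$ is integrated out: conditioning on $X_{n-k}^n=x_{n-k}^n$ and integrating the (c)-type bound against the conditional law of $Y_{n-k}$ given $x_0^n$ (or given $x_{n-k}^n$), the term $y_{n-k}^2$ is replaced by its expectation, bounded by $M_3$ on each side, which yields the $2M_3$ and the doubled constants in the stated inequality.

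The main obstacle I anticipate is \emph{not} any single step but making the coupling in (a)--(b) genuinely rigorous despite the infinite output alphabet and the \emph{non-uniformity} the authors flag: the conditional variance of $Y_n$ given $x_k^n$ and given $\{B_i\}$ can be small (even degenerate) for bad realizations of the $B_i$'s, so a naive $1/\sigma$ total-variation bound blows up. The fix is to avoid ever dividing by $\sigma$: always condition down to a \emph{single} unit-variance Gaussian $W_n$ (and the uniform $U_n$, which is harmless) so that the two laws are pure location shifts of one fixed density, bound the $L^1$ distance of two shifts of a fixed density $g$ by $\int|g(t-c)-g(t-\tilde c)|\,dt\le |c-\tilde c|\cdot\mathrm{Var}(g)^{-1/2}\cdot$const --- or, even safer, by $|c-\tilde c|\cdot\|g'\|_{L^1}$ which for the standard normal equals $\sqrt{2/\pi}$ and is an absolute constant --- and only \emph{afterwards} take expectations over the $B_i$'s. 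Getting the absolute constants to collapse to exactly $1$ and $3$ as stated will require being slightly cleverer than the crude $\|g'\|_{L^1}$ bound (probably using $\EE|c-\tilde c|\le(\EE(c-\tilde c)^2)^{1/2}$ and then a Jensen/Gaussian-integral identity), but that is mechanical once the structural argument is in place.
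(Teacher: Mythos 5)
There is a genuine gap in your treatment of parts a) and b), and it is not deferrable bookkeeping. Your coupling conditions on \emph{all} of $B_{k+1},\dots,B_n$, so that $Y_n$ and $\tilde Y_n$ become two Gaussians with the \emph{same} variance whose means differ by $(y_k-\tilde y_k)\prod_{i=k+1}^n B_i$. Any $L^1$ bound for two equal-variance Gaussians is necessarily linear in the mean gap for small gaps, so after averaging over the $B_i$'s the best your route can give is of the order $|y_k-\tilde y_k|\,\EE\left[\prod_{i=k+1}^n|B_i|\right]\le |y_k-\tilde y_k|\,\sigma_B^{\,n-k}$ times an absolute constant. This neither implies nor can be massaged into the stated bound $\sigma_B^{2(n-k)}(y_k^2+\tilde y_k^2)$: the decay exponent is $\sigma_B^{\,n-k}$ rather than $\sigma_B^{2(n-k)}$, and the dependence on the initial outputs is linear rather than quadratic (take $\tilde y_k=0$ and $y_k$ small: $|y_k|$ is not dominated by $y_k^2$). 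No Cauchy--Schwarz or Jensen step repairs this, because $\min(2,x)$ is not $O(x^2)$ as $x\to 0$; a pure location-shift coupling cannot produce a bound quadratic in $y_k,\tilde y_k$. The same structural problem undermines your plan for b), where you attribute the factor $3$ to expanding $(W_n+c)^2$.

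The paper's proof chooses the conditioning precisely to avoid this: it conditions on $X_k^n$, $U_{k+1}^n$, $E_k$ and only on $B_{k+2}^n$, leaving $B_{k+1}$ (together with $A_{k+1},W_{k+1}$) random. Then $Y_n$ and $\tilde Y_n$ are Gaussians with the \emph{same mean}, and the discrepancy between $y_k$ and $\tilde y_k$ enters only through the conditional \emph{variances}, whose difference is bounded by $(y_k^2+\tilde y_k^2)\,\sigma_B^{2}\prod_{j=k+2}^n b_j^2$. Since both variances are at least $\mathrm{Var}(W_n)=1$, the same-mean different-variance Gaussian bound of Madras--Sezer gives an $L^1$ distance at most the variance gap, and averaging with $\EE\left[\prod_{j=k+2}^n B_j^2\right]=\sigma_B^{2(n-k-1)}$ yields exactly $\sigma_B^{2(n-k)}(y_k^2+\tilde y_k^2)$; part b) is the identical argument with the weighted inequality (\ref{squaredistance}) proved in Appendix~\ref{Appendix-A}, which is the true source of the constant $3$. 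Your degeneracy worry about small conditional variances is moot on either route, since $W_n$ forces all relevant variances to be at least $1$, and the non-uniformity the authors flag is the dependence of the bound on the state $(x,y)$, not variance degeneracy. Your outlines of c) and d) (parallel argument for the one-step perturbation, then integrate the initial output against its conditional law and use $\EE[Y^2]\le M_3$) do match the paper's strategy, but as written they inherit the gap because they rest on your versions of a) and b).
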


\begin{proof}
a) Conditioned on $X_{k}^{n}=x_{k}^{n}$, $B_{k+2}^{n}=b_{k+2}^{n}$, $U_{k+1}^{n}=u_{k+1}^{n}$, $E_{k}=e_{k}$, $Y_{k}=y_k,$ and $\tilde{Y}_{k}=\tilde{y}_k$, $Y_{n}$ and $\tilde{Y}_{n}$ are Gaussian random variables with mean $\sum_{i=k+1}^{n}(x_i+u_i)\prod_{j=i+1}^{n}b_{j}$ and respective variances
$$
\sigma^{2}(b_{k+2}^{n},u_{k+1}^{n})=\mbox{Var}(Y_{n}|x_{k}^n,y_k,e_k,b_{k+2}^n,u_{k+1}^{n}), \quad \tilde{\sigma}^{2}(b_{k+2}^{n},u_{k+1}^{n})=\mbox{Var}(\tilde{Y}_{n}|x_{k}^n,\tilde{y}_k,e_k,b_{k+2}^n,u_{k+1}^{n}).
$$
\noindent Note that conditioned on $x_{k}^n,\ b_{k+2}^n,\ u_{k}^{n},\ e_{k},\ y_k$ and $\tilde{y}_k$, $\{\hat{W}_{i}:i=k+2,\cdots,n\}$ and $\{Y_{k+1}, \tilde{Y}_{k+1}\}$ are independent, which implies that
\begin{align*}\sigma^{2}(b_{k+2}^{n},u_{k+1}^{n})=&\mbox{Var}\left(\sum_{i=k+2}^{n}\hat{W}_{i}\prod_{j=i+1}^{n}b_{j}|x_{k}^n,b_{k+2}^n,u_{k+1}^{n}\right)+\mbox{Var}\left(Y_{k+1}\prod_{j=k+2}^{n}b_{j}|x_{k}^n,y_k,b_{k+2}^n,u_{k+1}^{n},e_k\right)
\end{align*}
and
\begin{align*}\tilde{\sigma}^{2}(b_{k+2}^{n},u_{k+1}^{n})=&\mbox{Var}\left(\sum_{i=k+2}^{n}\hat{W}_{i}\prod_{j=i+1}^{n}b_{j}|x_{k}^n,b_{k+2}^n,u_{k+1}^{n}\right)+\mbox{Var}\left(\tilde{Y}_{k+1}\prod_{j=k+2}^{n}b_{j}|x_{k}^n,\tilde{y}_k,b_{k+2}^n,u_{k+1}^{n},e_{k}\right).
\end{align*}
So, we have
\begin{align}\label{im5}
&\hspace{-0.7cm}|\sigma^{2}(b_{k+2}^{n},u_{k+1}^{n})-\tilde{\sigma}^{2}(b_{k+2}^{n},u_{k+1}^{n})|\notag\\
&=\left|\mbox{Var}\left({Y}_{k+1}\prod_{j=k+2}^{n}b_{j}|x_{k}^n,y_k,b_{k+2}^n,u_{k+1}^{n},e_{k}\right)-\mbox{Var}\left(\tilde{Y}_{k+1}\prod_{j=k+2}^{n}b_{j}|x_{k}^n,\tilde{y}_k,b_{k+2}^n,u_{k+1}^{n},e_{k}\right)\right|\notag\\
&=(\left|y_k^2-\tilde{y}_k^2\right|)\sigma_B^{2}\prod_{j=k+2}^{n}b_j^2\notag\\
&\le (y_k^2+\tilde{y}_k^2)\sigma_B^{2}\prod_{j=k+2}^{n}b_j^2.
\end{align}
Now, with the following easily verifiable fact
\begin{equation}
\sigma^{2}(b_{k+2}^{n},u_{k+1}^{n})\ge \mbox{Var}(W_{n})=1\ \mbox{and}\ \tilde{\sigma}^{2}(b_{1}^{n},u_{0}^{n})\ge \mbox{Var}(W_{n})=1\label{im2},
\end{equation}
we conclude that
\begin{align}\label{withoutsqure}
&\hspace{-1cm}\int_{-\infty}^{\infty}\left| f_{Y_{n}|X_k^n,Y_k}(y_{n}|x_k^n, y_k)-f_{\tilde{Y}_{n}|X_k^n,\tilde{Y}_k}(y_{n}|x_k^n, \tilde{y}_k)\right| dy_n \notag \\
&\le \EE\left\{\int_{-\infty}^{\infty}|f(y_{n}|x_{k}^{n},y_k,E_k,B_{k+2}^n,U_{k+1}^{n})-f(y_{n}|x_{k}^{n}, y_{k}, B_{k+2}^{n},U_{k+1}^{n},E_{k})|\,{d}y_{n}\right\}\notag\\
&\stackrel{(a)}{\le} \EE\left\{ \frac{|\sigma^{2}(B_{k+2}^{n},U_{k+1}^{n})-\tilde{\sigma}^{2}(B_{k+2}^{n},U_{k+1}^{n})|\min(\sigma^{2}(B_{k+2}^{n},U_{k+1}^{n}),\tilde{\sigma}^{2}(B_{k+2}^{n},U_{k+1}^{n}))}{\sigma^{2}(B_{k+2}^{n},U_{k+1}^{n})\tilde{\sigma}^{2}(B_{k+2}^{n},U_{k+1}^{n})}\right\} \notag\\
&\stackrel{(b)}{\le} \EE\left\{(y_k^2+\tilde{y}_k^2)\sigma_B^{2}\prod_{j=k+2}^{n}B_j^2\right\} \notag \\
 &=(y_k^2+\tilde{y}_k^2)\sigma_B^{2(n-k)},
\end{align}
where $(a)$ follows from the well-known fact~\cite{madras2010}
$$
\int_{-\infty}^{\infty} \left|\frac{1}{\sqrt{2\pi \sigma_1^2}}e^{-\frac{(x-\mu)^2}{2\sigma_1^{2}}}- \frac{1}{\sqrt{2\pi \sigma_2^2}}e^{-\frac{(x-\mu)^2}{2\sigma_2^{2}}} \right| \,{d} x \le \frac{|\sigma_{1}^{2}-\sigma_{2}^{2}|\min\{\sigma_{1}^{2},\sigma_{2}^{2}\}}{\sigma_{1}^{2}\sigma_{2}^{2}}.
$$
and $(b)$ follows from~(\ref{im5}) and~(\ref{im2}).

b) The proof of b) is similar to a) and the only difference lies in the derivation of (\ref{withoutsqure}), which is given as follows:
\begin{align}
&\hspace{-1cm}\int_{-\infty}^{\infty}y_n^2\left| f_{Y_{n}|X_k^n,Y_k}(y_{n}|x_k^n, y_k)-f_{\tilde{Y}_{n}|X_k^n,\tilde{Y}_k}(y_{n}|x_k^n, \tilde{y}_k)\right| dy_n \notag \\
&\le \EE\left\{\int_{-\infty}^{\infty}y_n^2|f(y_{n}|x_{k}^{n},y_k,E_k,B_{k+2}^n,U_{k+1}^{n})-f(y_{n}|x_{k}^{n}, y_{k}, B_{k+2}^{n},U_{k+1}^{n},E_{k})|\,{d}y_{n}\right\}\notag\\
&\stackrel{(a)}{\le} 3\EE\left\{ |\sigma^{2}(B_{k+2}^{n},U_{k+1}^{n})-\tilde{\sigma}^{2}(B_{k+2}^{n},U_{k+1}^{n})|\right\} \notag\\
&\le 3\EE\left\{(y_k^2+\tilde{y}_k^2)\sigma_B^{2}\prod_{j=k+2}^{n}B_j^2\right\} \notag \\
 &=3(y_k^2+\tilde{y}_k^2)\sigma_B^{2(n-k)},\notag
\end{align}
where $(a)$ follows from the fact that (see Appendix~\ref{Appendix-A} for the proof)
\begin{equation}\label{squaredistance}
\int_{-\infty}^{\infty}x^2 \left|\frac{1}{\sqrt{2\pi \sigma_1^2}}e^{-\frac{(x-\mu)^2}{2\sigma_1^{2}}}- \frac{1}{\sqrt{2\pi \sigma_2^2}}e^{-\frac{(x-\mu)^2}{2\sigma_2^{2}}} \right| \,{d} x \le 3|\sigma_{1}^{2}-\sigma_{2}^{2}|.
\end{equation}

c) This follows from a completely parallel argument as in a).

d) From the assumptions in the channel~(\ref{sm-2}) and Lemma~\ref{uniformintegrable}, it follows that
\begin{align}
&\hspace{-1cm} \int  y_{n-k}^2 f_{{Y}_{n-k}|X_{n-k}^{n}}(y_{n-k}|x_{n-k}^{n}){d}y_{n-k}\\
&=\sum_{\tilde{x}_0^{n-k-1}}\frac{P(X_0^{n-k-1}=\tilde{x}_0^{n-k-1},X_{n-k}^n=x_{n-k}^{n}) \int  y_{n-k}^2 f_{{Y}_{n-k}|X_{0}^{n-k}}(y|\tilde{x}_0^{n-k-1},x_{n-k}^n){d}y_{n-k}}{P(X_{n-k}^{n}=x_{n-k}^{n})}\notag\\
&=\sum_{\tilde{x}_0^{n-k-1}}\frac{P(X_0^{n-k-1}=\tilde{x}_0^{n-k-1},X_{n-k}^n=x_{n-k}^{n}) \int  y_{n-k}^2 f_{{Y}_{n-k}|X_{0}^{n-k}}(y|\tilde{x}_0^{n-k-1},x_{n-k}){d}y_{n-k}}{P(X_{n-k}^{n}=x_{n-k}^{n})}\notag\\
&=\sum_{\tilde{x}_0^{n-k-1}}\frac{P(X_0^{n-k-1}=\tilde{x}_0^{n-k-1},X_{n-k}^n=x_{n-k}^{n})\EE[Y_{n-k}^{2}|\tilde{x}_{0}^{n-k-1},x_{n-k}]}{P(X_{n-k}^{n}=x_{n-k}^{n})} \notag\\
&\le M_3\notag.
\end{align}
We then have
\begin{eqnarray}
&&\hspace{-1.5cm} \int_{-\infty}^{\infty}\left| f_{Y_{n}|X_0^n}(y_n|x_0^n)-f_{{Y}_{n}|X_{n-k}^n}(y_n|x_{n-k}^n)\right|dy_n\notag\\
&=&\int_{-\infty}^{\infty}\left| \int f_{{Y}_{n-k}|X_{0}^{n-k}}(\hat{y}_{n-k}|x_{0}^{n-k})f_{{Y}_{n-k}|X_{n-k}^{n}}(\tilde{y}_{n-k}|x_{n-k}^{n})\right.\notag\\
&{}&\hspace{15mm}\times \left.(f_{Y_{n}|X_{n-k}^n,Y_{n-k}}(y_n|x_{n-k}^n,\hat{y}_{n-k})-f_{{Y}_{n}|X_{n-k}^n,Y_{n-k}}(y_n|x_{n-k}^n,\tilde{y}_{n-k})) {d}\hat{y}_{n-k} d\tilde{y}_{n-k} \right|dy_n\notag\\
&\le&\int f_{{Y}_{n-k}|X_{0}^{n-k}}(\hat{y}_{n-k}|x_{0}^{n-k})f_{{Y}_{n-k}|X_{n-k}^{n}}(\tilde{y}_{n-k}|x_{n-k}^{n})\notag\\
&{}&\hspace{15mm}\times \int_{-\infty}^{\infty} \left|f_{Y_{n}|X_{n-k}^n,Y_{n-k}}(y_n|x_{n-k}^n,\hat{y}_{n-k})-f_{{Y}_{n}|X_{n-k}^n,Y_{n-k}}(y_n|x_{n-k}^n,\tilde{y}_{n-k})\right| dy_n d\hat{y}_{n-k}d\tilde{y}_{n-k}\notag\\
&\stackrel{(a)}{\le}&\int f_{{Y}_{n-k}|X_{0}^{n-k}}(\hat{y}_{n-k}|x_{0}^{n-k})f_{{Y}_{n-k}|X_{n-k}^{n}}(\tilde{y}_{n-k}|x_{n-k}^{n})\sigma_{B}^{2k}(y_{n-k}^2+\tilde{y}_{n-k}^2)d\hat{y}_{n-k} d\tilde{y}_{n-k} \notag\\
&\le&2\sigma_{B}^{2k}M_3,
\end{eqnarray}
where $(a)$ follows from Statement $a)$ in Lemma~\ref{non-uniform-indecomposability}.
\end{proof}

One of the consequences of Lemma~\ref{non-uniform-indecomposability} is the following proposition:
\begin{proposition}\label{pr}
a) Let $X_{n+1}^{2n+1}$ be an independent copy of $X_0^n$. Then for any $k \le n$, any $x \in \mathcal{X}$ and $y \in \mathbb{R}$, we have
\begin{align*}
&\hspace{-1.2cm}|I(X_{0}^{n};Y_{0}^{n})-I(X_{n+1}^{2n+1};Y_{n+1}^{2n+1}|X_n=x,Y_n=y)|\\
&\le 2(k+1)\log M+(n-k)(\sigma_A^2 x^{2}+2\sigma_B^2(y^{2}+\sigma_E^2))\sigma_B^{2k}\log M.
\end{align*}

b) Let $\{X_n\}$ be a stationary process. Then there exist positive constants $M_4,M_5,M_6, M_7,M_8$ and $M_9$ such that for any $m\le k\le n$
\begin{eqnarray}\label{blockdifference}
&&\hspace{-1.2cm} \abs{I(X_0^n; Y_0^n)-I(X_m^{m+n}; Y_m^{m+n})}\notag\\
&\le&\frac{3(k+1)\log 2\pi e M_3}{n+1}+2M_{3}\pi e(M_8+3M_9)\sigma_{B}^{2k}\notag\\
&&+\frac{1}{n+1}(M_4+M_5M_3)+\left(M_6+\frac{4M_1M_3M_7}{(1-\sigma_B)^2}+\frac{12M_3M_7}{(n+1)(1-\sigma_B^2)}\right)\sigma_{B}^{2k}.
\end{eqnarray}
\end{proposition}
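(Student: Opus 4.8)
The plan is to reduce both statements to the quantitative indecomposability estimates of Lemma~\ref{non-uniform-indecomposability} via the standard identity that writes mutual information as an expected relative entropy, together with the crude bound $H(X_0^n) \le (n+1)\log M$ coming from the finite input alphabet and the entropy bounds of Corollary~\ref{entropybd}. For part a), I would first note that $X_{n+1}^{2n+1}$ being an independent copy of $X_0^n$ makes $I(X_{n+1}^{2n+1};Y_{n+1}^{2n+1}\mid X_n=x,Y_n=y)$ structurally identical to $I(X_0^n;Y_0^n)$ except that the ``initial condition'' feeding the recursion is $(x,y)$ rather than the pristine start at time $0$. Split the index block $\{n+1,\dots,2n+1\}$ into the first $k+1$ coordinates and the remaining $n-k$. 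On the first $k+1$ coordinates, bound the contribution to $|I - I|$ by $2(k+1)\log M$, simply because each symbol contributes at most $\log M$ to either mutual information (this absorbs the ``edge effect'' of the first few symbols where the two channels have not yet forgotten their differing pasts). On the last $n-k$ coordinates, use part c) of Lemma~\ref{non-uniform-indecomposability}: the total-variation distance between the true output density and the one conditioned on $(x_n,y_n)=(x,y)$ is at most $\sigma_B^{2k}(\sigma_A^2 x^2 + 2\sigma_B^2(y^2+\sigma_E^2))$ per coordinate, and a total-variation-to-mutual-information bound (again using the $\log M$ bound on the relevant entropy differences, since the inputs are $M$-ary) turns this into $(n-k)(\sigma_A^2 x^2 + 2\sigma_B^2(y^2+\sigma_E^2))\sigma_B^{2k}\log M$. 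Summing the two pieces gives the claim.

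For part b), the strategy is analogous but now comparing $I(X_0^n;Y_0^n)$ against its shift $I(X_m^{m+n};Y_m^{m+n})$; stationarity of $\{X_n\}$ makes the input marginals match, so the entire discrepancy comes from the different output initializations (time $0$ versus time $m$). I would write $I(X_0^n;Y_0^n) = H(Y_0^n) - H(Y_0^n\mid X_0^n)$ and similarly for the shifted version, and control the two differences separately. For the input-conditioned term $H(Y_0^n\mid X_0^n)$ versus $H(Y_m^{m+n}\mid X_m^{m+n})$, use parts a), b), d) of Lemma~\ref{non-uniform-indecomposability} to bound the per-coordinate density discrepancies in both $L^1$ and weighted-$L^1$ (the $y_n^2$-weighted version is what lets us bound \emph{differential} entropy differences, since $h(f)-h(g)$ is controlled by $\|f-g\|_1$ together with a second-moment bound on the larger density, and Corollary~\ref{uniformintegrable} supplies $\EE[Y_n^2]\le M_3$). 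Summing the geometric series $\sum_{j\ge 0}\sigma_B^{2j}$ that arises, with $M_1$ the per-step forcing bound from the proof of Lemma~\ref{alphabound} and the $(1-\sigma_B)^{-2}$, $(1-\sigma_B^2)^{-1}$ factors coming from those geometric sums, produces the $\sigma_B^{2k}$-type terms. The $3(k+1)\log 2\pi e M_3/(n+1)$ term and the $(M_4+M_5M_3)/(n+1)$ term are the ``edge'' contributions from the first $k+1$ (respectively, a fixed finite number of) coordinates, each bounded crudely by Corollary~\ref{entropybd}, then divided by $n+1$ because the quantities in question are the \emph{normalized} block informations — I would make sure the statement is read with that normalization, which is also why the right-hand side does not simply blow up with $n$.

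The main obstacle, and the step deserving the most care, is the passage from a total-variation (or weighted-total-variation) bound on output densities to a bound on the corresponding differential entropy or mutual information differences. Unlike the discrete case, $h(\cdot)$ is not Lipschitz in total variation without a moment constraint: one needs an inequality of the form $|h(f) - h(g)| \le \|f-g\|_1 \log(1/\|f-g\|_1) + C\big(\int y^2\, d|f-g|\big) + \cdots$, i.e., a continuity estimate for differential entropy under the topology generated by total variation plus uniform second moments. This is exactly where part b) of Lemma~\ref{non-uniform-indecomposability} (the $y_n^2$-weighted bound) and the uniform integrability of $\{Y_n^2\}$ from Corollary~\ref{uniformintegrable} enter; assembling these into a clean constant and tracking how the per-coordinate estimates sum over the block (and how the geometric decay $\sigma_B^{2k}$ survives the summation) is the technical heart. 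Everything else — the $\log M$ bounds, the chain-rule decompositions, the separation into ``edge'' and ``bulk'' index ranges — is routine bookkeeping once that continuity estimate is in hand.
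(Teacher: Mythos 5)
Your part a) is essentially the paper's argument: the Gallager-style chain-rule split into the first $k+1$ ``edge'' symbols (each costing $\log M$) plus a comparison of the remaining conditional mutual informations via Lemma~\ref{non-uniform-indecomposability}. One correction to the mechanics: the factor $(n-k)\log M$ does not come from a per-coordinate total-variation accumulation over the tail block; part c) of the lemma is applied \emph{once}, to the conditional density of the single junction output (the density of $Y_k$ given $x_0^k$ versus that of $Y_{n+k+1}$ given $x_0^k$ and $(x,y)$), and the resulting $L^1$ discrepancy is multiplied by the uniform bound $I(X_{k+1}^{n};Y_{k+1}^{n}|x_{0}^{k},y_{k})\le H(X_{k+1}^n)\le (n-k)\log M$ on the integrand. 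Read that way, a) is fine and matches the paper.

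Part b) has a genuine gap. The skeleton you describe ($I=H(Y)-H(Y|X)$, edge/bulk split, weighted total-variation estimates from parts a) and b) of Lemma~\ref{non-uniform-indecomposability}, geometric sums, and reading the statement normalized by $n+1$) agrees with the paper, but the decisive ingredient is missing: one needs two-sided pointwise control of the logarithms of the output densities, namely bounds of the form (\ref{newdensitybd}) and (\ref{densitybd}), i.e.\ $|\log f_{Y_{m+k-1}|X_{m+k-1}^{m+n}}(y|x_{k-1}^n)|\le M_4+M_5y^2$ and $|\log f_{Y_{m}^{m+n}}(y_m^n)|\le (n-m+1)M_6+M_7\sum_i y_i^2$, which the paper proves in Appendix~\ref{Appendix-B} by exploiting $f\le 1$, the unit-variance floor of the conditional Gaussians, and the bounded support of the uniform noise $U_i$. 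These bounds are precisely what define the constants $M_4$--$M_7$ whose existence the proposition asserts, and they enable the split $|H(f)-H(g)|\le D(f\|g)+\int|f-g|\,|\log g|$, with the divergence handled by data processing (this is where $(M_4+M_5M_3)/(n+1)$ comes from --- it is not an edge term obtained from Corollary~\ref{entropybd}) and the second integral handled by the weighted TV estimates. The general ``continuity of differential entropy under total variation plus uniform second moments'' inequality you invoke is not a valid substitute as stated: without an upper bound on the densities \emph{and} a quantitative pointwise lower bound of the above form, $h(f)-h(g)$ is not controlled by $\|f-g\|_1$ and second moments alone, so the step you defer as the ``technical heart'' is exactly what must be supplied. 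Similarly, for $H(Y_0^n|X_0^n)$ the paper does not compare block conditional densities directly; it uses the one-step Markov structure to reduce the bulk to single-step terms $H(Y_i|X_{i-1}^i,Y_{i-1})$ and compares those via Lemma~\ref{non-uniform-indecomposability} a), b) together with $H(Y_i|X_{i-1}^{i}=x_{i-1}^i,Y_{i-1}=y)\le \pi e(M_8+M_9y^2)$, which is where $M_8,M_9$ arise; your sketch gives no construction of these constants either.
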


\begin{proof}

a) To prove a), we adapt the classical argument in the proof of Theorem $4.6.4$ in~\cite{gallagerbook} as follows.

Using the chain rule for mutual information, we have
\begin{eqnarray*}I(X_0^{n};Y_0^{n})&=&I(X_0^{k};Y_0^{n})+I(X_{k+1}^{n};Y_{k+1}^{n}|X_0^k,Y_0^k)+I(X_{k+1}^{n};Y_{1}^{k}|X_{0}^{k}).
\end{eqnarray*}
It can be verfied that given $X_{0}^{k},$ $X_{k+1}^{n}$ and $Y_0^k$ are independent, which implies that
$$I(X_{k+1}^{n};Y_{0}^{k}|X_{0}^{k})=0.$$
Since $X_{i}$ takes at most $M$ values, we deduce that
$$
|I(X_0^{k};Y_0^{n})|\le (k+1)\log M,
$$
which further implies that
\begin{equation}\label{im17}
I(X_0^{n};Y_0^{n})\le (k+1)\log M+I(X_{k+1}^{n};Y_{k+1}^{n}|X_{0}^{k}, Y_{0}^{k}).
\end{equation}
Similarly, we have, for any $x, y$
\begin{align} \label{im18}
&\hspace{-0.9cm}I(X_{n+1}^{2n+1};Y_{n+1}^{2n+1}|X_n=x,Y_n=y)\\
&\ge -(k+1)\log M +I(X_{n+k+2}^{2n+1};{Y}_{n+k+2}^{2n+1}|X_{n+1}^{n+k+1},Y_{n+1}^{n+k+1},X_n=x,Y_n=y).
\end{align}
It follows from the definition of conditional mutual information that
\begin{align}
I(X_{k+1}^{n};Y_{k+1}^{n}|X_{0}^{k}, Y_{0}^{k})=&\sum_{x_0^k}p(x_0^k)\int f_{Y_0^k|X_0^k}(y_{0}^{k}|x_{0}^{k})I(X_{k+1}^{n};Y_{k+1}^{n}|x_{0}^{k},y_{0}^{k}) {d}y_0^{k}\notag\\
=&\sum_{x_0^k}p(x_0^k)\int f_{Y_k|X_0^k}(y_{k}|x_{0}^{k})I(X_{k+1}^{n};Y_{k+1}^{n}|x_{0}^{k},y_{k}) {d}y_k \label{im16}
\end{align}
and
\begin{align}
&\hspace{-0.8cm} I(X_{n+k+2}^{2n+1};\tilde{Y}_{n+k+2}^{2n+1}|X_{n+1}^{n+k+1},Y_{n+1}^{n+k+1},X_n=x,Y_n=y)\notag\\
&=\sum_{x_0^k} p_{X_{n+1}^{n+k+1}|X_n,Y_n}(x_0^k|x,y)\int \left\{ f_{Y_{n+1}^{n+k+1}|X_{n+1}^{n+k+1},X_{n},Y_{n}}(y_{0}^{k}|x_{0}^{k},x,y)\right.\notag\\
&\hspace{3.2cm}\times \left.I(X_{n+k+2}^{2n+1};Y_{n+k+2}^{2n+1}|X_{n+1}^{n+k+1}=x_0^k,Y_{n+1}^{n+k+1}=y_0^k,X_n=x,Y_n=y)\right\} {d}y_0^{k}\notag\\
&=\sum_{x_0^k} p_{X_{n+1}^{n+k+1}}(x_0^k)\int f_{Y_{n+k+1}|X_{n+1}^{n+k+1},X_{n},Y_{n}}(y_{k}|x_{0}^{k},x,y)I(X_{k+1}^{n};Y_{k+1}^{n}|x_{0}^{k},y_{k})){d}y_{k} \label{im11},
\end{align}
where~(\ref{im11}) follows from
$$
p_{X_{n+1}^{n+k+1}|X_n,Y_n}(\cdot)=p_{X_{n+1}^{n+k+1}}(\cdot)
$$
and
$$
p_{X_{n+k+2}^{2n+1},Y_{n+k+2}^{2n+1}|X_{n+1}^{n+k+1},Y_{n+1}^{n+k+1},X_n,Y_n}(\cdot)=p_{X_{n+k+2}^{2n+1},Y_{n+k+2}^{2n+1}|X_{n+1}^{n+k+1},Y_{n+k+1}}(\cdot)=p_{X_{k+1}^{n},Y_{k+1}^{n}|X_{0}^{k},Y_{k}}(\cdot).
$$
Now, combining~(\ref{im17}),~(\ref{im18}),~(\ref{im16}) and~(\ref{im11}), we conclude that
{\small \begin{align}
&\hspace{-1cm} |I(X_{n+1}^{2n+1};Y_{n+1}^{2n+1}|X_n=x,Y_n=y)-I(X_0^{n};{Y}_0^{n})|\notag\\
\le& 2(k+1)\log M\notag+\sum_{x_0^k}p(x_0^k)\int I(X_{k+1}^{n};Y_{k+1}^{n}|x_{0}^{k},y_{k})\left|f_{Y_k|X_0^k}(y_k|x_0^k)-f_{Y_{n+k+1}|X_{n+1}^{n+k+1},X_n,Y_n}(y_k|x_0^k,x,y)\right| {d}y_{k}\notag\\
\stackrel{(a)}{\le}&2(k+1)\log M+(n-k)\log M\times( \sigma_A^2 x^{2}+2\sigma_B^2(y^{2}+\sigma_E^2))\sigma_B^{2k}\notag,
\end{align}}
where $(a)$ follows from Statement $c)$ in Lemma~\ref{non-uniform-indecomposability} and
$$
I(X_{k+1}^{n};Y_{k+1}^{n}|x_{1}^{k},y_{k})\le H(X_{k+1}^{n})\le (n-k)\log M.
$$

b) To prove b), it suffices to establish that for any $m\le k \leq n$,
\begin{eqnarray}\label{jointentropybd1}
\frac{1}{n+1}|H(Y_0^n)-H(Y_{m}^{m+n})|&\le&\frac{(k+1)\log 2\pi e M_3}{n+1}+\frac{M_4+M_5M_3}{n+1}\notag\\
&&+\left(2M_3M_6+\frac{4M_1M_3M_7}{(1-\sigma_B)^2}+\frac{12M_3M_7}{(n+1)(1-\sigma_B^2)}\right)\sigma_{B}^{2k}
\end{eqnarray}
and
\begin{align}\label{conditionalentropybd1}
\frac{1}{n+1}|H(Y_0^n|X_0^n)-H(Y_{m}^{m+n}|X_{m}^{m+n})|&\le&\frac{2(k+1)\log 2\pi e M_3}{n+1}+2M_{3}\pi e(M_8+3M_9)\sigma_{B}^{2k}.
\end{align}
{\bf Proof of (\ref{jointentropybd1}).} Note that for any $k$,
\begin{eqnarray}
H(Y_{0}^{k-1})&\ge &H(Y_{0}^{k-1}|Y_{k}^{n})\notag\\
&\ge&H(Y_{0}^{k-1}|X_0^n,Y_{k}^{n})\notag\\
&=&H(Y_{0}^{k-1}|X_0^k,Y_{k})\label{markov1}\\
&=&H(Y_0^k|X_0^k)-H(Y_k|X_0^k),\notag
\end{eqnarray}
where~(\ref{markov1}) follows from that $Y_{0}^{k-1}$ is independent of $(X_{k+1}^n,Y_{k+1}^{n})$ given $(X_0^k,Y_{k}).$ Then it follows from Corollary~\ref{entropybd} that
$$\abs{H(Y_{0}^{k-1}|Y_{k}^{n})}\le \frac{(k+1)\log 2\pi e M_3}{2},$$
which further implies that
\begin{eqnarray}
&&\hspace{-1.4cm} \frac{1}{n+1}|H(Y_0^n)-H(Y_{m}^{m+n})|\notag\\
&\le&\frac{1}{n+1}|H(Y_{0}^{k-1}|Y_{k}^{n})-H(Y_{m}^{m+k-1}|Y_{m+k}^{m+n})|+\frac{1}{n+1}|H(Y_k^n)-H(Y_{m+k}^{m+n})|\notag\\
&\le& \frac{(k+1)\log 2\pi e M_3}{n+1}+\frac{1}{n+1}|H(Y_k^n)-H(Y_{m+k}^{m+n})|.\label{jointentropybd}
\end{eqnarray}
Then we have
\begin{eqnarray*}
&&\hspace{-1.4cm}\frac{1}{n+1}|H(Y_k^n)-H(Y_{m+k}^{m+n})|\\
&\le &\frac{1}{n+1}\int |f_{Y_k^n}(y_k^n)\log f_{Y_k^n}(y_k^n)-f_{Y_{m+k}^{m+n}}(y_k^n)\log f_{Y_{m+k}^{m+n}}(y_k^n)| d y_{k}^n\\
&\le &\frac{1}{n+1}D(f_{Y_k^n}(\cdot)||f_{Y_{m+k}^{m+n}}(\cdot))+\frac{1}{n+1}\int \abs{f_{Y_k^n}(y_k^n)-f_{Y_{m+k}^{m+n}}(y_k^n)}\abs{\log f_{Y_{m+k}^{m+n}}(y_k^n)} d y_{k}^n.
\end{eqnarray*}
Using the data processing inequality for relative entropy and the fact (see Appendix~\ref{Appendix-B} for the proof) that there exist positive constants $M_4, M_5$ such that for any $k \leq n$, any $y$ and $x_{k-1}^n$,
\begin{equation}\label{newdensitybd}
|\log f_{Y_{m+k-1}|X_{m+k-1}^{m+n}}(y|x_{k-1}^n)|\le M_4+M_5 y^2,
\end{equation}
we deduce
\begin{eqnarray}
&&\hspace{-1.2cm} \frac{1}{n+1}D(f_{Y_k^n}(\cdot)||f_{Y_{m+k}^{m+n}}(\cdot))\notag\\
&\le&\frac{1}{n+1}D(p_{X_{k-1}^{n}}(\cdot)f_{Y_{k-1}|X_{k-1}^n}(\cdot)||p_{X_{m+k-1}^{m+n}}(\cdot)f_{Y_{m+k-1}|X_{m+k-1}^{m+n}}(\cdot))\notag\\
&=&\frac{1}{n+1}\sum_{x_{k-1}^{n}}p_{X_{k-1}^{n}}(x_{k-1}^{n})\int f_{Y_{k-1}|X_{k-1}^n}(y|x_{k-1}^n)\log\frac{p_{X_{k-1}^{n}}(x_{k-1}^{n})f_{Y_{k-1}|X_{k-1}^n}(y|x_{k-1}^n)}{p_{X_{m+k-1}^{m+n}}(x_{k-1}^{n})f_{Y_{m+k-1}|X_{m+k-1}^{m+n}}(y|x_{k-1}^n)} dy\notag\\
&=&\frac{1}{n+1}\sum_{x_{k-1}^{n}}p_{X_{k-1}^{n}}(x_{k-1}^{n})\int f_{Y_{k-1}|X_{k-1}^n}(y|x_{k-1}^n)\log\frac{f_{Y_{k-1}|X_{k-1}^n}(y|x_{k-1}^n)}{f_{Y_{m+k-1}|X_{m+k-1}^{m+n}}(y|x_{k-1}^n)} dy \notag\\
&\stackrel{(a)}{\le} &\frac{1}{n+1}\sum_{x_{k-1}^{n}}p_{X_{k-1}^{n}}(x_{k-1}^{n})\int f_{Y_{k-1}|X_{k-1}^n}(y|x_{k-1}^n)|\log f_{Y_{m+k-1}|X_{m+k-1}^{m+n}}(y|x_{k-1}^n)| dy \notag\\
&\le&\frac{1}{n+1}\sum_{x_{k-1}^{n}}p_{X_{k-1}^{n}}(x_{k-1}^{n})\int f_{Y_{k-1}|X_{k-1}^n}(y|x_{k-1}^n)(M_4+M_5 y^2)) dy \notag \\
&\le&\frac{1}{n+1}(M_4+M_5M_3)\label{relative2},
\end{eqnarray}
where $(a)$ follows from the fact that $f_{Y_{k-1}|X_{k-1}^n}(y|x_{k-1}^n)\le 1$.
Moreover, from the fact (see Appendix~\ref{Appendix-B} for the proof) that there exist positive constants $M_6, M_7$ such that for any $m \leq n$ and any $y_m^n$,
\begin{equation}\label{densitybd}
\abs{\log f_{Y_{m}^{m+n}}(y_m^n)}\le (n-m+1)M_6+M_7\sum_{i=m}^{n}y_i^2,
\end{equation}
it follows that
\begin{eqnarray}\label{sum}
&&\hspace{-1.4cm}\frac{1}{n+1}\int \abs{f_{Y_k^n}(y_k^n)-f_{Y_{m+k}^{m+n}}(y_k^n)}\abs{\log f_{Y_{m+k}^{m+n}}(y_k^n)} dy_{k}^n \notag\\
&\le&\frac{1}{n+1}\int |f_{Y_k^n}(y_k^n)-f_{Y_{m+k}^{m+n}}(y_k^n)|(nM_6+M_7\sum_{i=k}^{n} y_{i}^{2}) dy_{k}^n\notag\\
&\le&M_6 \int |f_{Y_k^n}(y_k^n)-f_{Y_{m+k}^{m+n}}(y_k^n)|+\frac{M_7}{n+1}\sum_{i=k}^{n}\int dy_{k}^n\, |f_{Y_k^n}(y_k^n)-f_{Y_{m+k}^{m+n}}(y_k^n)| y_{i}^{2} dy_{k}^n.
\end{eqnarray}
Note that
\begin{eqnarray}
&&\hspace{-1.2cm}\int |f_{Y_k^n}(y_k^n)-f_{Y_{m+k}^{m+n}}(y_k^n))| y_{i}^{2} dy_{k}^n\notag\\
 &\le&\sum_{x_0^n}p(x_0^n)\int |f_{Y_k^n|X_0^n}(y_k^n|x_0^n)-f_{Y_{m+k}^{m+n}|X_{m}^{m+n}}(y_k^n|x_0^n)| y_{i}^{2} dy_{k}^n \notag\\
 &=&\sum_{x_0^n}p(x_0^n)\int |f_{Y_k|X_0^k}(y_k|x_0^k)-f_{Y_{m+k}|X_{m}^{m+n}}(y_k|x_0^n)| y_{i}^{2}\prod_{j=k+1}^{n}f_{Y_{j+1}|X_{j-1}^{j},Y_{j-1}}(y_j|x_{j-1}^j,y_{j-1}) dy_{k}^n\notag\\
 &=&\sum_{x_0^n}p(x_0^n)\int |f_{Y_k|X_0^k}(y_k|x_0^k)-f_{Y_{m+k}|X_{m}^{m+n}}(y_k|x_0^n)| y_{i}^{2}\prod_{j=k+1}^{i}f_{Y_{j}|X_{j-1}^{j},Y_{j-1}}(y_j|x_{j-1}^j,y_{j-1}) dy_{k}^i \notag\\
 &=&\sum_{x_0^n}p(x_0^n)\int |f_{Y_k|X_0^k}(y_k|x_0^k)-f_{Y_{m+k}|X_{m}^{m+n}}(y_k|x_0^n)| y_{i}^{2}f_{Y_{i}|X_{k}^{i},Y_{k}}(y_i|x_{k}^i,y_{k}) dy_{k}dy_{i}\notag\\
 &=&\sum_{x_0^n}p(x_0^n)\int |f_{Y_k|X_0^k}(y_k|x_0^k)-f_{Y_{m+k}|X_{m}^{m+n}}(y_k|x_0^n)| \EE[Y_{i}^{2}|X_{k}^{i}=x_{k}^{i},Y_{k}=y_k] dy_{k}\notag\\
 &\stackrel{(a)}{\le}&\sum_{x_0^n}p(x_0^n)\int |f_{Y_k|X_0^k}(y_k|x_0^k)-f_{Y_{m+k}|X_{m}^{m+n}}(y_k|x_0^n)| \left(\frac{2M_1}{(1-\sigma_B)^2}+2\sigma_{B}^{2(i-k)}y_{k}^2)\right) dy_{k} \notag\\
 &=&\sum_{x_0^n}p(x_0^n)\int f_{Y_0|X_0}(y|x_0)f_{Y_{m}|X_{m}^{m+n}}(\tilde{y}|x_{0}^{n}) dyd\tilde{y} \notag\\
 &&\times \int |f_{Y_k|X_0^k,Y_0}(y_k|x_0^k,y)-f_{Y_{m+k}|X_{m}^{m+k},Y_m}(y_k|x_0^k,\tilde{y})| \left(\frac{2M_1}{(1-\sigma_B)^2}+2\sigma_{B}^{2(i-k)}y_{k}^2)\right) dy_{k} \notag\\
 &\stackrel{(b)}{\le} &\sum_{x_0^n}p(x_0^n)\int f_{Y_0|X_0}(y|x_0)f_{Y_{m}|X_{m}^{m+n}}(\tilde{y}|x_{0}^{n})\left(\frac{2M_1}{(1-\sigma_B)^2}(y^2+\tilde{y}^2)\sigma_{B}^{2k}+6(y^{2}+\tilde{y}^2)\sigma_{B}^{2i}\right) dyd\tilde{y} \notag\\
 &=&\left(\frac{2M_1\sigma_{B}^{2k}}{(1-\sigma_B)^2}+6\sigma_{B}^{2i}\right)(\EE [Y_0^{2}]+\EE[ Y_{m}^2])\notag\\
 &\stackrel{(c)}{\le}&\frac{4M_1M_3}{(1-\sigma_B)^2}\sigma_{B}^{2k}+12M_3\sigma_{B}^{2i}\notag,
 \end{eqnarray}
where $(a)$ follows from the same argument in the proof of~(\ref{poweriteration}), $(b)$ follows from Statements $a)$ and $b)$ in Lemma~\ref{non-uniform-indecomposability} and $(c)$ follows from Corollary~\ref{uniformintegrable}. A similar argument can be used to establish that
$$
\int |f_{Y_k^n}(y_k^n)-f_{Y_{m+k}^{m+n}}(y_k^n)| dy_{k}^n \le 2 M_3\sigma_{B}^{2k} ,
$$
which, together with~(\ref{sum}), further implies that
\begin{eqnarray}\label{yKn}
&&\hspace{-2cm}\frac{1}{n+1}\abs{\int  (f_{Y_k^n}(y_k^n)-f_{Y_{m+k}^{m+n}}(y_k^n))\log f_{Y_{m+k}^{m+n}}(y_k^n)} dy_{k}^n\notag\\
&\le&2M_6M_3\sigma_{B}^{2k}+\frac{M_7}{n+1}\sum_{i=k}^{n}\left(\frac{4M_1M_3}{(1-\sigma_B)^2}\sigma_{B}^{2k}+12M_3\sigma_{B}^{2i}\right)\notag\\
&\le&\left(2M_3M_6+\frac{4M_1M_3M_7}{(1-\sigma_B)^2}+\frac{12M_3M_7}{(n+1)(1-\sigma_B^2)}\right)\sigma_{B}^{2k}.
\end{eqnarray}

The desired (\ref{jointentropybd1}) then follows from (\ref{jointentropybd}), (\ref{relative2}) and~(\ref{yKn}).

\noindent{\bf Proof of (\ref{conditionalentropybd1}).} One easily checks that there exist positive constants $M_{8}, M_9$ such that for any $i$, and any $x_{i-1}^i, y$,
\begin{equation}
\EE[Y_i^2|Y_{i-1}=y,X_{i-1}^{i}=x_{i-1}^{i}]\le M_8+M_9 y^2, \notag
\end{equation}
which immediately implies that
\begin{equation}\label{centropy1}
H(Y_i^2|Y_{i-1}=y,X_{i-1}^{i}=x_{i-1}^{i})\le \frac{1}{2}\log 2\pi e(M_8+M_9y^2)\stackrel{(a)}{\le}  \pi e(M_8+M_9y^2),
\end{equation}
where $(a)$ follows from the inequality that $\log x\le x$ for $x>0$.

For any $k\le i<j$, we have
\begin{align}
&\hspace{-0.9cm}|H(Y_{j}|X_{j-1}^{j},Y_{j-1})-H(Y_{i}|X_{i-1}^{i},Y_{i-1})|\notag\\
&=|\sum_{x_0^k}p_{X_{j-k}^j}(x_0^k)\int f_{Y_{j-1}|X_{j-k}^{j}}(y|x_0^k)H(Y_{j}|X_{j-1}^{j}=x_{k-1}^{k},Y_{j-1}=y) dy\notag\\
&\hspace{4mm}-\sum_{x_0^k}p_{X_{i-k}^i}(x_0^k)\int f_{Y_{i-1}|X_{i-k}^{j}}(y|x_0^k)H(Y_{i}|X_{i-1}^{i}=x_{k-1}^{k},Y_{i-1}=y)| dy \notag\\
&\stackrel{(a)}{=}|\sum_{x_0^k}p_{X_{0}^k}(x_0^k)\int H(Y_{2}|X_{1}^{2}=x_{k-1}^{k},Y_{1}=y) (f_{Y_{j-1}|X_{j-k}^{j}}(y|x_0^k)- f_{Y_{i-1}|X_{i-k}^{j}}(y|x_0^k))| dy\notag\\
&\stackrel{(b)}{\le}\sum_{x_0^k}p_{X_{0}^k}(x_0^k)\int \pi e(M_8+M_9y^2) \abs{f_{Y_{j-1}|X_{j-k}^{j}}(y|x_0^k)- f_{Y_{i-1}|X_{i-k}^{j}}(y|x_0^k))} dy\notag\\
&=\sum_{x_0^k}p_{X_{0}^k}(x_0^k)  \int f_{Y_{j-k}|X_{j-k}^{j}}(y_1|x_{0}^{k})f_{Y_{i-k}|X_{i-k}^{i}}(y_2|x_0^{k}) \notag\\
&\hspace{4mm} \times \int |f_{Y_{j-1}|X_{j-k}^{j},Y_{j-k}}(y|x_0^{k-1},y_1)- f_{Y_{i-1}|X_{i-k}^{j},Y_{j-2}}(y|x_0^{k-1},y_2))\pi e(M_8+M_9y^2) dy dy_1dy_2\notag\\
&=\sum_{x_0^k}p_{X_{0}^k}(x_0^k)  \int f_{Y_{j-k}|X_{j-k}^{j}}(y_1|x_{0}^{k})f_{Y_{i-k}|X_{i-k}^{i}}(y_2|x_0^{k})(y_1^2+y_2^2)\sigma_{B}^{2k}\pi e(M_8+3M_9) dy_1dy_2\label{indecomposable2}\\
&\le 2M_{3}\pi e(M_8+3M_9)\sigma_{B}^{2k}\notag,
\end{align}
where $(a)$ follows from the stationarity of $\{X_n\}$ and Assumptions (i),(ii),(iii),(iv) and~(\ref{indecomposable2}) follows from Statements $a)$ and $b)$ in Lemma~\ref{non-uniform-indecomposability} and $(b)$ follows from~(\ref{centropy1}) and
$$
H(Y_{2}|X_{1}^{2}=x_{k-1}^{k},Y_{1}=y)\ge 0.
$$

It then follows that
\begin{eqnarray}
&&\hspace{-1cm}\frac{1}{n+1}|H(Y_0^n|X_0^n)-H(Y_{m}^{m+n}|X_{m}^{m+n})|\notag\\
&\le& \frac{|H(Y_{0}^{k-1}|X_{0}^{k})-H(Y_{m}^{m+k-1}|X_m^{m+k-1})|}{n+1}+\frac{1}{n+1}\sum_{i=k}^{n}|H(Y_{i}|X_{i-1}^{i},Y_{i-1})-H(Y_{i+m}|X_{i+m-1}^{m+i},Y_{m+i-1})|\notag\\
&\le& \frac{2(k+1)\log 2\pi e M_3}{n+1}+\frac{1}{n+1}\sum_{i=k}^{n}|H(Y_{i}|X_{i-1}^{i},Y_{i-1})-H(Y_{i+m}|X_{i+m-1}^{m+i},Y_{m+i-1})|\notag\\
&\le& \frac{2(k+1)\log 2\pi e M_3}{n+1}+\frac{2M_{3}\pi e(M_8+3M_9)(n-k+1)\sigma_{B}^{2k}}{n+1}\notag\\
&\le& \frac{2(k+1)\log 2\pi e M_3}{n+1}+2M_{3}\pi e(M_8+3M_9)\sigma_{B}^{2k},
\end{eqnarray}
as desired
\end{proof}

The information capacity of the channel (\ref{sm-2}) is defined as
\begin{equation}  \label{sm-capacity}
C_{Shannon}=\lim_{n\to \infty}C_{n+1},
\end{equation}
where
$$
C_{n+1}=\frac{1}{n+1}\sup_{p(x_{0}^{n})} I(X_{0}^{n};Y_{0}^{n}).
$$
One consequence of Proposition~\ref{pr} is the existence of the limit in~(\ref{sm-capacity}).
\begin{theorem}
The limit in (\ref{sm-capacity}) exists and therefore $C_{Shannon}$ is well-defined for (\ref{sm-2}).
\end{theorem}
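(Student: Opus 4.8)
The plan is to prove that the sequence $a_{n+1}:=(n+1)\,C_{n+1}=\sup_{p(x_0^n)}I(X_0^n;Y_0^n)$ is \emph{almost superadditive}, and then to invoke a generalized Fekete lemma to conclude that $a_{n+1}/(n+1)=C_{n+1}$ converges. Note first that $a_{n+1}$ is finite and grows at most linearly, since $I(X_0^n;Y_0^n)\le H(Y_0^n)\le\frac{n+1}{2}\log 2\pi e M_3$ by Corollary~\ref{entropybd}; hence the prospective limit is a finite number.

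To obtain near-superadditivity, fix block lengths $p,q\ge 1$ and pick input distributions $p^{\ast}$ on $X_0^{p-1}$ and $q^{\ast}$ on $X_0^{q-1}$ that are $\delta$-optimal for the corresponding blocks. On the length-$(p+q)$ block use $p^{\ast}$ on $X_0^{p-1}$ and an independent copy of $q^{\ast}$ on $X_p^{p+q-1}$. The channel has the ``state'' property: given $(X_{p-1},Y_{p-1})$, the pair $(X_p^{p+q-1},Y_p^{p+q-1})$ is independent of $(X_0^{p-2},Y_0^{p-2})$, because $X_p^{p+q-1}$ is (by construction) independent of the past and the noise variables $E_{p-1}$ and $\{A_i,B_i,E_i,W_i,U_i\}_{i\ge p}$ driving the future are independent of the past. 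Hence the chain rule gives
\begin{align*}
I(X_0^{p+q-1};Y_0^{p+q-1})&\ge I(X_0^{p-1};Y_0^{p-1})+I(X_p^{p+q-1};Y_p^{p+q-1}\mid X_0^{p-1},Y_0^{p-1})\\
&=I(X_0^{p-1};Y_0^{p-1})+\EE\big[I(X_p^{p+q-1};Y_p^{p+q-1}\mid X_{p-1},Y_{p-1})\big].
\end{align*}
The first term is at least $a_p-\delta$. For the second, the argument proving part a) of Proposition~\ref{pr} applies verbatim with unequal block lengths --- it uses only Lemma~\ref{non-uniform-indecomposability}c), the shift-invariance of the noise, and $I\le(\text{block length})\log M$ --- yielding, for every $k\le q-1$ and every state $(x,y)$,
$$
\big|I(\hat X_0^{q-1};\hat Y_0^{q-1})-I(X_p^{p+q-1};Y_p^{p+q-1}\mid X_{p-1}=x,Y_{p-1}=y)\big|\le 2(k+1)\log M+(q-1-k)\big(\sigma_A^2x^2+2\sigma_B^2(y^2+\sigma_E^2)\big)\sigma_B^{2k}\log M,
$$
where $\hat X_0^{q-1}\sim q^{\ast}$. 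Taking $\EE$ over $(X_{p-1},Y_{p-1})$ and using $\EE[X_{p-1}^2]\le M_0^2$ (finite alphabet) and $\EE[Y_{p-1}^2]\le M_3$ (Corollary~\ref{uniformintegrable}) bounds the second term below by $a_q-\delta-2(k+1)\log M-q\,\sigma_B^{2k}D$, with $D:=\big(\sigma_A^2M_0^2+2\sigma_B^2(M_3+\sigma_E^2)\big)\log M$. Since the left side above is at most $a_{p+q}$, since $\delta>0$ is arbitrary, and since we may perform instead the split that ``continues'' the shorter block, we arrive at
$$
a_{p+q}\ge a_p+a_q-2(k+1)\log M-\min(p,q)\,\sigma_B^{2k}D\qquad\text{for all }1\le k\le \min(p,q)-1 .
$$

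Choosing $k=\min\{\,\min(p,q)-1,\ \lceil 3\log_{1/\sigma_B}(\min(p,q)+2)\rceil\,\}$ forces $\min(p,q)\,\sigma_B^{2k}\le 1$, so the error reduces to $\phi(\min(p,q))$ for an explicit non-decreasing $\phi(t)=O(\log t)$ (for $\min(p,q)$ small the trivial bound $a_{p+q}\ge a_p+a_q-\min(p,q)\log M$ is absorbed into $\phi$). Thus $a_{p+q}\ge a_p+a_q-\phi(\min(p,q))$ with $\sum_n\phi(n)/n^2<\infty$, and the standard generalization of Fekete's subadditive lemma to sequences superadditive up to such a slowly growing error (de Bruijn--Erd\H{o}s / Hammersley) yields that $\lim_n a_{n+1}/(n+1)$ exists; this limit is precisely $\lim_n C_{n+1}=C_{Shannon}$. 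The main obstacle is the non-uniformity flagged in the introduction: the indecomposability error depends on the unbounded output state $y$, so the concatenation works only after averaging over the state and invoking $\EE[Y_n^2]\le M_3$, and one must then balance the growing ``start-up'' cost $2(k+1)\log M$ against the geometrically decaying remainder $q\,\sigma_B^{2k}D$ so that the net error stays of order $\log q$ --- small compared with $q$ yet still summable against $1/q^2$.
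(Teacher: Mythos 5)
Your proposal is correct and follows essentially the same route as the paper's proof: establish near-superadditivity of $a_{n+1}=(n+1)C_{n+1}$ by concatenating two independent blocks, using the chain rule together with the state-Markov structure and then Proposition~\ref{pr}~a) (which rests on the indecomposability Lemma~\ref{non-uniform-indecomposability}~c)) to transfer the conditional mutual information of the second block back to an unconditional one, averaging the state-dependent error against $\EE[X_{p-1}^2]\le M_0^2$ and $\EE[Y_{p-1}^2]\le M_3$. The only real divergence is the final limit step: the paper fixes an arbitrary $\eps_0>0$, picks $k$ and then $t$ large so the total error drops below $\eps_0$, and appeals to Gallager's Lemma 2 on page 112, whereas you make the error fully explicit by tying $k\sim\log\min(p,q)$, obtain an $O(\log\min(p,q))$ correction, and invoke the de Bruijn--Erd\H{o}s/Hammersley refinement of Fekete's lemma with a summable $\phi(n)/n^2$ error. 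Both are valid; yours is a bit more self-contained and, as a bonus, avoids the paper's implicit (and unjustified) assertion that the suprema $C_{s+1}$, $C_{t+1}$ are actually achieved by using $\delta$-optimal input laws instead.
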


\begin{proof}
Fix $s, t \geq 0$, and let $p^*$ and $q^*$ be input distributions that achieve $C_{s+1}$ and $C_{t+1}$, respectively. From now on, we assume
\begin{equation} \label{p-q}
X_0^{s+t+1} \sim p^*(x_0^s) \times q^*(x_{s+1}^{s+t+1});
\end{equation}
in other words, $X_0^s$ and $X_{s+1}^{s+t+1}$ are independent and distributed according to $p^*$ and $q^*$, respectively. Using (\ref{p-q}) and the assumptions of the channel (\ref{sm-2}), we have
$$
I(X_{s+1}^{s+t+1};Y_{s+1}^{s+t+1}|X_0^s,Y_0^s)=I(X_{s+1}^{s+t+1};Y_{s+1}^{s+t+1}|X_s,Y_s).
$$
Since
$$
I(X_{s+1}^{s+t+1};Y_{s+1}^{s+t+1}|X_s,Y_s)=\sum_{x_s} p_{X_s}(x_s)\int f_{Y_s|X_s}(y_s|x_s) I(X_{s+1}^{s+t+1};Y_{s+1}^{s+t+1}|x_s,y_s) {d}y_s,
$$
we have
\begin{align}
I(X_{s+1}^{s+t+1};Y_{s+1}^{s+t+1}|X_0^s,Y_0^s)&=I(X_{s+1}^{s+t+1};Y_{s+1}^{s+t+1}|X_s,Y_s)\nonumber\\
&=\sum_{x_s}p_{X_s}(x_s)\int f_{Y_s|X_s}(y_s|x_s) I(X_{s+1}^{s+t+1};Y_{s+1}^{s+t+1}|x_s,y_s) {d}y_s \notag\\
&{}\stackrel{(a)}{\ge} \sum_{x_s}p_{X_s}(x_s)\int f_{Y_s|X_s}(y_s|x_s)\left\{ I(X_0^t;Y_0^t)-2(k+1)\log M\right. \nonumber\\
&{}\left.\hspace{4.5mm}-(t-k)\log M\times( \sigma_A^2x_{s}^{2}+2\sigma_B^2(y_{s}^{2}+\sigma_E^2))\sigma_B^{2k}\right\} {d}y_s \nonumber\\
&=I(X_0^t;Y_0^t)-2(k+1)\log M\notag\\
&{}\hspace{4.5mm}-(t-k)\log M\times( \sigma_A^2\EE[X_{s}^{2}]+2\sigma_B^2(\EE[Y_{s}^{2}]+\sigma_E^2))\sigma_B^{2k}\nonumber\\
&\ge I(X_0^t;Y_0^t)-2(k+1)\log M\nonumber\\
&{}\hspace{4.5mm}-(t-k)\log M\times( \sigma_A^2M_0+2\sigma_B^2(M_3+\sigma_E^2))\sigma_B^{2k},
\end{align}
where $(a)$ follows from Statement $a)$ in Proposition~\ref{pr}.
Therefore,
\begin{align}
(s+t+2)C_{s+t+2}&\ge I(X_{0}^{s+t+1};Y_{0}^{s+t+1})\nonumber\\
&\ge I(X_{0}^{s};Y_{0}^{s})+I(X_{s+1}^{s+t+1};Y_{s+1}^{s+t+1}|X_0^s,Y_0^s)\nonumber\\
&\ge (s+1)C_{s+1}+ (t+1)C_{t+1}-2(k+1)\log M\nonumber\\
&{}\hspace{4.5mm}-(t-k)\log M\times( \sigma_A^2M_0+2\sigma_B^2(M_3+\sigma_E^2))\sigma_B^{2k}.
\end{align}
So,
\begin{align}\label{super additive}
C_{s+t+2}&\ge \frac{s+1}{s+1+t+1}C_{s+1}+\frac{t+1}{s+1+t+1}C_{t+1}\nonumber\\
&\hspace{4.5mm}-\frac{2(k+1)\log M}{t+s+2}-\log M\times( \sigma_A^2M_0+2\sigma_B^2(M_3+\sigma_E^2))\sigma_B^{2k}\nonumber.
%&\ge \frac{s+1}{s+1+t+1}C_{s+1}+\frac{t+1}{s+1+t+1}C_{t+1}-\eps
\end{align}
For any fixed $\eps_{0}>0$, let $k$ be such that $$\log M\times( \sigma_A^2M_0+2\sigma_B^2(M_3+\sigma_E^2))\sigma_B^{2k}\le \frac{\eps_0}{2}$$
and  then let $t> k$ be such that
$$
\frac{2(k+1)\log M}{t+s+2}\le \frac{\eps_0}{2}.
$$
Then for $k$ and $t$ chosen above, we obtain that
$$
(s+t+2)\left\{C_{s+t+2}-\frac{\eps_0}{s+t+2}\right\} \ge(s+1) \left\{C_{s+1}-\frac{\eps_0}{s+1}\right\}+(t+1)\left\{C_{t+1}-\frac{\eps_0}{t+1}\right\}.
$$
By Lemma $2$ on Page $112$ of ~\cite{gallagerbook}, $\lim\limits_{n\to \infty}\left\{C_{n}-\eps_0/n\right\}$ exists and furthermore
$$
\lim_{n\to \infty}C_{n}=\lim_{n\to \infty}\left\{C_{n}-\frac{\eps_0}{n}\right\}=\sup_{n}\left\{C_{n}-\frac{\eps_0}{n}\right\}=\sup_{n \to \infty} C_n.
$$
The proof of the theorem is then complete.
\end{proof}

\section{Asymptotic Mean Stationarity}

One of the main tools that will be used in this work is the so-called asymptotic mean stationarity~\cite{grayprobability}, a natural generalization of stationarity, mostly due to the fact that the output process of our channel is asymptotically mean stationary, rather than stationary. In this section, we give a brief review of notions and results relevant to asymptotic mean stationarity.

Let $\{Y_n\}$ be a real-valued random process over the probability space $(\Omega, \mathcal{F}, P)$. And for $n \in \mathbb{N} \triangleq \{0, 1, 2, \dots\}$, define $\hat{Y}_n: \mathbb{R}^{\mathbb{N}} \to \mathbb{R}$ as the usual coordinate function on $\mathbb{R}^{\mathbb{N}}$ by
$$
\hat{Y}_{n}(x)=x_n \mbox{ for } x=(x_0, x_1, x_2, \dots) \in \mathbb{R}^{\mathbb{N}}.
$$
Let $\mathcal{R}^{\mathbb{N}}$ denote the product Borel $\sigma$-algebra on $\mathbb{R}^{\mathbb{N}}$. By Kolmogorov's extension theorem~\cite{durrettprobability}, there exists an induced probability measure $P_Y$ on $(\mathbb{R}^{\mathbb{N}},\mathcal{R}^{\mathbb{N}})$ such that for any $n \in \mathbb{N}$ and any Borel set $B \subset \mathbb{R}^n$,
$$
P((Y_1,\cdots, Y_n)\in B)=P_Y((\hat{Y}_1,\cdots,\hat{Y}_n)\in B).
$$
So, for ease of presentation only, we sometimes treat the process $\{Y_n\}$ as a function defined as above on the sequence space $\mathbb{R}^{\mathbb{N}}$ equipped with the product Borel $\sigma$-algebra $\mathcal{R}^{\mathbb{N}}$ and the induced measure $P_Y$.

Let $T: \mathbb{R}^{\mathbb{N}} \to \mathbb{R}^{\mathbb{N}}$ be the {\em left shift operator} defined by
$$
Tx=(x_1,x_2,\cdots) \mbox{ for } x=(x_0,x_1,x_2,\cdots) \in \mathbb{R}^{\mathbb{N}}.
$$
A probability measure $\mu$ on $\mathbb{R}^{\mathbb{N}}$ is said to be {\em asymptotically mean stationary} if there exists a probability measure $\bar{\mu}$ such that for any Borel set $A\subset R^{\mathbb{N}}$,
\begin{equation}\label{defams}
\bar{\mu}(A)=\lim_{n\to \infty}\frac{1}{n}\sum_{i=1}^{n}\mu(T^{-i}A);
\end{equation}
And $\bar{\mu}$ in~(\ref{defams}), if it exists, is said to be the {\em stationary mean} of $\mu$. The process $\{Y_n\}$ is said to be {\em asymptotically mean stationary} if the associated measure $P_Y$ is {\em asymptotically mean stationary}.

In the remainder of this paper, we will use subscripted probability measure to emphasize the one with respect to which an expectation is computed; for instance, for a random variable $X$,
$$
\EE_{\mu}(X)=\int X d\mu, \; \mbox{ and } \; H_{\mu}(X) = \int -\log f_X(X) d\mu.
$$

The following theorem gives an analog of Birkhoff's ergodic theorem for asymptotically mean stationary processes.
\begin{theorem}\cite{grayprobability}\label{ergodicthm}
Suppose that $P_Y$ is asymptotically mean stationary with stationary mean $\bar{P}_Y$. If $\EE_{\bar{P}_Y}[|Y_0|] < \infty$, then
$$
\lim_{n\to \infty}\frac{1}{n}\sum_{i=1}^{n} Y_i\quad \mbox{exists} \;\; P_Y-a.s.
$$
\end{theorem}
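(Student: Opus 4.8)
The statement is the pointwise ergodic theorem of Gray for asymptotically mean stationary (AMS) measures, so the plan is to reduce it to the classical Birkhoff theorem applied to the \emph{stationary} measure $\bar{P}_Y$, and then to push the resulting almost-sure convergence back onto $P_Y$ using only the defining property of asymptotic mean stationarity. Throughout I will work with the coordinate identification already set up above, i.e. regard $\{Y_n\}$ as the sequence of coordinate functions $\{\hat{Y}_n\}$ on $(\mathbb{R}^{\mathbb{N}},\mathcal{R}^{\mathbb{N}},P_Y)$, with $T$ the left shift.

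First I would record two elementary consequences of (\ref{defams}). (i) The stationary mean $\bar{P}_Y$ is $T$-invariant: for any $A\in\mathcal{R}^{\mathbb{N}}$,
$$
\bar{P}_Y(T^{-1}A)=\lim_{n\to\infty}\frac{1}{n}\sum_{i=1}^{n}P_Y(T^{-(i+1)}A),
$$
and this Cesàro average differs from $\frac{1}{n}\sum_{i=1}^{n}P_Y(T^{-i}A)$ by $\tfrac1n\bigl(P_Y(T^{-(n+1)}A)-P_Y(T^{-1}A)\bigr)$, which vanishes because $P_Y$ is a probability measure; hence $\bar{P}_Y(T^{-1}A)=\bar{P}_Y(A)$. (ii) $P_Y$ and $\bar{P}_Y$ agree on every $T$-invariant event: if $T^{-1}A=A$ then $T^{-i}A=A$ for all $i$, so the average in (\ref{defams}) is identically $P_Y(A)$, forcing $\bar{P}_Y(A)=P_Y(A)$.

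Next, $(\mathbb{R}^{\mathbb{N}},\mathcal{R}^{\mathbb{N}},\bar{P}_Y,T)$ is a measure-preserving system by (i), and the hypothesis $\EE_{\bar{P}_Y}[|Y_0|]<\infty$ says exactly that $\hat{Y}_0\in L^1(\bar{P}_Y)$. Birkhoff's ergodic theorem then produces a Borel set on which $\frac1n\sum_{i=0}^{n-1}\hat{Y}_0(T^i x)=\frac1n\sum_{i=0}^{n-1}\hat{Y}_i(x)$ converges $\bar{P}_Y$-a.s.; deleting one summand and shifting the index by one alters the average by a term tending to $0$, so the averages $\frac1n\sum_{i=1}^{n}\hat{Y}_i$ converge on that same set. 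Consider the event $G=\{x:\lim_n \frac1n\sum_{i=1}^{n}\hat{Y}_i(x)\text{ exists}\}$, which thus has $\bar{P}_Y(G)=1$. It is $T$-invariant: $x\in T^{-1}G$ iff $\frac1n\sum_{i=1}^{n}\hat{Y}_{i+1}(x)=\frac{n+1}{n}\cdot\frac{1}{n+1}\sum_{i=1}^{n+1}\hat{Y}_i(x)-\frac1n\hat{Y}_1(x)$ converges, and since $\frac{n+1}{n}\to1$ and $\frac1n\hat{Y}_1(x)\to0$ this happens iff $x\in G$. Applying consequence (ii) to the invariant set $G$ gives $P_Y(G)=1$, which is the claim.

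The only subtle point is the transfer step: a property holding $\bar{P}_Y$-a.s. need not hold $P_Y$-a.s. in general, and it works here precisely because the convergence event $G$ is shift-invariant — so consequence (ii) is the real crux. Everything else is routine manipulation of Cesàro sums and a straightforward invocation of the classical Birkhoff theorem; no new estimates are needed.
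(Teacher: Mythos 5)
The paper does not prove this theorem; it simply cites it from Gray's book, so there is no ``paper proof'' to compare against. Your argument is correct and is the standard derivation of the AMS ergodic theorem: both preparatory observations (that $\bar{P}_Y$ is $T$-invariant, and that $P_Y$ and $\bar{P}_Y$ agree on shift-invariant events) follow cleanly from the Ces\`aro definition, the application of Birkhoff under $\bar{P}_Y$ is legitimate since $\hat{Y}_0\in L^1(\bar{P}_Y)$, and the crucial transfer step — observing that the convergence event $G$ is itself $T$-invariant and invoking the agreement property — is exactly the right move and is carried out carefully (including the index-shifting bookkeeping that shows $T^{-1}G=G$). The only thing worth flagging is that your proof tacitly needs the Birkhoff limit to be finite $\bar{P}_Y$-a.s.\ for $G$ to have full $\bar{P}_Y$-measure, but that is automatic since Birkhoff's theorem delivers an $L^1$ limit; so the argument is complete.
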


The following two theorems relate convergences with respect to the measure $P_Y$ and its asymptotic mean $\bar{P}_Y$.
\begin{theorem}\cite{grayprobability}\label{bothergodic}
If $P_Y$ is an asymptotically mean stationary with stationary mean $\bar{P}_Y$, then
$$
\lim_{n\to \infty} \frac{1}{n} \sum_{i=1}^n Y_i \mbox{ exists } P_Y-a.s. \mbox{ if and only if } \quad \lim_{n\to \infty} \frac{1}{n}\sum_{i=1}^n Y_i \mbox{ exists } \bar{P}_Y-a.s.
$$
Also, if the limiting function as above is integrable (with respect to $P_Y$ or $\bar{P}_Y$), then
$$
\EE_{P_Y}\left[\lim_{n\to \infty} \frac{1}{n}\sum_{i=1}^n Y_i\right]=\EE_{\bar{P}_Y}\left[\lim_{n\to \infty} \frac{1}{n}\sum_{i=1}^n Y_i\right].
$$
\end{theorem}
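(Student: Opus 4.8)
The plan is to derive both halves of the statement from a single structural fact: the measure $\mu \triangleq P_Y$ and its stationary mean $\bar\mu \triangleq \bar P_Y$ coincide on the $\sigma$-algebra $\mathcal I$ of strictly $T$-invariant events, i.e.\ those $A \in \mathcal R^{\mathbb N}$ with $T^{-1}A = A$. First I would record this fact, which is immediate from~(\ref{defams}): for such an $A$ one has $T^{-i}A = A$ for every $i$, so $\frac1n\sum_{i=1}^n \mu(T^{-i}A) = \mu(A)$ for all $n$ and hence $\bar\mu(A) = \mu(A)$; in particular $\bar\mu$ is $T$-invariant.

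Next I would pin down the relevant invariant objects. Put
\[
E \;\triangleq\; \left\{x \in \mathbb R^{\mathbb N} \;:\; \lim_{n\to\infty}\frac1n\sum_{i=1}^n \hat Y_i(x) \text{ exists}\right\}.
\]
Since $\hat Y_i(Tx) = x_{i+1} = \hat Y_{i+1}(x)$, for every $x$
\[
\frac1n\sum_{i=1}^n \hat Y_i(Tx) \;=\; \frac{n+1}{n}\cdot\frac{1}{n+1}\sum_{i=1}^{n+1}\hat Y_i(x)\;-\;\frac{\hat Y_1(x)}{n},
\]
and because $\hat Y_1(x)/n \to 0$ pointwise, the left-hand side converges exactly when $\frac{1}{n+1}\sum_{i=1}^{n+1}\hat Y_i(x)$ does, with the same limit; a symmetric rearrangement gives the converse implication. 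Hence $Tx \in E \Leftrightarrow x \in E$, i.e.\ $T^{-1}E = E$, so $E \in \mathcal I$, and the opening fact gives $\mu(E) = \bar\mu(E)$. This is precisely the first assertion: the Ces\`aro limit exists $P_Y$-a.s.\ iff it exists $\bar P_Y$-a.s.

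For the second assertion, set $g \triangleq \lim_n \frac1n\sum_{i=1}^n \hat Y_i$ on $E$ and $g \triangleq 0$ off $E$; the displayed identity shows $g\circ T = g$ everywhere, so every level set $\{g \le c\}$ is strictly $T$-invariant and $g$ is $\mathcal I$-measurable. I would then show, for any nonnegative $\mathcal I$-measurable $h$, that $\EE_{P_Y}[h] = \EE_{\bar P_Y}[h]$ (as elements of $[0,\infty]$): approximate $h$ from below by the simple functions $\sum_{k\ge 0} k2^{-n}\,\mathbf 1_{\{k2^{-n}\le h<(k+1)2^{-n}\}}$, whose defining sets lie in $\mathcal I$, apply the opening fact term by term, and pass to the limit by monotone convergence. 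Taking $h=|g|=g^{+}+g^{-}$ then shows $g$ is $P_Y$-integrable iff it is $\bar P_Y$-integrable; taking $h=g^{+}$ and $h=g^{-}$ and subtracting gives $\EE_{P_Y}[g]=\EE_{\bar P_Y}[g]$ whenever either side is finite, which is the second assertion.

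The step I expect to be the main obstacle is the invariance check for $E$ (hence for $g$): the Ces\`aro averages are shift-invariant only up to the single boundary term $\hat Y_1(x)/n$, so one must verify carefully that deleting or inserting that term affects neither the existence nor the value of the limit. This is what makes $E$ and $g$ \emph{strictly}, not merely almost, $T$-invariant, and strictness is essential --- the agreement $\mu|_{\mathcal I} = \bar\mu|_{\mathcal I}$ is guaranteed only on genuinely invariant sets.
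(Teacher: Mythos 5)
Your proof is correct: the paper itself gives no argument for this statement (it is imported from Gray's book via the citation), and your route --- showing that an AMS measure and its stationary mean agree on the strictly $T$-invariant $\sigma$-algebra, checking that the convergence set $E$ and the limit function $g$ are strictly invariant up to the vanishing boundary term $\hat Y_1(x)/n$, and then transferring expectations of $\mathcal I$-measurable functions by simple-function approximation --- is exactly the standard proof underlying Gray's result. One cosmetic remark: the aside ``in particular $\bar\mu$ is $T$-invariant'' does not follow from agreement on invariant sets (it needs its own one-line Ces\`aro shift argument), but since you never use it, this does not affect the validity of the proof.
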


In the following, we will use $\bar{f}_{Y_0^n}(\cdot)$ to denote the density of the probability measures $\bar{P}_Y(Y_0^n\in \cdot)$ with respect to the $(n+1)$-dimensional Lebesgue measure on $\mathbb{R}^{n+1}$.

\begin{theorem}\cite{barron} \label{gsbm}
Suppose that $P_Y$ is asymptotically mean stationary with stationary mean $\bar{P}_Y$, and suppose that for each $n$, there exists $k=k(n)$ such that $I_{P_Y}(Y_1^n; Y_{k+n+1}^{\infty}|Y_{n+1}^{n+k})$ is finite. If for some shift invariant random variable $Z$ (i.e., $Z=Z \circ T$),
$$
\lim_{n\to \infty}\frac{1}{n}\log \bar{f}(X_1^n)= Z, \;\; \bar{P}_Y-a.s.,
$$
then we have
$$
\lim_{n\to \infty}\frac{1}{n}\log {f}(X_1^n)= Z, \;\; P_Y-a.s.
$$
\end{theorem}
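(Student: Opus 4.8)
The plan is to compare the normalized log-density of $Y_1^n$ under $P_Y$ with the one under its stationary mean $\bar P_Y$ by means of a likelihood-ratio martingale, and then to promote the assumed $\bar P_Y$-almost sure convergence to a $P_Y$-almost sure statement by exploiting the fact that an asymptotically mean stationary measure agrees with its stationary mean on shift-invariant events. Write $\mathcal F_n=\sigma(Y_1,\dots,Y_n)$. Since the $n$-block densities of both $P_Y$ and $\bar P_Y$ are everywhere strictly positive (being Gaussian mixtures), the restrictions $P_Y|_{\mathcal F_n}$ and $\bar P_Y|_{\mathcal F_n}$ are mutually absolutely continuous, and $\Lambda_n:=\bar f_{Y_1^n}(Y_1^n)/f_{Y_1^n}(Y_1^n)$ is the Radon--Nikodym derivative $d\bar P_Y|_{\mathcal F_n}/dP_Y|_{\mathcal F_n}$.

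First I would observe, either from the consistency of Radon--Nikodym derivatives along an increasing filtration or directly from the chain rule for densities, that $\{\Lambda_n\}$ is a nonnegative $P_Y$-martingale; by Doob's convergence theorem $\Lambda_n\to\Lambda_\infty$ $P_Y$-a.s.\ with $\Lambda_\infty<\infty$, and by symmetry $\{\Lambda_n^{-1}\}$ is a nonnegative $\bar P_Y$-martingale. The role of the finiteness hypothesis on $I_{P_Y}(Y_1^n;Y_{k+n+1}^\infty\mid Y_{n+1}^{n+k})$ is exactly to prevent this limit from degenerating, i.e.\ to force $\Lambda_\infty>0$ $P_Y$-a.s.\ (equivalently, $P_Y\ll\bar P_Y$ on $\mathcal F_\infty$): this is the finite-memory/mixing content of the hypothesis, and it gives
$$
\frac1n\log\Lambda_n=\frac1n\log\bar f(Y_1^n)-\frac1n\log f(Y_1^n)\longrightarrow 0\qquad P_Y\text{-a.s.}
$$

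It then remains to upgrade the hypothesis $\frac1n\log\bar f(Y_1^n)\to Z$ from $\bar P_Y$-a.s.\ to $P_Y$-a.s. Here I would use the stationarity of $\bar P_Y$: as functions one has $\bar f_{Y_1^{n+1}}=\bar f_{Y_2^{n+1}}\cdot\bar f_{Y_1\mid Y_2^{n+1}}$, so the value of $\frac1n\log\bar f_{Y_1^n}$ at $TY$ differs from $\frac1{n+1}\log\bar f_{Y_1^{n+1}}(Y_1^{n+1})$ only through the single conditional term $\frac1n\log\bar f_{Y_1\mid Y_2^{n+1}}(Y_1\mid Y_2^{n+1})$, up to the harmless factor $\tfrac{n+1}{n}\to1$. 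Bounding this conditional log-density by indecomposability/finite-information estimates shows it is $o(n)$ $P_Y$-a.s., so the event $E=\{\lim_n\frac1n\log\bar f(Y_1^n)=Z\}$ is $T$-invariant modulo $P_Y$- and $\bar P_Y$-null sets. Because $P_Y$ is asymptotically mean stationary with stationary mean $\bar P_Y$, the two measures coincide on the invariant $\sigma$-algebra, so $P_Y(E)=\bar P_Y(E)=1$; combined with the display above, this yields $\frac1n\log f(Y_1^n)\to Z$ $P_Y$-a.s., as claimed.

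The hard part is the last paragraph together with the non-degeneracy $\Lambda_\infty>0$: both rest on the hypothesis $I_{P_Y}(Y_1^n;Y_{k+n+1}^\infty\mid Y_{n+1}^{n+k})<\infty$, which is precisely what keeps the $n$-block laws of $P_Y$ and $\bar P_Y$ from drifting apart at a linear rate and what makes the limiting event $E$ essentially shift-invariant. In the application to the channel~(\ref{sm-2}), this hypothesis would itself be verified from the decay of mutual information across a gap that is implicit in Lemma~\ref{non-uniform-indecomposability}.
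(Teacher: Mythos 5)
The paper does not prove this theorem---it is quoted from Barron~\cite{barron}---so the only thing to evaluate is whether your sketch is self-contained, and the two steps you yourself label ``the hard part'' are genuine gaps, not merely deferred routine work. The martingale convergence $\Lambda_n\to\Lambda_\infty<\infty$ is standard, but the theorem needs $\frac{1}{n}\log\Lambda_n\to0$ $P_Y$-a.s., and this does not follow from Doob alone: $\Lambda_\infty$ may vanish on a set of positive $P_Y$-measure, and you never actually derive the asserted nondegeneracy $\Lambda_\infty>0$ from the conditional-mutual-information hypothesis. Calling it ``the finite-memory/mixing content of the hypothesis'' names the role the hypothesis ought to play but does not play it; the quantitative link between the finiteness of $I_{P_Y}(Y_1^n;Y_{k+n+1}^\infty\mid Y_{n+1}^{n+k})$ and the sub-exponential behaviour of the likelihood ratio $\Lambda_n$ is the actual substance of the theorem and has to be argued, not announced.

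The second gap is the transfer of the $\bar P_Y$-a.s.\ hypothesis to a $P_Y$-a.s.\ statement via shift-invariance of the event $E$. Your algebraic reduction of the shift-discrepancy to the single conditional term is fine, but the needed conclusion $\frac{1}{n}\log\bar f(Y_1\mid Y_2^{n+1})\to0$ $P_Y$-a.s.\ is precisely the kind of $\bar P_Y\to P_Y$ transfer the theorem asserts: under $\bar P_Y$ it is L\'{e}vy's martingale convergence theorem, but under $P_Y$ you would be invoking the theorem's conclusion inside its own proof. There is also a smaller technical point: the identity $P_Y=\bar P_Y$ holds on the $\sigma$-algebra of events $A$ with $T^{-1}A=A$ exactly; for an event invariant only up to null sets one must additionally show that the exceptional set is null under both measures, which your argument does not address. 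In short, the skeleton is reasonable and not unlike what one would first try, but both load-bearing claims are asserted rather than proved, and closing either one requires essentially the full force of the result.
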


\section{Asymptotic Equipartition Property}\label{existence}

Throughout this section, we assume that the input process $\{X_n\}$ is a stationary and ergodic process. As in the previous section, for ease of presentation only, we can assume the process $\{X_n, Y_n\}$ is defined on the sequence space $\mathcal{X}^{\mathbb{N}} \times \mathbb{R}^{\mathbb{N}}$ equipped with the natural product $\sigma$-algebra. Let $P_{XY}$ denote the probability measure on $\mathcal{X}^{\mathbb{N}} \times \mathbb{R}^{\mathbb{N}}$ induced by $\{X_n, Y_n\}$. We will show in this section that $P_{XY}$ is asymptotically mean stationary with stationary mean $\bar{P}_{XY}$, which can be used to establish the asymptotic equipartition property of $\{Y_n\}$ and $\{X_n, Y_n\}$.

For notational simplicity, we often omit the subscripts from the measure associated with a given process when the meaning is clear from the context; e.g., $P_{XY}$ may be simply written as $P$. As opposed to that under the measure $P$, an expectation under $\bar{P}$ will always be emphasized by an extra subscripted $\bar{P}$, i.e., $\EE_{\bar{P}}$. Here, we note that $P$ is the ``original'' meansure, and $\EE_P$ in this section is the same as $\EE$ in other sections.

\begin{theorem}\label{amsproperty}
${P}_Y(\cdot)$ and $P_{XY}(\cdot)$ are asymptotically mean stationary and ergodic.
\end{theorem}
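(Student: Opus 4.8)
The plan is to establish asymptotic mean stationarity for $P_{XY}$ first, since $P_Y$ is then obtained as a coordinate projection (marginalization preserves asymptotic mean stationarity). The key structural fact is that the input process $\{X_n\}$ is stationary, so the only source of non-stationarity is the ``boundary'' at time $0$ in the recursion (\ref{sm-2}): the channel has no intercell interference at $n=0$, whereas a stationary version would be driven by a two-sided history. First I would introduce the stationary channel $\{\check{X}_n, \check{Y}_n\}$ obtained by running the recursion $\check{Y}_n = \check{X}_n + A_n \check{X}_{n-1} + B_n(\check{Y}_{n-1} - E_{n-1}) + W_n + U_n$ for all $n \in \mathbb{Z}$ with the same stationary input law; existence and uniqueness of such a stationary solution follow from the contraction estimate $\sigma_B^2 < 1$ (the series (\ref{ycv}) converges since $\EE[|B_i|^\beta] < 1$), and this furnishes the natural candidate stationary mean $\bar{P}_{XY}$ as the law of $\{\check{X}_n, \check{Y}_n\}$ restricted to $\mathbb{N}$.

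Next I would verify the defining limit (\ref{defams}): for a fixed cylinder event $A$ depending only on coordinates $0, \dots, \ell$, one has $P_{XY}(T^{-i}A) = P(\{X_{i}^{i+\ell}, Y_i^{i+\ell}\} \in A)$, and I claim this converges to $\bar{P}_{XY}(A)$ as $i \to \infty$, which is stronger than Cesàro convergence and hence suffices. The convergence is exactly an indecomposability statement: the law of $(X_i^{i+\ell}, Y_i^{i+\ell})$ started from the degenerate condition at time $0$ should be close to the law of the same block in the two-sided stationary chain. Here I would invoke Lemma~\ref{non-uniform-indecomposability}, in particular Statement d) together with Statements a) and b), to bound the total variation distance between $f_{Y_i|X_0^i}$ (the one-sided channel) and the corresponding conditional density in the stationary chain by something of order $\sigma_B^{2i}$ times a second moment, which is uniformly controlled by $M_3$ from Corollary~\ref{uniformintegrable}; propagating through the finitely many coordinates of the block $i, \dots, i+\ell$ and summing keeps the bound $O(\sigma_B^{2i}) \to 0$. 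Since cylinder sets generate the product $\sigma$-algebra and the bounds are uniform, a standard $\pi$-$\lambda$ / monotone class argument upgrades this to all Borel sets, giving asymptotic mean stationarity with $\bar{P}_{XY}$ the stationary mean; the same argument applied to the $Y$-marginal gives it for $P_Y$.

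For ergodicity of $\bar{P}_{XY}$, I would argue that the two-sided stationary process $\{\check{X}_n, \check{Y}_n\}$ is ergodic: it is a stationary coding (a measurable, shift-commuting factor) of the i.i.d. — hence mixing, hence ergodic — process $\{(X_n, A_n, B_n, E_n, W_n, U_n)\}$, because $\check{Y}_n = \sum_{j \ge 0} \hat{W}_{n-j} \prod_{l=n-j+1}^{n} B_l$ is an a.s.-convergent measurable function of the innovations up to time $n$, and $\{X_n\}$ is assumed ergodic and independent of the noise (so the joint driving process is ergodic as a product of independent ergodic pieces, one of which is i.i.d.). A stationary factor of an ergodic process is ergodic, so $\bar{P}_{XY}$ is ergodic, and therefore so is its $Y$-coordinate projection $\bar{P}_Y$. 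I expect the main obstacle to be the careful bookkeeping in the first step — constructing the two-sided stationary chain rigorously (a.s. convergence of the infinite backward product series, uniqueness of the stationary law, and matching it up as the genuine limit of the shifted one-sided laws rather than merely their Cesàro average) — and in making the total-variation propagation through the block of coordinates $i, \dots, i+\ell$ fully rigorous using only Lemma~\ref{non-uniform-indecomposability} as stated; the ergodicity step, by contrast, should be a clean appeal to the ``stationary factor of an ergodic process is ergodic'' principle once the stationary representation is in hand.
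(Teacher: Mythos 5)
Your proposal is correct but follows a genuinely different route from the paper. For asymptotic mean stationarity the paper never constructs the stationary mean explicitly: it simply bounds $|P(Y_{k+1}^{k+n}\in A)-P(Y_{k}^{k+n-1}\in A)|$ by a constant times $\sigma_B^{2k}$ using Statement c) of Lemma~\ref{non-uniform-indecomposability}, so the shifted cylinder probabilities converge (exponentially fast) and the Ces\`aro limits exist, with the same argument repeated for $P_{XY}$. You instead build the two-sided stationary solution from the a.s.-convergent backward series and show the shifted one-sided laws converge to it; this is essentially the construction the paper defers to Appendix~\ref{Appendix-C} in proving Lemma~\ref{amsexistence}, and your extra bookkeeping is real but routine --- note that you will need a conditional second-moment bound for the two-sided chain (the paper's Corollary~\ref{uniformintegrable} is stated only for the one-sided chain, though the same computation applies), after which the total-variation propagation through a block is immediate because both chains share the same transition kernel given the input block and the previous output. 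The larger divergence is in the ergodicity step: the paper verifies the Ces\`aro mixing condition~(\ref{ergodicity}) directly, using Statements c) and d) of Lemma~\ref{non-uniform-indecomposability} together with the ergodicity of $\{X_n\}$, whereas you get ergodicity of the stationary mean softly, as a shift-commuting factor of the ergodic driving process (the stationary ergodic input jointly with the independent i.i.d., hence mixing, noise processes --- the product being ergodic precisely because one factor is mixing). That argument is valid and arguably cleaner, and your construction has the side benefit of identifying $\bar{P}_{XY}$ explicitly, which the paper effectively needs later anyway; but to conclude the theorem as stated you should add the one-line observation that an AMS measure and its stationary mean agree on shift-invariant sets, since for invariant $A$ one has $\bar{P}(A)=\lim_{n\to\infty}\frac{1}{n}\sum_{i=1}^{n}P(T^{-i}A)=P(A)$, so ergodicity of $\bar{P}_{XY}$ does yield ergodicity of $P_{XY}$ and of $P_{Y}$ themselves, which is what the theorem asserts.
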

\begin{proof}
{\bf Asymptotic mean stationarity.} We first prove that $P_Y$ is asymptotically mean stationary. To show this, it suffices to show that
\begin{equation}\label{amsexistencef}
\lim_{k\to \infty}P(Y_{k+1}^{k+n}\in A) \mbox{ exists for any $n$ and any Borel set $A \in \mathbb{R}^n$}.
\end{equation}

We will only show (\ref{amsexistencef}) for the case when $n=1$, since the proof for a generic $n$ is rather similar. To this end, consider $|P(Y_{k+1}\in A)-P(Y_{k}\in A)|$. Given $X_{1}^{k+1}=x_0^k,X_0=\tilde{x}_0, Y_0=\tilde{y}_0$, $Y_{k+1}$ is the output of~(\ref{sm-2}) at time $k+1$ starting with
$$Y_1=x_{1}+A_1 \tilde{x}_0+B_{1}(\tilde{y}_0-E_1)+W_1+U_1.$$
Note that
\begin{align*}
P(Y_{k+1}\in A)&=\sum_{\tilde{x}_0,x_{0}^{k}}p_{X_{1}^{k+1}}(x_0^k) p_{X_0|X_{1}^{k+1}}(\tilde{x}_0|x_0^k)\int {f}_{Y_0|X_0}(\tilde{y}|\tilde{x}_0)\, p_{Y_{k+1}|X_{1}^{k+1},X_0,Y_0}( A|x_0^k,\tilde{x}_0, \tilde{y}){d}\tilde{y} .
\end{align*}
Similarly,
\begin{align*}P(Y_{k}\in A)&=\sum_{x_{0}^{k}} p_{X_{0}^{k}}(x_0^k) p_{Y_{k}|X_{0}^{k}}(A|x_0^k)\\
&=\sum_{\tilde{x}_0,x_{0}^{k}}p_{X_{1}^{k+1}}(x_0^k)p_{X_0|X_{1}^{k+1}}(\tilde{x}_0|x_0^k)\int {f}_{Y_0|X_0}(\tilde{y}|\tilde{x}_0)\,p_{{Y}_{k+1}|X_{1}^{k+1}}(A|x_0^k) {d}\tilde{y},
\end{align*}
where $\{Y_n\}$ satisfies~(\ref{sm-2}) with the initial condition $Y_1=X_1+W_1+U_1$.

So, we have
{\small\begin{eqnarray}\label{im10}
&&\hspace{-1.5cm} |P(Y_{k+1}\in A)-P(Y_{k}\in A)|\notag\\
&\le& \sum_{\tilde{x}_0,x_{0}^{k}}p_{X_{1}^{k+1}}(x_0^k)p_{X_0|X_{1}^{k+1}}(\tilde{x}_0|x_0^k)\int {f}_{Y_0|X_0}(\tilde{y}|\tilde{x}_0) |p_{Y_{k+1}|X_{1}^{k+1},X_0,Y_0}(A|x_0^k,\tilde{x}_0, \tilde{y})-p_{Y_{k}|X_{0}^{k}}(A|x_0^k)| d\tilde{y}\nonumber\\
&\le& \sum_{\tilde{x}_0,x_{0}^{k}}p_{X_{1}^{k+1}}(x_0^k)p_{X_0|X_{1}^{k+1}}(\tilde{x}_0|x_0^k)\int {f}_{Y_0|X_0}(\tilde{y}|\tilde{x}_0)\int |{f}_{Y_{k+1}|X_{1}^{k+1},X_0,Y_0}(y|x_0^k,\tilde{x}_0,\tilde{y})-f_{Y_{k+1}|X_{0}^{k}}(y|x_0^k)|{d}y d\tilde{y} \nonumber\\
&\stackrel{(a)}{\leq}&\sum_{\tilde{x}_0,x_{0}^{k}}p_{X_{1}^{k+1}}(x_0^k)p_{X_0|X_{1}^{k+1}}(\tilde{x}_0|x_0^k)\int {f}_{Y_0|X_0}(\tilde{y}|\tilde{x}_0)( \sigma_A^2\tilde{x}_{0}^{2}+2\sigma_B^2(\tilde{y}_{0}^{2}+\sigma_E^2))\sigma_B^{2k} d\tilde{y}\notag\\
&=&( \sigma_A^2\EE[X_{0}^{2}]+2\sigma_B^2(\EE[{Y}_{0}^{2}]+\sigma_E^2))\sigma_B^{2k}\notag\\
&\le& ( \sigma_A^2M_0+2\sigma_B^2(M_3+\sigma_E^2))\sigma_B^{2k},
\end{eqnarray}}
where $(a)$ follows from Statement $c)$ in Lemma~\ref{non-uniform-indecomposability}.

So, the sequence $P(Y_k\in A)$ converges exponentially, which justifies~(\ref{amsexistencef}) for $n=1$. A similar argument can be applied to show that $P_{XY}(\cdot)$ is also asymptotically mean stationary.

\noindent {\bf Ergodicity.} As the ergodicity of $P_{Y}$ follows from that of $P_{XY}$, we only prove the ergodicity of $P_{XY}$. To show the ergodicity, from~\cite{grayprobability}, it suffices to establish that
\begin{eqnarray}\label{ergodicity}
&&\hspace{-1.5cm} \lim_{n\to\infty}\frac{1}{n+1}\sum_{k=0}^{n}P(X_0^{m_1}=x_0^{m_1}, X_{k+1}^{k+m_2}=\hat{x}_{1}^{m_2},Y_{0}^{m_1}\in D,Y_{k+1}^{k+m_2}\in D_2)\nonumber\\
&=&P(X_0^{m_1}=x_0^{m_1}, Y_{0}^{m_1}\in D)\bar{P}(X_{1}^{m_2}=\hat{x}_{1}^{m_2},Y_{1}^{m_2}\in D_2),
\end{eqnarray}
for any $m_1, m_2$, $x_0^{m_1}$ and $\hat{x}_{1}^{m_2}$, any Borel sets $D\subset \mathbb{R}^{m_1+1}$ and $\hat{D}\subset\mathbb{R}^{m_2}$. In the following, we only prove~(\ref{ergodicity}) for $m_1=0$ and $m_2=1$, the proof for general $m_1$ and $m_2$ being similar. Let $\eps$ be an arbitrary positive number. Then we have, for any $\hat{k}$ with $2\sigma_{B}^{2\hat{k}}M_3 \leq \eps$ and sufficiently large $k$,
\begin{eqnarray}
&&\hspace{-1.2cm} P(X_0=x, X_{k+1}=\hat{x},Y_{0}\in D,Y_{k+1}\in \hat{D})\notag\\
&=&\sum_{x_1^k}p_{X_0^{k+1}}(x,x_1^k,\hat{x})\int _{Y_0\in D}f_{Y_0|X_0}(y_0|x)dy_0\int_{y_{k+1}\in \hat{D}}f_{Y_{k+1}|X_0^{k+1},Y_0}(y_{k+1}|x,x_1^k,\hat{x},y_0)dy_{k+1} \notag\\
&\stackrel{(a)}{\ge} &\sum_{x_1^k}p_{X_0^{k+1}}(x,x_1^k,\hat{x})\int _{Y_0\in D}f_{Y_0|X_0}(y_0|x)dy_0\int_{y_{k+1}\in \hat{D}}f_{Y_{k+1}|X_{k-\hat{k}}^{k+1}}(y_{k+1}|x_{k-\hat{k}}^k,\hat{x})dy_{k+1}-2\sigma_{B}^{2\hat{k}}M_3\notag\\
&=&\sum_{\tilde{x}_{1}^{\hat{k}+1}}p_{X_0,X_{k-\hat{k}}^k,X_{k+1}}(x,\tilde{x}_{1}^{\hat{k}+1}, \hat{x})\int _{Y_0\in D}f_{Y_0|X_0}(y_0|x)dy_0\int_{y_{k+1}\in \hat{D}}f_{Y_{k+1}|X_{k-\hat{k}}^{k+1}}(y_{k+1}|\tilde{x}_{1}^{\hat{k}+1},\hat{x})dy_{k+1}\notag\\
&&-2\sigma_{B}^{2\hat{k}}M_3\notag\\
&=&{\small\sum_{\tilde{x}_{1}^{\hat{k}+1}}p_{X_0,X_{k-\hat{k}}^k,X_{k+1}}(x,\tilde{x}_{1}^{\hat{k}+1}, \hat{x})p_{Y_0|X_0}(D|x)P(Y_{k+1} \in \hat{D}|X_{k-\hat{k}}^{k+1}=\tilde{x}_{1}^{\hat{k}+1}\hat{x})}-2\sigma_{B}^{2\hat{k}}M_3\notag\\
&\stackrel{(b)}{\ge}&\sum_{\tilde{x}_{1}^{\hat{k}+1}}p_{X_0,X_{k-\hat{k}}^k,X_{k+1}}(x,\tilde{x}_{1}^{\hat{k}+1}, \hat{x})p_{Y_0|X_0}( D|x)\bar{P}(Y_{k+1}\in \hat{D}|X_{k-\hat{k}}^{k+1}=\tilde{x}_{1}^{\hat{k}+1}\hat{x})-\eps-2\sigma_{B}^{2\hat{k}}M_3\notag\\
&\geq&\sum_{\tilde{x}_{1}^{\hat{k}+1}}p_{X_0,X_{k-\hat{k}}^k,X_{k+1}}(x,\tilde{x}_{1}^{\hat{k}+1}, \hat{x})p_{Y_0|X_0}( D|x)\bar{P}(Y_{k+1}\in \hat{D}|X_{k-\hat{k}}^{k+1}=\tilde{x}_{1}^{\hat{k}+1}\hat{x})-2\eps,\notag
\end{eqnarray}
where $(a)$ follows from Statements $c)$ and $d)$ in Lemma~\ref{non-uniform-indecomposability} and $(b)$ follows from the fact that for sufficiently large $k$,
$$
\left|\bar{P}(Y_{k+1}\in \hat{D}|X_{k-\hat{k}}^{k+1}=\tilde{x}_{1}^{\hat{k}+1}\hat{x})-\bar{P}(Y_{k+1}\in \hat{D}|X_{k-\hat{k}}^{k+1}=\tilde{x}_{1}^{\hat{k}+1}\hat{x})\right|\le \eps.
$$
Then it follows from the ergodicity of $\{X_n\}$ that

\begin{eqnarray}\label{ergodicity2}
&&\hspace{-1.2cm} \lim_{n\to\infty}\frac{1}{n+1}\sum_{k=0}^{n}P(X_0=x, X_{k+1}=\hat{x},Y_{0}\in D,Y_{k+1}\in \hat{D})\nonumber\\
&\ge&\lim_{n\to\infty}\frac{1}{n+1}\sum_{k=0}^{n}\sum_{\tilde{x}_{1}^{\hat{k}+1}}p_{X_0,X_{k-\hat{k}}^{k+1}}(x,\tilde{x}_{1}^{\hat{k}+1},\hat{x})p_{Y_0|X_0}( D|x)\bar{P}(Y_{k+1}\in \hat{D}|X_{k-\hat{k}}^{k+1}=\tilde{x}_{1}^{\hat{k}+1}\hat{x})-2\eps\notag\\
&=&\sum_{\tilde{x}_{1}^{\hat{k}+1}}P(X_0=x)p_{X_{k-\hat{k}}^{k+1}}(\tilde{x}_{1}^{\hat{k}+1}, \hat{x})p_{Y_0|X_0}( D|x)\bar{P}(Y_{k+1}\in \hat{D}|X_{k-\hat{k}}^{k+1}=\tilde{x}_{1}^{\hat{k}+1}\hat{x})-2\eps,\notag\\
&=&P(X_0=x_0,Y_{0}\in D)\bar{P}(X_{k+1}=\hat{x}_{1},Y_{k+1}\in \hat{D})-2\eps.
\end{eqnarray}
Through a parallel argument, we can show that
\begin{eqnarray}\label{ergodicity3}
&&\hspace{-1.5cm} \lim_{n\to\infty}\frac{1}{n+1}\sum_{k=0}^{n}P(X_0=x_0, X_{k+1}=\hat{x}_{1},Y_{0}\in D,Y_{k+1}\in \hat{D})\nonumber\\
&\le &P(X_0=x_0,Y_{0}\in D)\bar{P}(X_{k+1}=\hat{x}_{1},Y_{k+1}\in \hat{D})+2\eps.
\end{eqnarray}
Then the desired result follows from~(\ref{ergodicity2}) and~(\ref{ergodicity3}).
\end{proof}

Using Corollary~\ref{uniformintegrable}, we can prove the following result, which strengthens~(\ref{amsexistencef}) and whose proof can be found in Appendix~\ref{Appendix-B}.

\begin{lemma}\label{amsexistence}
For any fixed $n$,
\begin{equation}\label{im31}
\lim_{k\to \infty}f_{Y_{k}^{k+n}}(\cdot)=\bar{f}_{Y_0^n}(\cdot),
\end{equation}
and furthermore
$$\EE_{\bar{P}}[Y_i^2]=\lim_{n\to \infty}\EE_P[Y_i^2]<\infty.$$
\end{lemma}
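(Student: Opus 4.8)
The plan is to upgrade the set-wise convergence of the laws of $Y_k^{k+n}$ --- already obtained in the proof of Theorem~\ref{amsproperty} --- to $L^1$ and pointwise a.e.\ convergence of the corresponding densities, by showing that consecutive densities differ by a geometrically summable amount. First I would record that $Y_k^{k+n}$ has a density $f_{Y_k^{k+n}}$ on $\mathbb{R}^{n+1}$: conditioned on the inputs and on $Y_{i-1}$, the recursion (\ref{sm-2}) writes $Y_i$ as the sum of a quantity independent of $W_i$ and $W_i\sim N(0,1)$, so $f_{Y_i\mid X_{i-1}^i,Y_{i-1}}$ is a bounded density, and $f_{Y_k^{k+n}}$ is the associated mixture of products of such kernels. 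Next, recall from the argument around (\ref{im10}) that $P(Y_k^{k+n}\in\cdot)$ converges set-wise as $k\to\infty$; since set-wise convergence forces the Ces\`aro averages to the same limit, while those averages converge to $\bar{P}_Y(Y_0^n\in\cdot)$ by the very definition of the stationary mean, the set-wise limit of $P(Y_k^{k+n}\in\cdot)$ must be $\bar{P}_Y(Y_0^n\in\cdot)$.

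The key step is the estimate
$$
\int_{\mathbb{R}^{n+1}}\bigl|f_{Y_{k+1}^{k+n+1}}(y)-f_{Y_k^{k+n}}(y)\bigr|\,dy\;\le\;C_n\,\sigma_B^{2k}
$$
for a constant $C_n$ depending only on $n$ and the channel parameters. I would prove it by the device used for (\ref{im10}): realize $Y_{k+1}^{k+n+1}$ as the block of outputs at times $k+1,\dots,k+n+1$ generated from the state produced at time $0$ by $(X_0,Y_0)$, compare with $Y_k^{k+n}$ generated from $Y_1=X_1+W_1+U_1$, condition on the common inputs so that the comparison reduces to the discrepancy between the two ``initial'' conditional output laws, and then invoke Lemma~\ref{non-uniform-indecomposability} --- here the $y^2$-weighted statements b) and d), together with the uniform second-moment bound $M_3$ of Corollary~\ref{uniformintegrable}, are what control the unbounded output alphabet. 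Granting this estimate, $\sum_k\int|f_{Y_{k+1}^{k+n+1}}-f_{Y_k^{k+n}}|<\infty$, so $\sum_k|f_{Y_{k+1}^{k+n+1}}(y)-f_{Y_k^{k+n}}(y)|$ is finite for Lebesgue-a.e.\ $y$; hence $f_{Y_k^{k+n}}$ converges a.e., and by Scheff\'e also in $L^1$, to a probability density $g$. Since $L^1$ convergence of densities implies total-variation convergence of the laws, $g$ is the density of the set-wise limit identified above, i.e.\ $g=\bar{f}_{Y_0^n}$, which is (\ref{im31}).

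For the moment identity, apply the case $n=0$: $f_{Y_k}\to\bar{f}_{Y_0}$ in $L^1$, hence $Y_k$ converges in distribution to the $\bar{P}$-marginal. By Lemma~\ref{alphabound}, $\sup_k\EE_P[|Y_k|^{\beta}]\le M_2$ with $\beta>2$, so $\{Y_k^2\}$ is uniformly integrable; convergence in distribution together with uniform integrability gives $\EE_P[Y_k^2]\to\EE_{\bar P}[Y_0^2]$, and Fatou together with Corollary~\ref{uniformintegrable} gives $\EE_{\bar P}[Y_0^2]\le M_3<\infty$. Since $\bar{P}_Y$ is stationary, $\EE_{\bar P}[Y_i^2]=\EE_{\bar P}[Y_0^2]$ for every $i$, as claimed. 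The main obstacle is the geometric $L^1$ bound on consecutive block densities: the ensuing soft-analysis wrap-up is routine, but making the block bookkeeping precise and carrying the unbounded $y^2$ weights through Lemma~\ref{non-uniform-indecomposability} needs care --- it is the block analogue of (\ref{im10}), with the second-moment control playing the role that uniform indecomposability would play for a finite-state channel.
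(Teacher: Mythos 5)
Your proof is correct, but it takes a genuinely different route from the paper's for establishing (\ref{im31}). The paper works from an explicit representation of $f_{Y_n}(\cdot)$ as an expectation of Gaussian densities whose (random) mean and variance involve products $\prod B_j$ and sums $\sum \hat W_i \prod B_j$; it then re-indexes using stationarity of $\{X_n\},\{U_n\},\{B_n\}$ to a two-sided form, shows these random parameters converge a.s.\ (via a series-convergence theorem and Borel--Cantelli), and concludes pointwise convergence of $f_{Y_n}$ by bounded convergence, identifying the limit with $\bar f_{Y_0}$ by integrating and using $f_{Y_n}\le 1$. You instead derive a geometric $L^1$ bound on consecutive shifted block densities and deduce a.e.\ and $L^1$ convergence by summability and Scheff\'e, then pin down the limit via the Ces\`aro definition of $\bar P$. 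Your key estimate is in fact clean: once the common inputs $X_0^{k+n}$ and $X_1^{k+n+1}$ are matched by stationarity, the two conditional block densities share identical transition kernels beyond the first coordinate, which integrate to one, so the block $L^1$ distance collapses to the single-coordinate estimate already implicit in (\ref{im10}) --- in particular the constant $C_n$ does not depend on $n$, and you only need the unweighted statements a) and c)/d) of Lemma~\ref{non-uniform-indecomposability} (not the $y^2$-weighted statement b), which you over-cite here; the $M_3$ bound controls the integral over the initial output). The paper's approach buys an explicit closed-form expression for $\bar f_{Y_0^n}$ in terms of the two-sided stationary extension, which is conceptually informative; your approach is shorter and reuses the indecomposability machinery with no extra a.s.-convergence results. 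The moment identity is handled by the same means in both proofs (uniform integrability of $\{Y_n^2\}$ from Corollary~\ref{uniformintegrable} together with weak convergence).
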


Using Theorem~\ref{gsbm}, we can prove the following lemma, which will be used to prove the asymptotic equipartition property for the output $\{Y_n\}$ of the channel~(\ref{sm-2}).
\begin{lemma}\label{existenceentropy}
There exists some constant $a$ such that
$$
\lim_{n\to \infty}\frac{1}{n+1}\log f(Y_0^n)=a, \;\; P-a.s.
$$
\end{lemma}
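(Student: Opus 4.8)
The plan is to obtain Lemma~\ref{existenceentropy} as a consequence of Theorem~\ref{gsbm} applied to the output process $\{Y_n\}$, whose asymptotic mean stationarity has already been established in Theorem~\ref{amsproperty}. The scheme of Theorem~\ref{gsbm} decomposes into two tasks: first, verify the hypothesis that for each $n$ there is a choice $k=k(n)$ making $I_{P_Y}(Y_1^n;Y_{k+n+1}^{\infty}|Y_{n+1}^{n+k})$ finite; second, prove that the normalized log-density converges $\bar P_Y$-almost surely to a shift-invariant limit $Z$, from which the conclusion transfers back to $P_Y$ and one notes $Z$ must in fact be an almost-sure constant $a$ by ergodicity of $\bar P_Y$ (again from Theorem~\ref{amsproperty}).

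For the finiteness hypothesis, I would bound the conditional mutual information by a conditional entropy difference and use the indecomposability estimates. Concretely, since conditioning cannot increase mutual information and the channel has the Markov-like structure that $Y_1^n$ and $Y_{k+n+1}^{\infty}$ are nearly conditionally independent given $Y_{n+1}^{n+k}$ (with the deviation controlled by $\sigma_B^{2k}$ times a power term, as in Lemma~\ref{non-uniform-indecomposability}), the quantity $I_{P_Y}(Y_1^n;Y_{k+n+1}^{\infty}|Y_{n+1}^{n+k})$ is dominated by something like $H(Y_1^n)$, which is finite by Corollary~\ref{entropybd}; more carefully one truncates at a finite horizon, applies Corollary~\ref{entropybd} and the output-power bound of Corollary~\ref{uniformintegrable}, and lets the horizon grow, using the exponential decay in $k$ to pass to the limit. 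The point is only to certify finiteness, not optimality, so rough bounds suffice.

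The substantive step is the $\bar P_Y$-a.s. convergence of $\tfrac{1}{n+1}\log\bar f_{Y_0^n}(Y_0^n)$ to a shift-invariant $Z$. Here I would invoke the Shannon--McMillan--Breiman-type theorem for stationary processes with finite entropy rate: under $\bar P_Y$ the output process is stationary and ergodic, its one-dimensional marginals have finite differential entropy (Corollary~\ref{entropybd} together with Lemma~\ref{amsexistence}, which identifies $\bar f_{Y_0^n}$ as the limit of $f_{Y_k^{k+n}}$ and bounds $\EE_{\bar P}[Y_i^2]$), and the chain-rule increments $\log\bar f_{Y_n|Y_0^{n-1}}$ form an $L^1$-convergent sequence by the indecomposability/asymptotic-independence estimates, so the ergodic theorem (Theorem~\ref{ergodicthm}) gives the Cesàro convergence of the increments; combined with a standard sandwiching argument this yields a.s. convergence of $\tfrac{1}{n+1}\log\bar f(Y_0^n)$ to $\EE_{\bar P}[-\log\bar f_{Y_0|Y_{-\infty}^{-1}}]$, a constant by ergodicity. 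I expect this SMB-for-continuous-alphabet step to be the main obstacle, because the usual proofs assume finite entropy rate and a genuinely stationary process, and one has to check integrability of the relevant log-densities (the bounds \eqref{newdensitybd} and \eqref{densitybd} of Proposition~\ref{pr}, transferred to $\bar P_Y$ via Lemma~\ref{amsexistence}, are what make the dominated-convergence arguments go through).

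Finally, with $Z=a$ established $\bar P_Y$-a.s., Theorem~\ref{gsbm} directly transfers the statement: $\tfrac{1}{n+1}\log f(Y_0^n)\to a$, $P_Y$-a.s., which is exactly the assertion of Lemma~\ref{existenceentropy}. One small bookkeeping remark: the indexing in Theorem~\ref{gsbm} is stated with $Y_1^n$ whereas the lemma uses $Y_0^n$; this is an immaterial shift, and I would just note that the per-symbol normalization makes the two equivalent.
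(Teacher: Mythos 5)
Your overall strategy is the same as the paper's: apply Theorem~\ref{gsbm}, verifying (i) finiteness of the conditional mutual information $I_{P_Y}(Y_0^n;Y_{n+k+1}^{\infty}\mid Y_{n+1}^{n+k})$ and (ii) $\bar P$-a.s.\ convergence of $\tfrac{1}{n+1}\log\bar f(Y_0^n)$ via the Barron/SMB theorem for the stationary ergodic process under $\bar P$, then transfer back to $P$. Your treatment of (ii) is essentially identical to the paper's (verify stationarity and ergodicity from Theorem~\ref{amsproperty}, finiteness of $|H_{\bar P}(Y_n|Y_0^{n-1})|$ via the upper bound from Corollary~\ref{entropybd} and Lemma~\ref{amsexistence}, and a positivity bound since $\bar f\le 1$).

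However, your verification of the finiteness hypothesis (i) has a genuine gap. You assert that $I_{P_Y}(Y_0^n;Y_{n+k+1}^{\infty}\mid Y_{n+1}^{n+k})$ is ``dominated by something like $H(Y_0^n)$,'' but for continuous alphabets $I(A;B\mid C)=H(A\mid C)-H(A\mid B,C)$ and this need not be bounded by $H(A)$: it is $+\infty$ whenever the \emph{differential} conditional entropy $H(A\mid B,C)$ drops to $-\infty$, which the power bounds and indecomposability estimates of Lemma~\ref{non-uniform-indecomposability} do not by themselves preclude. The upper bound $H(Y_0^n\mid Y_{n+1}^{n+k})\le H(Y_0^n)<\infty$ is fine, but you need separately that $H(Y_0^n\mid Y_{n+1}^{\infty})>-\infty$, and the only route the paper's machinery provides is to condition \emph{additionally on the inputs}: since $Y_0^n$ is conditionally independent of $(X_{n+2}^{\infty},Y_{n+2}^{\infty})$ given $(X_0^{n+1},Y_{n+1})$,
\begin{align*}
H(Y_0^n\mid Y_{n+1}^{\infty}) &\ge H(Y_0^n\mid X_0^{\infty},Y_{n+1}^{\infty})=H(Y_0^n\mid X_0^{n+1},Y_{n+1})\\
&=H(Y_0^{n+1}\mid X_0^{n+1})-H(Y_{n+1}\mid X_0^{n+1})>-\infty,
\end{align*}
the last step using the lower bound $H(Y_m^n\mid X_m^n)\ge\frac{n-m+1}{2}\log 2\pi e$ built into the proof of Corollary~\ref{entropybd} (because the channel always injects a unit-variance Gaussian $W_i$). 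Your proposed truncate-and-pass-to-the-limit scheme never brings the inputs $X$ into the conditioning, so it has no mechanism to produce this lower bound; filling in that step is what makes the argument go through.
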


\begin{proof}
In order to invoke Theorem~\ref{gsbm}, we need to prove that for any $n$, there exists $k(n)$ such that
\begin{equation}\label{finiteness}I(Y_0^n;Y_{n+k(n)+1}^{\infty}|Y_{n+1}^{n+k(n)})< \infty.\end{equation}
and
\begin{equation}\label{amssbm}
\lim_{n\to \infty}\frac{1}{n+1}\log \bar{f}(Y_0^n)=a, \;\; \bar{P}-a.s.
\end{equation}

\noindent {\bf Proof of~(\ref{finiteness}).} To show~(\ref{finiteness}), it suffices to show that
$$H(Y_0^n|Y_{n+1}^{n+k(n)})<\infty\ \mbox{and}\ H(Y_0^n|Y_{n+1}^{\infty})>-\infty.$$
Using the fact that conditioning reduces entropy, we have
$$H(Y_0^n|Y_{n+1}^{n+k(n)})\le H(Y_0^n)\stackrel{(a)}{\le}  \frac{n+1}{2}\log 2\pi e M_3,$$
where $(a)$ follows from Corollary~\ref{entropybd}. Similarly,
\begin{align*}
H(Y_0^n)\ge H(Y_0^n|Y_{n+1}^{\infty})&\ge H(Y_0^n|X_0^{\infty}, Y_{n+1}^{\infty})\\
&\stackrel{(a)}{=}H(Y_0^n|X_0^{n+1}, Y_{n+1})\\
&=H(Y_0^{n+1}|X_0^{n+1})-H(Y_{n+1}|X_{0}^{n+1})\\
&\stackrel{(b)}{>}-\infty,
\end{align*}
where $(a)$ follows from the fact that $Y_0^n$ is independent of $(X_{n+2}^{\infty},Y_{n+2}^{\infty})$ given $(X_0^{n+1},Y_{n+1})$ and $(b)$ follows from Corollary~(\ref{entropybd}).

\noindent {\bf Proof of~(\ref{amssbm}).}
Let
$$
H_{\bar{P}}(Y_0^n) \triangleq \EE_{\bar{P}}[-\log \bar{f}(Y_0^n)].
$$
To establish (\ref{amssbm}), we will apply the generalized Shannon-McMillan-Breiman theorem~(Theorem $1$ in \cite{barron}), for which we need to verify that $\{Y_i\}$ under the probability measure $\bar{P}$ is stationary and ergodic and $\abs{H_{\bar{P}}(Y_{n}|Y_0^{n-1})}<\infty$.

From Lemma~\ref{amsproperty} it follows that $\{Y_i\}$ under the probability measure $\bar{P}$ is stationary and ergodic.  From Lemma~\ref{amsexistence} it follows that $\EE_{\bar{P}}[Y_i^2]\le M_3$.
Then
\begin{equation}\label{n+1}
H_{\bar{P}}(Y_0^n)\le \sum_{i=0}^{n}H_{\bar{P}}(Y_i)\stackrel{(a)}{\le} \frac{n+1}{2}\log 2\pi eM_3<\infty,
\end{equation}
where $(a)$ follows from the the fact that Gaussian distribution maximizes entropy for a given variance.

Since $f(Y_{k}^{k+n})\le 1$, by (\ref{im31}), we have
\begin{equation}\label{positive} H_{\bar{P}}(Y_0^{n})=\EE_{\bar{P}}[-\log \bar{f}(Y_{0}^{n})]\ge 0.\end{equation}
Combining~(\ref{n+1}) and~(\ref{positive}), we deduce
$$
\abs{H_{\bar{P}}(Y_{n}|Y_0^{n-1})}\le H_{\bar{P}}(Y_0^{n})+H_{\bar{P}}(Y_0^{n-1})<\infty,
$$
as desired.
\end{proof}

We are now ready to prove the asymptotic equipartition property for $\{Y_n\}$ and $\{X_n, Y_n\}$.
\begin{theorem} \label{aep}
The following two limits exist
$$
H(Y) \triangleq \lim_{n\to \infty}\frac{1}{n+1}H(Y_0^n), \quad H(X, Y) \triangleq \lim_{n\to \infty}\frac{1}{n+1}H(X_0^n;Y_0^n),
$$
and therefore,
$$
I(X;Y)=\lim_{n\to \infty}\frac{1}{n+1}I(X_0^n;Y_0^n)
$$
also exists. Moreover,
$$
\lim_{n\to \infty}-\frac{1}{n+1}\log f_{Y_0^n}(Y_0^n)=H(Y), \;\; P-a.s.
$$
and
$$
\lim_{n\to \infty}-\frac{1}{n+1}\log f_{X_0^n, Y_0^n}(X_0^n, Y_0^n)=H(X,Y), \;\; P-a.s.
$$
\end{theorem}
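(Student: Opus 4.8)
The plan is to obtain the two almost-sure limits first and then upgrade them to convergence of the normalized entropies, the interchange of limit and expectation being the delicate point. For $\{Y_n\}$, Lemma~\ref{existenceentropy} already furnishes a constant $a$ with $-\frac{1}{n+1}\log f(Y_0^n)\to a$, $P$-a.s., so it only remains to identify $a$ with $\lim_n\frac{1}{n+1}H(Y_0^n)=\lim_n\EE_P\bigl[-\frac{1}{n+1}\log f(Y_0^n)\bigr]$, i.e.\ to move the limit inside the expectation, which I would do by establishing uniform integrability of $\{-\frac{1}{n+1}\log f(Y_0^n)\}_n$. Conditionally on $(X_{i-1}^i,Y_{i-1})$ the variable $Y_i$ has variance at least $\mathrm{Var}(W_i)=1$, so every one-dimensional conditional density is at most $(2\pi)^{-1/2}$; by mixing and the chain rule this forces $f_{Y_0^n}(y_0^n)\le(2\pi)^{-(n+1)/2}$, hence $-\frac{1}{n+1}\log f(Y_0^n)\ge\frac{1}{2}\log 2\pi>0$. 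On the other hand $(\ref{densitybd})$ gives $-\frac{1}{n+1}\log f(Y_0^n)\le M_6+\frac{M_7}{n+1}\sum_{i=0}^{n}Y_i^2$, and since $\{Y_i^2\}_i$ is uniformly integrable (Corollary~\ref{uniformintegrable}) so are the averages $\frac{1}{n+1}\sum_{i=0}^{n}Y_i^2$. Being squeezed between a constant and a uniformly integrable sequence, $\{-\frac{1}{n+1}\log f(Y_0^n)\}_n$ is uniformly integrable, so almost-sure convergence becomes $L^1$ convergence; this gives $H(Y):=a=\lim_n\frac{1}{n+1}H(Y_0^n)$ together with the stated a.s.\ identity for $\{Y_n\}$.

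For $\{X_n,Y_n\}$ I would exploit the time-invariant Markov structure of the channel rather than re-run the Barron machinery. For every $i\ge 1$, $Y_i$ given $(X_{i-1},X_i,Y_{i-1})$ is distributed as $X_i+AX_{i-1}+B(Y_{i-1}-E)+W+U$ with fresh independent noises, so there is a single transition density $\phi(\cdot\,;x_{i-1},x_i,y_{i-1})$ with $f_{Y_i\mid X_0^i,Y_0^{i-1}}=f_{Y_i\mid X_{i-1}^i,Y_{i-1}}=\phi$, and, writing $\psi:=-\log\phi$, the chain rule yields
\begin{align*}
-\frac{1}{n+1}\log f_{X_0^n,Y_0^n}(X_0^n,Y_0^n)={}&-\frac{1}{n+1}\log p_{X_0^n}(X_0^n)-\frac{1}{n+1}\log f_{Y_0\mid X_0}(Y_0\mid X_0)\\
&+\frac{1}{n+1}\sum_{i=1}^{n}\psi(X_{i-1},X_i,Y_{i-1},Y_i).
\end{align*}
The first term converges $P$-a.s.\ and in $L^1$ to the entropy rate $H(X)$ of $\{X_n\}$ by the Shannon--McMillan--Breiman theorem for finite-alphabet stationary ergodic sources; the second term is $(n+1)^{-1}$ times a nonnegative integrable random variable and so tends to $0$ a.s.\ and in $L^1$. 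For the third term, $\psi$ is a fixed measurable function of two consecutive coordinates, so I would apply the ergodic theorem for asymptotically mean stationary processes (Theorem~\ref{ergodicthm}) to $\{(X_n,Y_n)\}$, which is AMS and ergodic by Theorem~\ref{amsproperty}. Since $\phi\le(2\pi)^{-1/2}$ we have $\psi\ge\frac{1}{2}\log 2\pi$, and, because the kernel $\phi$ is inherited by the stationary mean $\bar P$ by time-invariance, $\EE_{\bar P}[\psi(X_0,X_1,Y_0,Y_1)]$ equals the $\bar P$-average of the differential entropy of $\phi(\cdot\,;X_0,X_1,Y_0)$, which is at most $\frac{1}{2}\EE_{\bar P}\bigl[\log\bigl(2\pi e(M_8+M_9Y_0^2)\bigr)\bigr]$ — finite by the variance estimate behind $(\ref{centropy1})$, Jensen's inequality, and $\EE_{\bar P}[Y_0^2]\le M_3$ from Lemma~\ref{amsexistence}; hence $\psi\in L^1(\bar P)$, and by the ergodicity of $\bar P$ together with Theorem~\ref{bothergodic} the average converges $P$-a.s.\ to the constant $c:=\EE_{\bar P}[\psi(X_0,X_1,Y_0,Y_1)]$. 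Summing the three pieces gives $-\frac{1}{n+1}\log f_{X_0^n,Y_0^n}(X_0^n,Y_0^n)\to H(X)+c$, $P$-a.s. To turn this into $\frac{1}{n+1}H(X_0^n,Y_0^n)\to H(X)+c$ I would again pass through $L^1$: a crude lower bound on $\phi$, obtained by restricting the defining expectation to the positive-probability event $\{|B|\le 1,\,|E|\le 1\}$, gives $\psi(x_{i-1},x_i,y_{i-1},y_i)\le C+y_i^2+2y_{i-1}^2$ for a constant $C$, so the third term is squeezed between constants and an average of the uniformly integrable family $\{Y_i^2\}$ and converges in $L^1$; combined with the $L^1$ convergence of the other two terms this yields $H(X,Y):=H(X)+c=\lim_n\frac{1}{n+1}H(X_0^n,Y_0^n)$ and the stated a.s.\ identity.

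Finally, $\frac{1}{n+1}I(X_0^n;Y_0^n)=\frac{1}{n+1}H(X_0^n)+\frac{1}{n+1}H(Y_0^n)-\frac{1}{n+1}H(X_0^n,Y_0^n)$ converges to $H(X)+H(Y)-H(X,Y)$, so $I(X;Y)$ exists. I expect the main obstacle to be precisely this interchange of limit and expectation: identifying the almost-sure limits with the entropy rates requires two-sided control of the (conditional) densities — the lower bound coming from the variance floor $\mathrm{Var}(W_i)=1$ and the matching upper bound, equivalently the lower bound on $\phi$, from a routine Gaussian tail estimate — together with the second-moment bounds of Corollary~\ref{uniformintegrable} and Lemma~\ref{amsexistence}, which are what make the averages of $\{Y_i^2\}$ simultaneously almost-surely convergent and uniformly integrable.
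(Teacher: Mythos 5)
Your argument for the $\{Y_n\}$ part coincides in structure with the paper's (Lemma~\ref{existenceentropy} for the a.s.\ limit, then the two-sided bound~(\ref{densitybd}) to identify the constant with the entropy rate); you invoke uniform integrability where the paper invokes a generalized dominated convergence theorem, but these are interchangeable here since the dominating sequence $M_6+\frac{M_7}{n+1}\sum Y_i^2$ is simultaneously a.s.\ convergent with convergent means and uniformly integrable (averages of a family bounded in $L^{\beta/2}$ with $\beta/2>1$). For the joint AEP your route is genuinely different. The paper's ``completely parallel'' proof would re-run Lemma~\ref{existenceentropy}, i.e.\ Barron's generalized Shannon--McMillan--Breiman theorem, for the mixed discrete-continuous process $\{X_n,Y_n\}$ with a product reference measure, whereas you replace this with the chain-rule decomposition
$$
-\tfrac{1}{n+1}\log f_{X_0^n,Y_0^n}= -\tfrac{1}{n+1}\log p_{X_0^n}-\tfrac{1}{n+1}\log f_{Y_0\mid X_0}+\tfrac{1}{n+1}\sum_{i=1}^{n}\psi(X_{i-1},X_i,Y_{i-1},Y_i),
$$
handling the first term by the classical finite-alphabet SMB theorem, the second trivially, and the third by the AMS ergodic theorem applied to the fixed sliding-block functional $\psi=-\log\phi$. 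This is more elementary than a second pass through Barron's theorem, at the cost of having to verify $\psi\in L^1(\bar P)$ and the two-sided pointwise bounds on the transition kernel, which you do correctly via the variance floor $\operatorname{Var}(W_i)=1$ and a lower bound of the same type as in Appendix~\ref{Appendix-B}. One small caveat worth making explicit: Theorems~\ref{ergodicthm} and~\ref{bothergodic} are stated in the paper only for the coordinate functions $Y_i$; to handle $\psi(X_{i-1},X_i,Y_{i-1},Y_i)$ you need either the general AMS ergodic theorem for $L^1(\bar P)$ functionals from Gray's book, or the observation that the process $\{\psi(X_n,X_{n+1},Y_n,Y_{n+1})\}$ inherits AMS and ergodicity from $\{(X_n,Y_n)\}$ as a sliding-block coding; likewise, that the transition kernel $\phi$ is inherited by $\bar P$ deserves a sentence, though it does follow from the time-invariance of~(\ref{sm-2}) for $n\ge 1$.
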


\begin{proof}

We only show the existence of $H(Y)$, the proof of that of $H(X, Y)$ being completely parallel. Apparently, the existence of $H(Y)$ and $H(X, Y)$ immediately implies that of
$I(X; Y)$.

By Lemma~\ref{amsexistence}, we have $\EE_{\bar{P}}[Y_n^2]<\infty$. Then it follows from the Birkhoff's ergodic theorem~\cite{durrettprobability} that
$$
\lim_{n\to \infty}\frac{1}{n+1}\sum_{i=0}^{n}Y_i^2 \mbox{ exists}, \;\; \bar{P}-a.s.
$$
and
$$
\EE_{\bar{P}}\left[\lim_{n\to \infty}\frac{1}{n+1}\sum_{i=0}^{n}Y_i^2\right]=\EE_{\bar{P}}[Y_1^2].
$$
From Theorems~\ref{ergodicthm} and~\ref{bothergodic}, it follows that
$$
\lim_{n\to \infty}\frac{1}{n+1}\sum_{i=0}^{n}Y_i^2 \mbox{ exists}, \;\; P-a.s.
$$
and
{$$\EE_P \left[\lim_{n\to \infty}\frac{1}{n+1}\sum_{i=0}^{n}Y_i^2\right]=\EE_{\bar{P}}[Y_1^2].$$}
And from Lemma~\ref{amsexistence}, it follows that
{
$$
\EE_P \left[\lim_{n\to \infty}\frac{1}{n+1}\sum_{i=0}^{n}Y_i^2\right]=\lim_{n\to\infty}\frac{1}{n+1}\sum_{i=0}^{n}\EE_P [Y_i^2]=\EE_{\bar{P}}[Y_1^2].
$$}
As shown in Lemma~\ref{existenceentropy}, we have
{
$$
\lim_{n\to \infty}\frac{1}{n+1}\log f_{Y_0^n}(Y_0^n)=a, \;\; P-a.s.
$$}
It then follows from (\ref{densitybd}) and the general dominated convergence theorem~\cite{roydenrealanalysis} that
\begin{align*}
\EE_P\left[\lim_{n\to\infty}\frac{1}{n+1}\log f(Y_0^n)\right]&=\lim_{n\to\infty}\EE_P\left[\frac{1}{n+1}\log f(Y_0^n)\right]\\
&=\lim_{n\to\infty}\frac{H(Y_0^n)}{n+1}\\
&=H(Y),
\end{align*}
which implies that $a=H(Y)$ and thereby yields the desired convergence.
\end{proof}

\section{Main Results} \label{stationarycapacity}
The {\em stationary capacity} $C_S$ and the $m$-th order {\em Markov capacity} $C_{Markov}^{(m)}$ of our channel are defined as
$$
C_{S} = \sup_{X}I(X;Y)\quad\mbox{and}\quad C_{Markov}^{(m)} = \sup_{X} I(X;Y),
$$
where the first supremum is taken over all the stationary and ergodic processes and the second one is over all the $m$-th order stationary and ergodic Markov chains. Now we are ready to state our main theorem, which relates various defined capacities above.
\begin{theorem}\label{main}
$$
C=C_{Shannon}=C_S=\lim_{m \to \infty} C_{Markov}^{(m)}.
$$
\end{theorem}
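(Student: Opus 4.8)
The plan is to close the cycle of inequalities $C_S\le C_{Shannon}$, $C_{Shannon}\le C_S$, $\lim_{m\to\infty}C_{Markov}^{(m)}=C_S$, and $C=C_{Shannon}$, the bulk of the work being the last three. The inequality $C_S\le C_{Shannon}$ is immediate: for a stationary ergodic input, Theorem~\ref{aep} guarantees that $I(X;Y)=\lim_{n}\frac{1}{n+1}I(X_0^n;Y_0^n)$ exists, and since $\frac{1}{n+1}I(X_0^n;Y_0^n)\le C_{n+1}$ for every $n$, letting $n\to\infty$ and then taking the supremum over $X$ yields the claim.

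For $C_{Shannon}\le C_S$ I would use the stationarization of ``blocked'' inputs adapted from Feinstein~\cite{Feinstein}. Fix $\eps>0$, choose $n$ so large that $C_{n+1}\ge C_{Shannon}-\eps$, and let $p^{\ast}$ be a near-maximizer of $\frac{1}{n+1}I_p(X_0^n;Y_0^n)$. Let $\{\tilde X_i\}$ concatenate independent blocks of length $n+1$, each with law $p^{\ast}$, and let $\{X_i\}$ be its uniform random-phase stationarization over $\{0,\dots,n\}$; then $\{X_i\}$ is stationary, and it is ergodic because $\{\tilde X_i\}$ is mixing under the block shift $T^{\,n+1}$, so its average over the $n+1$ phases is ergodic under $T$. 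Over a horizon $N=L(n+1)$ one applies the chain rule block by block, exactly as in the super-additivity estimate used above to establish the existence of $C_{Shannon}$, and bounds the cross-block loss by Proposition~\ref{pr}(a), replacing $x^2$ by $M_0$ and $\EE[Y^2]$ by $M_3$ (Corollary~\ref{uniformintegrable}); this yields $\frac{1}{N}I(X_0^{N-1};Y_0^{N-1})\ge C_{n+1}-\delta(n)-o_L(1)$, where the per-block correction $\delta(n)=\frac{2(k+1)\log M}{n+1}+(\sigma_A^2M_0+2\sigma_B^2(M_3+\sigma_E^2))\sigma_B^{2k}\log M$ tends to $0$ if $k=k(n)\to\infty$ slowly, say $k=\lfloor\sqrt n\rfloor$. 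Letting $L\to\infty$ gives $I(X;Y)\ge C_{n+1}-\delta(n)$, and then $n\to\infty$ gives $C_S\ge C_{Shannon}$.

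For the Markov statement, note that a stationary ergodic $m$-th order Markov chain is also an $(m+1)$-st order one, so $C_{Markov}^{(m)}$ is nondecreasing and, being $\le C_S$, converges to a limit $\le C_S$. For the reverse bound, fix $\eps>0$, pick a stationary ergodic $X$ with $I(X;Y)>C_S-\eps$ (possible since $C_S=C_{Shannon}$), and replace $X$ by a stationary ergodic $m$-th order Markov chain $X^{(m)}$ whose $(m+1)$-dimensional marginal agrees with, or converges to, that of $X$ (passing to an ergodic component if necessary). Writing $I(X^{(m)};Y)=H(Y)-H(Y|X)$ with the rates as in Theorem~\ref{aep}, I would show both rates are continuous functionals of the input's finite-dimensional marginals with geometrically decaying look-back error: for $H(Y|X)$ one uses that $f_{Y_i\mid X,Y_0^{i-1}}=f_{Y_i\mid X_{i-1}^{i},Y_{i-1}}$, so $\frac{1}{N}H(Y_0^N|X_0^N)$ is a Ces\`aro mean of $\EE[g(X_{i-1},Y_{i-1})]$ for an explicit, polynomially bounded $g$, which converges by Lemma~\ref{amsexistence}; for $H(Y)$ one replaces $H(Y_i|Y_0^{i-1})$ by $H(Y_i|Y_{i-k}^{i-1})$ at the cost $I(Y_i;Y_0^{i-k-1}|Y_{i-k}^{i-1})$, controlled by Lemma~\ref{non-uniform-indecomposability} and Proposition~\ref{pr}. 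Hence $I(X^{(m)};Y)\to I(X;Y)$, so $\lim_m C_{Markov}^{(m)}\ge C_S-\eps$ for every $\eps$.

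Finally, $C=C_{Shannon}$ splits into a converse and an achievability part. The converse is the standard Fano argument: any rate-$R$ blocklength-$n$ code with vanishing error satisfies $R\le C_n+\eps_n$ for every $n$, whence $R\le\lim_n C_n=C_{Shannon}$. For achievability, take a stationary ergodic input with $I(X;Y)\ge C_{Shannon}-\eps$ (using $C_S=C_{Shannon}$) and run Feinstein's lemma, whose error analysis reduces precisely to the joint asymptotic equipartition property of $\{(X_n,Y_n)\}$ and of $\{Y_n\}$ supplied by Theorem~\ref{aep}, giving codes of rate $I(X;Y)-o(1)$, so $C\ge C_{Shannon}-\eps$. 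The step I expect to be the main obstacle is the stationarization for $C_{Shannon}\le C_S$: one must splice the channel memory across block boundaries through the \emph{non-uniform} indecomposability of Lemma~\ref{non-uniform-indecomposability}, carefully tracking the $x^2$ and $y^2$ weights and integrating them against the bounded second moments of Corollary~\ref{uniformintegrable}, while simultaneously securing ergodicity of the stationarized process; the entropy-rate continuity in the Markov step is conceptually similar but more routine once these estimates are in hand.
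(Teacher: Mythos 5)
Most of your plan coincides with the paper's: the chain $C_S\le C\le C_{Shannon}\le C_S$, the Fano converse, the AEP-based achievability via Theorem~\ref{aep}, and the Feinstein random-phase stationarization controlled by Proposition~\ref{pr}(a) together with Corollary~\ref{uniformintegrable} are exactly how the paper argues (and, contrary to your closing worry, that stationarization step goes through essentially as you sketch it). The genuine gap is in the step $C_S=\lim_m C_{Markov}^{(m)}$. You propose the decomposition $I(X^{(m)};Y)=H(Y)-H(Y|X)$ and claim that the output entropy rate $H(Y)$ is a continuous functional of the input's finite-dimensional marginals ``with geometrically decaying look-back error,'' the error term $I(Y_i;Y_0^{i-k-1}\mid Y_{i-k}^{i-1})$ being ``controlled by Lemma~\ref{non-uniform-indecomposability} and Proposition~\ref{pr}.'' Those results do not give this: they bound (weighted) total-variation distances between output densities conditioned on channel \emph{inputs} and a past output value, i.e.\ on the channel state, whereas $I(Y_i;Y_0^{i-k-1}\mid Y_{i-k}^{i-1})$ conditions only on an output segment. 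Remote outputs can inform about $Y_i$ through the \emph{input process's own memory} (they reveal remote inputs, hence, via input dependence, the inputs $X_{i-k}^{i}$ driving $Y_i$), and this channel-independent leakage does not decay at a rate governed by $\sigma_B^{2k}$. Worse, the bound would have to hold for the outputs driven by $X^{(m)}$ with an error that stays small as $m$ grows, but $X^{(m)}$ is an arbitrary stationary ergodic $m$-th order chain built from the marginals of $X$, with no quantitative mixing control; since you need a \emph{lower} bound on $H(Y^{(m)})$, the nonnegative gap $H(Y_i^{(m)}\mid Y_{i-k}^{i-1})-H(Y_i^{(m)}\mid Y_0^{i-1})$ is exactly the quantity you cannot dispense with, and nothing in the paper's toolkit bounds it. (Your treatment of $H(Y|X)$, by contrast, is fine, since its per-symbol terms condition on $X_{i-1}^i$ and $Y_{i-1}$ and Lemma~\ref{non-uniform-indecomposability}(a),(b) applies.)

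The paper avoids this obstruction by using the other decomposition, $I(\tilde X;\tilde Y)=H(\tilde X)-H(\tilde X\mid\tilde Y)$. There $H(\tilde X)=H(X_m\mid X_0^{m-1})\ge H(X)$ is exact, and only an \emph{upper} bound on $H(\tilde X\mid\tilde Y)$ is needed: one conditions block by block (conditioning reduces entropy), compares the mutual information of each length-$(m+1)$ block of the \emph{same} process $(\tilde X,\tilde Y)$ with that of the initial block via Proposition~\ref{pr}(b), and observes that, by causality of the channel and the marginal matching $P(\tilde X_0^m)=P(X_0^m)$, the initial block $(\tilde X_0^m,\tilde Y_0^m)$ has exactly the law of $(X_0^m,Y_0^m)$, so $\frac{1}{m+1}H(\tilde X_0^m\mid\tilde Y_0^m)=\frac{1}{m+1}H(X_0^m\mid Y_0^m)\to H(X|Y)$. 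If you rewrite your Markov step along these lines, no comparison of output entropy rates across different input processes is ever required, and the rest of your proposal stands.
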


Our theorem confirms that for the channel~(\ref{sm-2}), the operational capacity can be approached by the Markov capacity, which justifies the effectiveness of the Markov approximation scheme in terms of computing the operational capacity.
\begin{proof} To prove the theorem, it suffices to prove that
$$
C_S\le C\le C_{Shannon}\le C_S=\lim_{m \to \infty} C_{Markov}^{(m)}.
$$

{\bf\underline{Proof of $C_S\le C$.}} This follows from a usual ``achievability part'' proof: For any rate $R<C_S$ and $\eps>0$, choose a stationary ergodic input process ${X_n}$ such that $R<I(X;Y)-\eps$. As shown in Theorem~\ref{aep}, $\{X_n,Y_n\}$ satisfies the AEP, we can complete the proof of the achievability by going through the usual random coding argument.

{\bf \underline{Proof of $C\le C_{Shannon}$.}} This follows from a usual ``converse part'' proof.

{\bf \underline{Proof of $C_{Shannon}\le C_S$.}} The proof is similar to the one in~\cite{Feinstein}, so we just outline the main steps.

{\bf Step 0.} First of all, for any $\eps>0$, choose $l$ such that
\begin{equation} \label{choose-l}
\sigma_B^{2l}\le \frac{\eps}{2(\sigma_A^2M_0^{2}+2\sigma_B^2(M_3+\sigma_E^2))\log M},
\end{equation}
and then $N$ and $X_0^N \sim p(x_0^{N})$ such that
\begin{equation} \label{choose-N}
\frac{l}{N}\le\frac{\eps}{4\log M}, \quad \frac{1}{N+1}I(X_0^N;Y_0^N)\ge C_{Shannon}-\eps.
\end{equation}

{\bf Step 1.} Now, let $\{\hat{X}_n\}$ be the ``independent block'' process defined as follows:
\begin{itemize}
\item[(i)] $(\hat{X}_{k(N+1)},\cdots,\hat{X}_{(k+1)(N+1)-1})$ are i.i.d. for $k=0,1,\cdots$;
\item[(ii)] $(\hat{X}_{0},\cdots,\hat{X}_{N})$ has the same distribution as $(X_0,X_1,\cdots, X_{N})$.
\end{itemize}
And let $\hat{Y}$ be the output obtained by passing $\hat{X}$ through the channel~(\ref{sm-2}). Let $\nu$ be independent of $\{\hat{X}_n\}$ and uniformly distributed over $\{0,1\cdots,N\}$, and let $\bar{X}_n=\hat{X}_{\nu+n}$. It can be verified that $\{\bar{X}_n\}$ is a stationary and ergodic process.

{\bf Step 2.} Let $\{\bar{Y}_n\}$ be the output obtained by passing the stationary process $\{\bar{X}_n\}$ through the channel~(\ref{sm-2}). Letting
$$
I(\bar{X};\bar{Y})=\lim_{n\to \infty}\frac{1}{n+1}I(\bar{X}_0^n;\bar{Y}_0^{n}),
$$
we will show that
\begin{equation}
\label{sdp}I(\bar{X};\bar{Y})-\frac{1}{N+1}I(X_0^{N};Y_0^{N})\ge -\eps,\end{equation}
which, by the arbitrariness of $\eps$, will imply the claim.

Note that it can be verified that
\begin{align*}
{p}_{\bar{X}_0^n}(x_0^n)f_{\bar{Y}_0^n|\bar{X}_0^n}(y_0^n|x_0^n)&=\sum_{k=0}^{N}\frac{1}{N+1}{P}(\bar{X}_0^n=x_0^n|\nu=k)f_{\bar{Y}_0^n|\bar{X}_0^n}(y_0^n|x_0^n)\\
 &=\sum_{k=0}^{N}\frac{1}{N+1}{P}(\hat{X}_k^{k+n}=x_0^n|\nu=k)f_{\bar{Y}_0^n|\bar{X}_0^n}(y_0^n|x_0^n)\\
 &=\sum_{k=0}^{N}\frac{1}{N+1}{P}(\hat{X}_{(k),0}^{n}=x_0^n|\nu=k)f_{\bar{Y}_0^n|\bar{X}_0^n}(y_0^n|x_0^n),
 \end{align*}
where $\hat{X}_{(k),n} \triangleq \hat{X}_{k+n}$. For $k=0, 2, \cdots, N$, let $\hat{Y}_{(k)}=\{\hat{Y}_{(k),n}\}$ denote the output process obtained by passing the process $\hat{X}_{(k)}=\{\hat{X}_{(k),n}\}$ through the channel (\ref{sm-2}). Then it follows from Lemma 2 in~\cite{Feinstein} that
$$
 I(\bar{X};\bar{Y})=\frac{1}{N+1}\sum_{j=0}^{N}I(\hat{X}_{(k)};\hat{Y}_{(k)}),
$$
where
$$
I(\hat{X}_{(k)};\hat{Y}_{(k)})=\lim_{n\to\infty}\frac{1}{n+1}I(\hat{X}_{(k),0}^{n};\hat{Y}_{(k),0}^{n}).
$$

To prove~(\ref{sdp}), it suffices to establish that for any $k$,
\begin{align}\label{im15}
I(\hat{X}_{(k)};\hat{Y}_{(k)})&\ge \frac{1}{N+1}I(X_0^N;Y_0^N)-\eps.
\end{align}
The proof of~({\ref{im15}}) for a general $k$ are similar, so in the following we only show it holds true for $k=0$. Here, we note that when $k=0$,
$$
I(\hat{X}_{(k)};\hat{Y}_{(k)})=I(\hat{X};\hat{Y})=\lim_{l \to \infty} \frac{1}{l (N+1)} I(\hat{X}_0^{l(N+1)-1};\hat{Y}_0^{l(N+1)-1}).
$$
Using the chain rule for mutual information, we have
\begin{align*}
I(\hat{X}_0^{l(N+1)-1};\hat{Y}_0^{l(N+1)-1})\ge\sum_{i=1}^{l} I(\hat{X}_{(i-1)(N+1)}^{i(N+1)-1};\hat{Y}_{(i-1)(N+1)}^{i(N+1)-1}|\hat{X}_0^{(i-1)(N+1)-1},\hat{Y}_0^{(i-1)(N+1)-1}),
\end{align*}
which means that, to prove~(\ref{im15}), it suffices to show that
\begin{align*}
\frac{1}{N+1}I(\hat{X}_{(i-1)(N+1)}^{i(N+1)-1};\hat{Y}_{(i-1)(N+1)}^{i(N+1)-1}|\hat{X}_0^{(i-1)(N+1)-1},\hat{Y}_0^{(i-1)(N+1)-1})&\ge\frac{1}{N+1}I(\hat{X}_0^{N+1};\hat{Y}_0^{N+1})-\eps.
\end{align*}
Without loss of generality, we prove this holds true for $i=2$.
Note that
\begin{align}
I(\hat{X}_{N+1}^{2N+1};\hat{Y}_{N+1}^{2N+1}|\hat{X}_0^{N},\hat{Y}_0^{N})
&= \sum_{x_0^{N}} p_{\hat{X}_{0}^N}(x_0^{N})\int f_{\hat{Y}_0^N|\hat{X}_0^N}(y_0^{N}|x_0^{N})I(\hat{X}_{N+1}^{2N+1};\hat{Y}_{N+1}^{2N+1}|x_0^{N},y_0^{N}) {d}y_0^{N} \notag\\
&= \sum_{x_0^{N}}p_{\hat{X}_{0}^N}(x_0^{N})\int f_{\hat{Y}_0^N|\hat{X}_0^n}(y_0^{N}|x_0^{N}) I(\hat{X}_{N+1}^{2N+1};\hat{Y}_{N+1}^{2N+1}|x_{N},y_{N}) {d}y_0^{N}\notag\\
&= \sum_{x_{N}}p_{\hat{X}_{0}^N}(x_0^{N})\int  f_{\hat{Y}_{N}|\hat{X}_{N}}(y_N|x_N)I(\hat{X}_{N+1}^{2N+1};\hat{Y}_{N+1}^{2N+1}|x_{N},y_{N}) {d}y_{N}. \notag
\end{align}
It follows from Statement $a)$ in Proposition~\ref{pr} that
\begin{align*}
&\hspace{-1cm} |I(\hat{X}_{N+1}^{2N+1};\hat{Y}_{N+1}^{2N+1}|x_{N},y_{N})-I(\hat{X}_0^{N+1};\hat{Y}_0^{N+1})|\\
&\le 2(l+1)\log M+(N-l)\log M\times( \sigma_A^2x_{N}^{2}+2\sigma_B^2({y}_{N}^{2}+\sigma_E^2))\sigma_B^{2l},
\end{align*}
which implies that
\begin{align}
I(\hat{X}_{N+1}^{2N+1};\hat{Y}_{N+1}^{2N+1}|\hat{X}_0^{N},\hat{Y}_0^{N})
{}&\ge \sum_{x_{N}}p_{X_N}(x_{N})\int f_{Y_N|X_N}(y_{N}|x_{N})\notag\\
{}&\hspace{-4cm}\times \left\{I(\hat{X}_0^{N+1};\hat{Y}_0^{N+1})-2(l+1)\log M-\left[(N-l)\log M \times (\sigma_A^2x_{N}^{2}+2\sigma_B^2({y}_{N}^{2}+\sigma_E^2))\sigma_B^{2l}\right]\right\} {d}y_{N} \notag\\
{}&\hspace{-4cm} \stackrel{(a)}{\ge} I(\hat{X}_0^{N+1};\hat{Y}_0^{N+1})-2(l+1)\log M-\left\{(N-l)\log M\right.\times\left.(\sigma_A^2M_0^{2}+2\sigma_B^2(M_3+\sigma_E^2))\sigma_B^{2l}\right\}, \notag
\end{align}
where $(a)$ follows from Corollary~\ref{uniformintegrable}. Now, with (\ref{choose-l}) and (\ref{choose-N}), we conclude that 
\begin{align*}
\frac{1}{N+1}I(\hat{X}_{N+1}^{2N+1};\hat{Y}_{N+1}^{2N+1}|\hat{X}_0^{N},\hat{Y}_0^{N})
&\ge \frac{1}{N+1}I(\hat{X}_0^{N+1};\hat{Y}_0^{N+1})-\eps\\
&=\frac{1}{N+1}I({X}_0^{N+1};{Y}_0^{N+1})-\eps,
\end{align*}
as desired.

\underline{\bf {Proof of $C_S=\lim_{m\to\infty}C_{Markov}^{(m)}$.}} To prove this, we only need to show that for any $\eps > 0$, one can find an $m$-th order stationary and ergodic Markov chain $\tilde{X}$ such that
$$
I(\tilde{X}; \tilde{Y}) \geq C_S -\eps,
$$
where $\tilde{Y}$ is the output process obtained when passing $\tilde{X}$ through the channel (\ref{sm-2}).

First of all, let $X$ be a stationary process such that
$$
I(X;Y)\ge C_S-\eps/3.
$$
Now, construct the $m$-th order stationary and ergodic Markov chain $\tilde{X}$ by setting
$$
P(\tilde{X}_0^m=x_0^m)=P(X_0^m=x_0^m),
$$
and let $\tilde{Y}$ be the output processes obtained by passing $\tilde{X}$ through the channel~(\ref{sm-2}).

It follows from Statement $b)$ in Proposition~\ref{pr} that for any $m, i \geq 0$,
\begin{eqnarray*}
&&\hspace{-2cm} \frac{1}{m+1}I(\tilde{X}_{im+i}^{(i+1)m+1};\tilde{Y}_{im+i}^{(i+1)m+1})- \frac{1}{m+1}I(\tilde{X}_{0}^{m};\tilde{Y}_{0}^{m})\\
&\ge&-\frac{3(k+1)\log 2\pi e M_3}{n+1}-2M_{3}\pi e(M_8+3M_9)\sigma_{B}^{2k}\\
&{}&-\frac{1}{m+1}(M_4+M_5M_3)-\left(2M_3M_6+\frac{4M_1M_3M_7}{(1-\sigma_B)^2}+\frac{12M_3M_7}{(m+1)(1-\sigma_B^2)}\right)\sigma_{B}^{2k}.
\end{eqnarray*}
Choosing $m$ and $k$ sufficiently large, we have
$$
\frac{1}{m+1}I(\tilde{X}_{im+i}^{(i+1)m+1};\tilde{Y}_{im+i}^{(i+1)m+1})\ge\frac{1}{m+1}I(\tilde{X}_{0}^{m};\tilde{Y}_{0}^{m})-\frac{\eps}{3},
$$
which, together with the chain rule for entropy and the fact that $H(\tilde{X}_{im+i}^{(i+1)m+i})=H(\tilde{X}_{0}^{m})$, implies that
\begin{align}\label{uppcon}
H(\tilde{X}|\tilde{Y})&=\lim_{s\to \infty}\frac{1}{s(m+1)}H(\tilde{X}_{0}^{s(m+1)-1}|\tilde{Y}_{0}^{s(m+1)-1})\nonumber\\
&=\lim_{s\to \infty}\frac{1}{s(m+1)}\sum_{i=0}^{s-1}H(\tilde{X}_{im+i}^{(i+1)m+i}|\tilde{X}_{0}^{im+i-1},\tilde{Y}_{0}^{s(m+1)-1})\nonumber\\
&\le \lim_{s\to \infty}\frac{1}{s(m+1)}\sum_{i=0}^{s-1}H(\tilde{X}_{im+i}^{(i+1)m+i}|\tilde{Y}_{im+i}^{(i+1)m+i})\nonumber\\
&= \lim_{s\to \infty}\frac{1}{s(m+1)}\sum_{i=0}^{s-1}\left\{H(\tilde{X}_{im+i}^{(i+1)m+i})-I(\tilde{X}_{im+i}^{(i+1)m+i};\tilde{Y}_{im+i}^{(i+1)m+i})\right\}\nonumber\\
&\le \lim_{s\to \infty}\frac{1}{s(m+1)}\sum_{i=0}^{s-1}\left\{H(\tilde{X}_{0}^{m})-I(\tilde{X}_{0}^{m};\tilde{Y}_{0}^{m})+\eps\right\}\nonumber\\
&=\frac{1}{m+1}H(\tilde{X}_{0}^{m}|\tilde{Y}_0^m)+\frac{\eps}{m+1}\nonumber\\
&= \frac{1}{m+1}H(X_{0}^{m}|Y_0^m)+\frac{\eps}{m+1}.
\end{align}
Now, choosing $m$ sufficiently large such that
$$
\frac{1}{m+1}H(X_{0}^{m}|Y_0^m)\le H(X|Y)+\eps/3,
$$
and using (\ref{uppcon}) and the stationary property of $X$, we deduce that
\begin{align*}
I(\tilde{X};\tilde{Y})&=H(\tilde{X})-H(\tilde{X}|\tilde{Y})\\
&\ge H(\tilde{X})- \frac{1}{m+1}H(X_{0}^{m}|Y_0^m)\\
&= H(\tilde{X}_{m}|\tilde{X}_{0}^{m-1})- \frac{1}{m+1}H(X_{0}^{m}|Y_0^m)-\frac{\eps}{m+1}\\
&= H(X_{m}|{X}_{0}^{m-1})- \frac{1}{m+1}H(X_{0}^{m}|Y_0^m)-\frac{\eps}{m+1}\\
&\ge H(X)- \frac{1}{m+1}H(X_{0}^{m}|Y_0^m)-\frac{\eps}{m+1}\\
&\ge H(X)-H(X|Y)-\frac{\eps}{3}-\frac{\eps}{m+1}\\
&\ge I(X;Y)-2\eps/3\\
&\ge C_S-\eps,
\end{align*}
as desired.
\end{proof}

\section{Conclusion and Future Work}

In this paper, via an information-theoretic analysis, we prove that, for a recently proposed one dimensional causal flash memory channel~\cite{kavcic2014}, as the order tends to infinity, its Markov capacity converges to its operational capacity, which translates to the theoretical limit of memory cell storage efficiency.

The aforementioned result serves as a first step to the journey of investigating whether the ideas and techniques in the theory of finite-state channels can be instrumental to compute the capacity of flash memory channels. A natural follow-up question in the future is the concavity of the mutual information rate of flash memory channels with respect to the parameters of an input Markov process, which is a much desired property that will help ensure the convergence of the capacity computing algorithms in~\cite{vontobel, randomapproachhan}. Here, we note that the concavity of the mutual information rate has been established for special classes of finite-state channels~\cite{hm09b,lihan2013,lihan2014}.

Further investigations are needed to be conducted to see whether the ideas and techniques developed in this work can be applied/adapted to the two dimensional model in~\cite{kavcic2014}, a more realistic channel model for flash memories. Our preliminary investigations indicate that despite some technical issues such as anti-causality (which naturally arises in a two dimensional channel), the framework laid out in this work, coupled with a possible conversion from two dimensional models to one dimensional models via appropriate re-indexing, will likely encompass an effective approach to two dimensional flash memory channels.

\section*{Appendices} \appendix

\section{Proof of~(\ref{squaredistance})}\label{Appendix-A}
The proof follows from a similar argument in~\cite{madras2010}.
Without loss of generality, we assume $0<\sigma_1<\sigma_2$ and let $\phi(x;\sigma,\mu)=\frac{1}{\sqrt{2\pi \sigma_1^2}}e^{-\frac{(x-\mu)^2}{2\sigma{2}}}$. Then
\begin{eqnarray*}
&&\hspace{-1.5cm}\int_{-\infty}^{\infty}x^2 \left|\frac{1}{\sqrt{2\pi \sigma_1^2}}e^{-\frac{(x-\mu)^2}{2\sigma_1^{2}}}- \frac{1}{\sqrt{2\pi \sigma_2^2}}e^{-\frac{(x-\mu)^2}{2\sigma_2^{2}}} \right| \,{d} x \\
&=&\left\{\int_{\{x:\phi(x;\sigma_1,\mu)>\phi(x;\sigma_2,\mu) \}}+\int_{\{x:\phi(x;\sigma_1,\mu)<\phi(x;\sigma_2,\mu) \}}\right\}x^2 \left|\phi(x;\sigma_1,\mu)-\phi(x;\sigma_2,\mu) \right| \,{d} x \\
&=&2\int_{\{x:\phi(x;\sigma_1,\mu)>\phi(x;\sigma_2,\mu) \}} \left(\phi(x;\sigma_1,\mu)-\phi(x;\sigma_2,\mu) \right) \,{d} x +\int_{-\infty}^{\infty} x^2(\phi(x;\sigma_2,\mu)-\phi(x;\sigma_1,\mu))dx\\
&\le&\sigma_2^2-\sigma_1^2+2\frac{\sigma_2-\sigma_1}{\sigma_1\sigma_2}\int_{-\infty}^{\infty}  \frac{x^2}{\sqrt{2\pi}}e^{-\frac{(x-\mu)^2}{2\sigma_1^{2}}}dx\\
&\le& \sigma_2^2-\sigma_1^2+2\frac{\sigma_1^3(\sigma_2^2-\sigma_1^2)}{(\sigma_1+\sigma_2)\sigma_1\sigma_2}\\
&\le&3|\sigma_{1}^{2}-\sigma_{2}^{2}|.
\end{eqnarray*}

\section{Proofs of (\ref{newdensitybd}) and (\ref{densitybd})} \label{Appendix-B}

We first conduct some preparatory computations before the proofs.

Note that given $E_1^n=e_1^n$, $U_0^n=u_0^n$, $X_0^n=x_0^n$ and $Y_{i-1}=y_{i-1}$, $Y_i$ is a Gaussian random variable with density
$$
f(y_i|y_{i-1},x_{i-1}^i,e_i,u_i)=\frac{1}{\sqrt{2\pi(\sigma_A^2 x_{i-1}^2+\sigma_B^2(y_{i-1}-e_i)^2+1)}}e^{-\frac{(y_i-x_i-u_i)^2}{2(\sigma_A^2 x_{i-1}^2+\sigma_B^2(y_{i-1}-e_i)^2+1)}}.
$$
Clearly, $f(y_i|y_{i-1},x_{i-1}^i,e_i,u_i,y_{i-1})\le 1$ and for $i\ge 1$,
\begin{align}\label{lowerbound}
f_{Y_i|X_{i-1}^i,Y_{i-1}}f(y_i|y_{i-1},x_{i-1}^i)&=\int de_i du_i f_{U_i}(u_i)f_{E_i}(e_i) f(y_i|y_{i-1},x_{i-1}^i,e_i,u_i,y_{i-1})\notag\\
&= \int de_i du_i \frac{f_{U_i}(u_i)f_{E_i}(e_i)}{\sqrt{2\pi(\sigma_A^2 x_{i-1}^2+\sigma_B^2(y_{i-1}-e_i)^2+1)}}e^{-\frac{(y_i-x_i-u_i)^2}{2}}\nonumber\\
&\ge\int de_i du_i \frac{f_{U_i}(u_i)f_{E_i}(e_i)}{\sqrt{2\pi(\sigma_A^2 x_{i-1}^2+\sigma_B^2(y_{i-1}-e_i)^2+1)}}e^{-\frac{3(y_i^2+x_i^2+u_i^2)}{2}}\notag\\
&\ge\int de_i du_i \frac{f_{U_i}(u_i)f_{E_i}(e_i)}{\sqrt{2\pi(\sigma_A^2 M_0^2+2\sigma_B^2(y_{i-1}^2+e_i^2)+1)}}e^{-\frac{3(y_i^2+2M_0^2)}{2}}\notag\\
&\ge \int_{-1}^{1} f_{E_i}(e_i)de_i  \frac{1}{\sqrt{2\pi(\sigma_A^2 M_0^2+2\sigma_B^2(y_{i-1}^2+1)+1)}}e^{-\frac{3(y_i^2+2M_0^2)}{2}},\notag
\end{align}
where $M_0$ is as in (\ref{M_0}).

{\bf Proof of~(\ref{newdensitybd})} For any $\tilde{M} >0$, we have
\begin{align}
&\hspace{-1cm} f_{Y_{m+k-1}|X_{m+k-1}^{m+n}}(y|x_{k-1}^n) \notag\\
&=\sum_{\tilde{x}_{0}^{m+k-2}}\left\{p_{X_{0}^{m+k-2}|X_{m+k-1}^{m+n}}(\tilde{x}_{0}^{m+k-2}|x_{k-1}^n)\right. \notag\\
&\times \left.\int f_{Y_{m+k-2}|X_{0}^{m+k-2}}(\tilde{y}|\tilde{x}_{0}^{m+k-2})f_{Y_{m+k-1}|X_{m+k-2}^{m+k-1},Y_{m+k-2}}(y|\tilde{x}_{m+k-2},x_{k-1},\tilde{y})\right\} d\tilde{y}\notag\\
&\ge\sum_{\tilde{x}_{0}^{m+k-2}}\left\{p_{X_{0}^{m+k-2}|X_{m+k-1}^{m+n}}(\tilde{x}_{0}^{m+k-2}|x_{k-1}^n)\right.\notag\\
&\times \left.\int f_{Y_{m+k-2}|X_{0}^{m+k-2}}(\tilde{y}|\tilde{x}_{0}^{m+k-2})\frac{\int_{-1}^{1} f_{E_i}(e_i)de_i  }{\sqrt{2\pi(\sigma_A^2 M_0^2+2\sigma_B^2(\tilde{y}^2+1)+1)}}e^{-\frac{3(y^2+2M_0^2)}{2}}\right\} d\tilde{y} \notag\\
&\ge\sum_{\tilde{x}_{0}^{m+k-2}}\left\{p_{X_{0}^{m+k-2}|X_{m+k-1}^{m+n}}(\tilde{x}_{0}^{m+k-2}|x_{k-1}^n)\right.\notag\\
&\times \left.\int_{-\tilde{M}}^{\tilde{M}} f_{Y_{m+k-2}|X_{0}^{m+k-2}}(\tilde{y}|\tilde{x}_{0}^{m+k-2})\frac{\int_{-1}^{1} f_{E_i}(e_i)de_i  }{\sqrt{2\pi(\sigma_A^2 M_0^2+2\sigma_B^2(\tilde{y}^2+1)+1)}}e^{-\frac{3(y^2+2M_0^2)}{2}}\right\} d\tilde{y} \notag\\
&\ge\sum_{\tilde{x}_{0}^{m+k-2}}\left\{p_{X_{0}^{m+k-2}|X_{m+k-1}^{m+n}}(\tilde{x}_{0}^{M+k-2}|x_{k-1}^n)P(|Y_{m+k-2}|\le \tilde{M}|X_{0}^{m+k-2}=\tilde{x}_0^{m+k-2})\right.\notag\\
&\times \left. \frac{\int_{-1}^{1} f_{E_i}(e_i)de_i  }{\sqrt{2\pi(\sigma_A^2 M_0^2+2\sigma_B^2(\tilde{M}^2+1)+1)}}e^{-\frac{3(y^2+2M_0^2)}{2}}\right\}. \notag
 \end{align}
It then follows from Corollary~(\ref{uniformintegrable}) and the Markov inequality that
$$
P(|Y_{m+k-2}|\le \tilde{M}|X_{0}^{m+k-2}=\tilde{x}_0^{m+k-2})\ge 1-\frac{\EE[Y_{m+k-2}^2|X_{0}^{m+k-2}=\tilde{x}_0^{m+k-2}]}{\tilde{M}^2}\ge 1-\frac{M_3}{\tilde{M}^2}.
$$
If $\tilde{M}$ is chosen such that for all $\tilde{x}_0^{m+k-2}$
$$
P(|Y_{m+k-2}|\le \tilde{M}|X_{0}^{m+k-2}=\tilde{x}_0^{m+k-2}) \ge 1/2,
$$
we then have
$$
\log f_{Y_{m+k-1}|X_{m+k-1}^{m+n}}(y|x_{k-1}^n) \ge \log \frac{\int_{-1}^{1} f_{E_i}(e_i)de_i  }{2\sqrt{2\pi(\sigma_A^2 M_0^2+2\sigma_B^2(\tilde{M}^2+1)+1)}}e^{-3M_0^2}-3y^2.
$$
The desired result then follows by choosing
$$
M_4=\abs{\log \frac{\int_{-1}^{1} f_{E_i}(e_i)de_i  }{2\sqrt{2\pi(\sigma_A^2 M_0^2+2\sigma_B^2(\tilde{M}^2+1)+1)}}e^{-3M_0^2}}\quad\mbox{and}\quad M_5=3.
$$

{\bf Proof of~(\ref{densitybd}).}
Note that
\begin{align}
f(y_m^{m+n})&=\sum_{x_0^{m+n}}p(x_0^{m+n})\int f(y_{m-1}|x_0^{m-1})f(y_{m}^{m+n}|x_{0}^{m+n},y_{m-1}) dy_{m-1} \notag\\
&=\sum_{x_0^{m+n}}p(x_0^{m+n})\int f(y_{m-1}|x_0^{m-1}) \prod_{i=m}^{m+n}f(y_i|y_{i-1},x_{i-1}^i) dy_{m-1} \nonumber\\
&=\sum_{x_0^{m+n}}p(x_0^{m+n})\int f(y_{m-1}|x_0^{m-1}) \prod_{i=m}^{m+n}f(y_i|y_{i-1},x_{i-1}^i) dy_{m-1} \notag\\
&=\sum_{x_0^{m+n}}p(x_0^{m+n})\int f(y_{m-1}|x_0^{m-1}) \prod_{i=m}^{m+n}\int f(e_i)f(u_i)f(y_i|y_{i-1},x_{i-1}^i,u_i,e_i)dy_{m-1}.\notag
\end{align}
It then follows that $f(y_{m}^{m+n})\le 1$ and a similar argument as in the proof of (\ref{newdensitybd}) that for sufficiently large $\tilde{M}$,
\begin{align}
f(y_m^{m+n})&\ge \sum_{x_0^{m+n}}p(x_0^{m+n})\int  f(y_{m-1}|x_0^{m-1})\prod_{i=m}^{m+n}  \frac{\int_{-1}^{1} f_{E_i}(e_i)de_i }{\sqrt{2\pi(\sigma_A^2 M_0^2+2\sigma_B^2(y_{i-1}^2+1)+1)}}e^{-\frac{3(y_i^2+2M_0^2)}{2}} dy_{m-1} \nonumber\\
&\ge \left(\sum_{x_0^{m-1}}p(x_0^{m-1})\int f(y_{m-1}|x_0^{m-1})\frac{\int_{-1}^{1} f_{E_m}(e_m)de_m }{\sqrt{2\pi(\sigma_A^2 M_0^2+2\sigma_B^2(y_{m-1}^2+1)+1)}}e^{-\frac{3(y_m^2+2M_0^2)}{2}}\right) dy_{m-1} \notag\\
&\hspace{4mm}\times \left(\prod_{i=m+1}^{m+n}  \frac{\int_{-1}^{1} f_{E_i}(e_i)de_i }{\sqrt{2\pi(\sigma_A^2 M_0^2+2\sigma_B^2(y_{i-1}^2+1)+1)}}e^{-\frac{3(y_i^2+2M_0^2)}{2}}\right)\nonumber\\
&\ge \left(\sum_{x_0^{m-1}}p(x_0^{m-1})\int_{-\tilde{M}}^{\tilde{M}} f(y_{m-1}|x_0^{m-1})\frac{\int_{-1}^{1} f_{E_m}(e_m)de_m }{\sqrt{2\pi(\sigma_A^2 M_0^2+2\sigma_B^2(y_{m-1}^2+1)+1)}}e^{-\frac{3(y_m^2+2M_0^2)}{2}}\right) dy_{m-1} \notag\\
&\hspace{4mm}\times \left(\prod_{i=m+1}^{m+n}  \frac{\int_{-1}^{1} f_{E_i}(e_i)de_i }{\sqrt{2\pi(\sigma_A^2 M_0^2+2\sigma_B^2(y_{i-1}^2+1)+1)}}e^{-\frac{3(y_i^2+2M_0^2)}{2}}\right)\nonumber\\
&\ge \left(\frac{\int_{-1}^{1} f_{E_m}(e_m)de_m }{\sqrt{8\pi(\sigma_A^2 M_0^2+2\sigma_B^2(\tilde{M}^2+1)+1)}}e^{-\frac{3(y_m^2+2M_0^2)}{2}}\right)\notag\\
&\hspace{4mm}\times \left(\prod_{i=m+1}^{m+n}  \frac{\int_{-1}^{1} f_{E_i}(e_i)de_i }{\sqrt{2\pi(\sigma_A^2 M_0^2+2\sigma_B^2(y_{i-1}^2+1)+1)}}e^{-\frac{3(y_i^2+2M_0^2)}{2}}\right)\nonumber.
\end{align}

Now, we have
\begin{align}
0&\ge \log f(Y_m^{m+n})\notag\\
&\ge \log \frac{\int_{-1}^{1} f_{E_m}(e_m)de_m }{\sqrt{8\pi(\sigma_A^2 M_0^2+2\sigma_B^2(\tilde{M}^2+1)+1)}}e^{-\frac{3(y_m^2+2M_0^2)}{2}}\notag\\
&\hspace{4mm}+\log\left\{\prod_{i=m+1}^{m+n}  \frac{\int_{-1}^{1} f_{E_i}(e_i)de_i }{\sqrt{2\pi(\sigma_A^2 M_0^2+2\sigma_B^2(y_{i-1}^2+1)+1)}}e^{-\frac{3(y_i^2+2M_0^2)}{2}}\right\}\notag\\
&=-\sum_{i=m}^{m+n}\frac{3Y_i^2+6M_0^2}{2}+(n+1)\log\left(\frac{ \int_{-1}^{1} f(e_1)\,{d}e_1}{\sqrt{2\pi}}\right)-\frac{\log 2(\sigma_A^2 M_0^2+2\sigma_B^2(\tilde{M}^2+1)+1)}{2}\notag\\
&\hspace{4mm}-\frac{1}{2}\sum_{i=m+1}^{m+n}\log (1+\sigma_A^2M_0^2+2\sigma_B^2(Y_{i-1}^2+1))\notag\\
&\stackrel{(a)}{\ge} -\sum_{i=m}^{m+n}\frac{3Y_i^2+6M_0^2}{2}+(n+1)\log\left(\frac{ \int_{-1}^{1} f(e_1)\,{d}e_1}{\sqrt{2\pi}}\right)-\frac{\log 4(\sigma_A^2 M_0^2+2\sigma_B^2(\tilde{M}^2+1)+1)}{2}\notag\\
&\hspace{4mm}-\frac{1}{2}\sum_{i=m+1}^{m+n} (\sigma_A^2M_0^2+2\sigma_B^2(Y_{i-1}^2+1))\notag
\end{align}
where we have used the well-known inequality $\log(1+z)\le z$ for any $z > -1$ to derive $(a)$.

The desired (\ref{densitybd}) then follows by choosing
$$
M_6=\frac{(6+\sigma_A^2)M_0^2}{2}+\frac{2\sigma_{B}^{2}+\log 4(\sigma_A^2 M_0^2+2\sigma_B^2(\tilde{M}^2+1)+1)}{2}+\abs{\log\left(\frac{ \int_{-1}^{1} f(e_1)\,{d}e_1}{\sqrt{2\pi}}\right)}$$
and
$$
\quad M_7=\frac{3+2\sigma_{B}^2}{2}.
$$

\section{Proof of Lemma~\ref{amsexistence}} \label{Appendix-C}
For simplicity, we prove Lemma~\ref{amsexistence} for $n=1$, the proof for a general $n$ being similar.

Conditioned on $x_0^n, b_1^n$ and $u_0^n$, $Y_n$ is a Gaussian random variable with mean
$$\sum_{i=0}^{n}(x_i+u_i)\prod_{j=i+1}^{n}b_{j}$$
and variance
$$\prod_{j=1}^{n}b_{j}^2+\sum_{i=1}^{n}(x_{i-1}^2\sigma_A^2+b_i\sigma_E^2+1)\prod_{j=i+1}^{n}b_{j}^2\ge 1. $$
Let $\phi(y;\mu,\sigma^2)$ be the Gaussian density with mean $\mu$ and variance $\sigma^2$. Then the density of $Y_n$ is
$$f_{Y_n}(y)=\EE\left[\phi \left(y;\sum_{i=0}^{n}(X_i+U_i)\prod_{j=i+1}^{n}B_{j}^2,\prod_{j=1}^{n}B_j^2+\sum_{i=1}^{n}(X_{i-1}^2\sigma_A^2+B_i^2\sigma_E^2+1)\prod_{j=i+1}^{n}B_{j}^2\right)\right].$$
Since the processes $\{X_n\}$, $\{U_n\}$ and $\{B_n\}$ are all stationary, $f_{Y_n}(y)$ can be written as the following
{\small\begin{align*}
f_{Y_n}(y)=\EE\left[\phi \left(y;\sum_{i=-n}^{0}(X_i+U_i)\prod_{j=i+1}^{0}B_{j},\prod_{j=-n+1}^{0}B_j^2+\sum_{i=-n+1}^{0}(X_{i-1}^2\sigma_A^2+B_i^2\sigma_E^2+1)\prod_{j=i+1}^{0}B_{j}^2\right)\right].
\end{align*}}
Since
\begin{align*}
\sum_{i=-\infty}^{0}\EE\left[\left|(X_i+U_i)\prod_{j=i+1}^{0}B_{j}\right|\right]&\le 2M_0\sum_{i=-\infty}^{0}\prod_{j=i+1}^{0}(\EE B_{j}^2)^{\frac{1}{2}}\le 2M_0\sum_{i=-\infty}^{0}\sigma_B^{-i}<\infty,
\end{align*}
it follows from Theorem~3.1 in~\cite{rosalsky} that with probability $1$,
$$\sum_{i=-n}^{0}(X_i+U_i)\prod_{j=i+1}^{0}B_{j}$$ converges.

For any $\eps>0$,
\[\sum_{n=1}^{\infty}P\left(\prod_{j=-n+1}^{0}B_j^2>\eps\right)\le \sum_{n=1}^{\infty}\frac{\EE\left[\prod_{j=-n+1}^{0}B_j^2\right]}{\eps}= \sum_{n=1}^{\infty}\frac{\sigma_B^{2n}}{\eps}<\infty.\]
Then it follows from the Borel-Cantelli lemma that with probability $1$,
\begin{equation}\label{first} \prod_{j=-n+1}^{0}B_j^2\to 0.\end{equation}
Clearly, with probability $1$,
\begin{equation}\label{second}
\sum_{i=-n+1}^{0}(X_{i-1}^2\sigma_A^2+B_i\sigma_E^2+1)\prod_{j=i+1}^{0}B_{j}^2\to \sum_{i=-\infty}^{0}(X_{i-1}^2\sigma_A^2+B_i\sigma_E^2+1)\prod_{j=i+1}^{0}B_{j}^2.
\end{equation}
From~(\ref{first}) and~(\ref{second}), we have that with probability $1$,
\[
\prod_{j=-n+1}^{0}B_j^2+\sum_{i=-n+1}^{0}(X_{i-1}^2\sigma_A^2+B_i\sigma_E^2+1)\prod_{j=i+1}^{0}B_{j}^2\to \sum_{i=-\infty}^{0}(X_{i-1}^2\sigma_A^2+B_i\sigma_E^2+1)\prod_{j=i+1}^{0}B_{j}^2.
\]
Since
\begin{align*}
\EE\left[\sum_{i=-n+1}^{0}(X_{i-1}^2\sigma_A^2+B_i^2\sigma_E^2+1)\prod_{j=i+1}^{0}B_{j}^2\right]&\le \sum_{i=-n+1}^{-1}(M_0^2\sigma_A^2+\sigma_B^2\sigma_E^2+1)\sigma_B^{-2i}\\
&\le \sum_{i=-\infty}^{-1}(M_0^2\sigma_A^2+\sigma_B^2\sigma_E^2+1)\sigma_B^{-2i},
\end{align*}
it follows from Fatou's lemma~\cite{liptser} that
$$\EE\left[\sum_{i=-\infty}^{0}(X_{i-1}^2\sigma_A^2+B_i\sigma_E^2+1)\prod_{j=i+1}^{0}B_{j}^2\right]\le\sum_{i=-\infty}^{-1}(M_0^2\sigma_A^2+\sigma_B^2\sigma_E^2+1)\sigma_B^{2i},$$
which further implies that, with probability $1$,
$$\prod_{j=-\infty}^{0}B_j^2\sum_{i=-\infty}^{0}(X_{i-1}^2\sigma_A^2+B_i\sigma_E^2+1)\prod_{j=i+1}^{0}B_{j}^2<\infty.$$
It then follows from the bounded convergence theorem~\cite{liptser} that
\begin{align*}
f_{Y_n}(y)\to \EE\left[\phi \left(y;\sum_{i=-\infty}^{0}(X_i+U_i)\prod_{j=i+1}^{0}B_{j},\prod_{j=-\infty}^{0}B_j^2+\sum_{i=-\infty}^{0}(X_{i-1}^2\sigma_A^2+B_i^2\sigma_E^2+1)\prod_{j=i+1}^{0}B_{j}^2\right)\right].
\end{align*}
Let
\[
g(y)=\EE\left[\phi \left(y;\sum_{i=-\infty}^{0}(X_i+U_i)\prod_{j=i+1}^{0}B_{j},\prod_{j=-\infty}^{0}B_j^2+\sum_{i=-\infty}^{0}(X_{i-1}^2\sigma_A^2+B_i^2\sigma_E^2+1)\prod_{j=i+1}^{0}B_{j}^2\right)\right].
\]
Then for any Borel set $A\in\mathbb{R}$,
\[
\int_A \bar{f}_{Y_0}(y)\,{d}y\stackrel{(a)}{=}\bar{P}(Y_1\in A)=\lim_{n\to \infty}P(Y_n\in A)=\lim_{n\to \infty}\int_A f_{Y_n}(y)\,{d}y\stackrel{(b)}{=}\int_A\lim_{n\to \infty} f_{Y_n}(y)\,{d}y=\int_A g(y)\,{d}y,
\]
where $(a)$ follows from Theorem~\ref{amsproperty} and $(b)$ follows from $f_{Y_n}(y)\le 1$ and the bounded dominated convergence theorem~\cite{liptser}.
Therefore, $\bar{f}_{Y_0}(y)=g(y)=\lim_{n\to \infty}f_{Y_n}(y),$ which implies that $P(Y_n=\cdot)$ converges weakly to $\bar{P}(Y_0=\cdot)$. As shown in Corollary~\ref{uniformintegrable}, $\{Y_n^2\}$ under the probability measure $P$ is uniformly integrable. Then from Theorem 3.3 in~\cite{billingsley}, it follows that
$$
\EE_{\bar{P}}[Y_n^2]=\lim_{n\to \infty}\EE[Y_n^2]\le M_3<\infty.
$$

\end{document}